
\documentclass[a4paper,UKenglish]{lipics}
\usepackage{amsmath, amssymb}

\usepackage{todonotes}
 
\usepackage{enumerate}

\usepackage{graphicx}
\usepackage{mathtools}
\usepackage{amssymb}

\usepackage{amsthm}

\usepackage{cite}
\usepackage{hyperref}

\newcommand{\myparagraph}[1]{\smallskip\noindent{\textbf{\sffamily #1} \ }}

\usepackage{algorithm2e}

\newtheorem{proposition}{Proposition}

\newtheorem{krule}{Reduction Rule}

\newcommand{\inn}[1]{{#1}_{\textsf{int}}}
\newcommand{\lp}{\frac{1-\lambda}{2}}

\newcommand{\ex}{ex} 
\newcommand{\mak}{\inf_{AK}}

\newcommand{\name}[1]{\textsc{#1}}
\newcommand{\maxcut}{\name{Max-Cut}}

\newcommand{\mcatlb}{\name{Max-Cut ATLB}}
\newcommand{\qcol}{\name{Max $q$-Colorable Subgraph}} 
\newcommand{\maxdag}{\name{Oriented Max Acyclic Digraph}}
\newcommand{\lamex}{$\lambda$-extendible}
\newcommand{\stronglamex}{strongly $\lambda$-extendible}
\newcommand{\PandT}{Poljak and Turz{\'{i}}k}
\newcommand{\AbovePT}{\name{Above Poljak-Turz\'{\i}k ($\Pi$)}}

\newcommand{\APT}{APT($\Pi$)}

\newcommand{\FPT}{\textsf{FPT}}
\newcommand{\EEbound}{Edwards-Erd\H{o}s bound}

\newcommand{\cO}{\mathcal{O}}

\newenvironment{parnamedefn}[4]{
\par\addvspace{0.4\baselineskip}\fbox{%
\begin{minipage}[t]{0.9\linewidth}%
\begin{tabular}{p{18mm}p{115mm}}
    \multicolumn{2}{l}{\name{#1}} \\
        \textsl{Input:} & {#2} \\ 
        \textsl{Parameter:} & {#3} \\ 
        \textsl{Question:} & {#4} \\
   \end{tabular}
\end{minipage}}\par\addvspace{0.4\baselineskip}
}




\usepackage{comment}

\newcommand{\LE}{\Pi}  

\newcommand{\pisub}{\APT}

\newcommand{\ms}{\beta}  
\newcommand{\pt}{\gamma} 

\newcommand{\comp}{{\cal C}}  
\newcommand{\union}{\bigcup}  


\title{Polynomial Kernels for $\lambda$-extendible Properties Parameterized Above the Poljak-Turz\'{i}k Bound}
\titlerunning{Polynomial Kernels for $\lambda$-extendible Properties} 

\author[1]{Robert Crowston }
\author[1]{Mark Jones}
\author[1]{Gabriele Muciaccia}
\author[2]{Geevarghese Philip}
\author[3]{Ashutosh Rai}
\author[3,4]{Saket Saurabh}

\affil[1]{Royal Holloway, University of London, UK. 
  \texttt{\{robert,markj,G.Muciaccia\}@cs.rhul.ac.uk}}

\affil[2]{Max-Planck-Institut f\"{u}r Informatik (MPII), Germany. 
  \texttt{gphilip@mpi-inf.mpg.de}}
\affil[3]{Institute of Mathematical Sciences, India. 
  \texttt{\{ashutosh,saket\}@imsc.res.in}}
\affil[4]{University of Bergen, Norway.}

\authorrunning{Crowston, Jones, Muciaccia, Philip, Rai, Saurabh} 

\Copyright{Crowston et. al.}

\keywords{Kernelization, Lambda Extension, Above-Guarantee Parameterization, MaxCut}


%
%
%


\begin{document}

\maketitle

\begin{abstract}
  \PandT{} (\emph{Discrete Mathematics} 1986) introduced the
  notion of \lamex{} properties of graphs as a generalization of
  the property of being bipartite. They showed that for any
  $0<\lambda<1$ and \lamex{} property $\Pi$, any connected graph
  $G$ on $n$ vertices and $m$ edges contains a spanning subgraph
  $H\in\Pi$ with at least $\lambda{}m+\frac{1-\lambda}{2}(n-1)$
  edges. The property of being bipartite is \lamex{} for
  $\lambda=1/2$, and so the Poljak-Turz\'{i}k bound generalizes the
  well-known Edwards-Erd\H{o}s bound for \maxcut{}. Other examples
  of \lamex{} properties include: being an acyclic oriented graph,
  a balanced signed graph, or a $q$-colorable graph for some
  \(q\in\mathbb{N}\).
  
  Mnich et. al. (\emph{FSTTCS} 2012) defined the closely related
  notion of \emph{strong} $\lambda$-extendibility.  They showed
  that the problem of finding a subgraph satisfying a given
  strongly $\lambda$-extendible property $\Pi$ is fixed-parameter
  tractable (FPT{}) when parameterized above the Poljak-Turz\'{i}k
  bound---\emph{does there exist a spanning subgraph $H$ of a
    connected graph $G$ such that $H\in\Pi$ and $H$ has at least
    $\lambda{}m+\frac{1-\lambda}{2}(n-1)+k$ edges?}---subject to
  the condition that the problem is \FPT{} on a certain simple
  class of graphs called \emph{almost-forests of cliques}. This
  generalized an earlier result of Crowston et al. (\emph{ICALP}
  2012) for \maxcut{}, to all strongly \lamex{} properties which
  satisfy the additional criterion.

  In this paper we  settle the kernelization complexity
  of nearly all problems parameterized above Poljak-Turz\'{i}k
  bounds, in the affirmative.  We show that these problems admit
  quadratic kernels (cubic when $\lambda=1/2$), \emph{without
    using} the assumption that the problem is FPT on
  almost-forests of cliques. Thus our results not only remove the
  technical condition of being FPT on almost-forests of cliques
  from previous results, but also unify and extend previously
  known kernelization results in this direction. Our results add
  to the select list of \emph{generic} kernelization results known
  in the literature.


\end{abstract}

\section{Introduction}


In parameterized complexity each problem instance $I$ comes with a
parameter $k$, and a parameterized problem is said to be
\emph{fixed parameter tractable} (\FPT{}) if for each instance
$(I,k)$ the problem can be solved in time $f(k)|I|^{\cO{(1)}}$
where $f$ is some computable function. The parameterized problem
is said to admit a {\it polynomial kernel} if there is a
polynomial time algorithm, called a {\em kernelization} algorithm,
that reduces the input instance down to an instance with size
bounded by a polynomial $p(k)$ in $k,$ while preserving the
answer. This reduced instance is called a {\em $p(k)$ kernel} for
the problem. 
The study of kernelization is a major research frontier of
Parameterized Complexity; many important recent advances in the
area pertain to kernelization. These include general results
showing that certain classes of parameterized problems have
polynomial
kernels~\cite{Alon:2010vp,H.Bodlaender:2009ng,FominLST10,FominLMS12}
and randomized kernelization based on matroid
tools~\cite{KratschW12,KratschW12Soda}.  The recent development of
a framework for ruling out polynomial kernels under certain
complexity-theoretic
assumptions~\cite{BodlaenderDFH09,Dell:2010sh,FS08} has added a
new dimension to the field and strengthened its connections to
classical complexity.  For overviews of kernelization we refer to
surveys~\cite{Bodlaender09,GN07SIGACT} and to the corresponding
chapters in books on Parameterized Complexity
\cite{FlumGroheBook,Niedermeierbook06}.  In this paper we give a
generic kernelization result for a class of problems parameterized
above guaranteed lower bounds.

\myparagraph{Context and Related Work.} Many interesting graph
problems are about finding a largest subgraph \(H\) of the input
graph \(G\), where graph \(H\) satisfies some specified property
and its size is defined as the number of its edges. For many
properties this problem is \textsf{NP}-hard, and for some of these
we know nontrivial lower bounds for the size of \(H\). In these
latter cases, the apposite parameterization ``by problem size''
is: Given graph \(G\) and parameter \(k\in\mathbb{N}\), does \(G\)
have a subgraph \(H\) which has (i) the specified property and
(ii) at least \(k\) \emph{more} edges than the best known lower
bound? \maxcut{} is a sterling example of such a problem. The
problem asks for a largest \emph{bipartite} subgraph \(H\) of the
input graph \(G\); it is \textsf{NP}-complete~\cite{Karp1972}, and
the well-known \emph{\EEbound{}}~\cite{Edwards1973,Edwards1975}
tells us that any connected loop-less graph on $n$ vertices and
$m$ edges has a bipartite subgraph with at least
$\frac{m}{2}+\frac{n-1}{4}$ edges. This lower bound is also the
best possible, in the sense that it is tight for an infinite
family of graphs---for example, for the set of all cliques with an
odd number of vertices.

\PandT{} investigated the \emph{reason} why bipartite subgraphs
satisfy the \EEbound{}, and they abstracted out a sufficient
condition for \emph{any} graph property to have such a lower
bound. They defined the notion of a \lamex{} property for
\(0<\lambda<1\), and showed that for any \lamex{} property $\Pi$,
any connected graph $G=(V,E)$ contains a spanning subgraph
$H=(V,F)\in\Pi$ with at least
\(\lambda{}|E|+\frac{1-\lambda}{2}(|V|-1)\)
edges~\cite{PoljakTurzik1986}. The property of being bipartite is
\lamex{} for $\lambda = 1/2$, and so the \PandT{} result implies
the \EEbound{}.  Other examples of \lamex{} properties---with
different values of $\lambda$---include $q$-colorability and
acyclicity in oriented graphs.

In their pioneering paper which introduced the notion of
``above-guarantee'' parameterization, Mahajan and
Raman~\cite{MahajanRaman1999} posed the parameterized tractability
of \maxcut{} above its tight lower bound (\mcatlb{})---\emph{Given
  a connected graph \(G\) with \(n\) vertices and \(m\) edges and
  a parameter \(k\in\mathbb{N}\), does \(G\) have a bipartite
  subgraph with at least $\frac{m}{2}+\frac{n-1}{4}+k$
  edges?}---as an open problem. This was recently resolved by
Crowston et al. who showed that \mcatlb{} can be solved in
\(2^{\cO(k)}\cdot{}n^{4}\) time and has a kernel with
\(\cO(k^{5})\) vertices~\cite{CrowstonJonesMnich2012}. Following
this, Mnich et al.~\cite{MnichPhilipSaurabhSuchy2012} generalized
the \FPT{} result of Crowston et al. to \emph{all} graph
properties which (i) satisfy a (potentially) stronger notion which
they dubbed \emph{strong \(\lambda\)-extendibility}, and (ii) are
\FPT{} on a certain simple class of graphs called
\emph{almost-forests of cliques}. That is, they showed that for
any \stronglamex{} graph property \(\Pi\) which satisfies the
simplicity criterion, the following problem---called \AbovePT{},
or \APT{} for short---is \FPT{}: \emph{Given a connected graph
  \(G\) with \(n\) vertices and \(m\) edges and a parameter
  \(k\in\mathbb{N}\), does \(G\) have a spanning subgraph
  \(H\in\Pi\) with at least
  \(\lambda{}m+\frac{1-\lambda}{2}(n-1)+k\) edges?} Problems which
satisfy these conditions include \maxcut{}, \maxdag{}, \qcol{}
and, more generally, any graph property which is equivalent to
having a homomorphism to a fixed vertex-transitive
graph~\cite{MnichPhilipSaurabhSuchy2012}.

\myparagraph{Our Results and their Implications.} Our main result
is that for almost all \stronglamex{} properties \(\Pi\) of
(possibly oriented or edge-labelled) graphs, the \AbovePT{}
problem has kernels with \(\cO(k^{2})\) or \(\cO(k^{3})\) vertices. Here ``almost
all'' includes the following: (i) \emph{all} \stronglamex{}
properties for \(\lambda\neq{}\frac{1}{2}\), (ii) \emph{all}
\stronglamex{} properties which contain all orientations and
labels (if applicable) of the graph \(K_{3}\) (triangle), and (iii) all
\emph{hereditary} \stronglamex{} properties  for
simple or oriented graphs.
In particular, our result
implies kernels with \(\cO(k^{2})\) vertices for \qcol{} and other problems defined by
homomorphisms to vertex-transitive graphs.

We address both the questions left open by Mnich et
al.~\cite{MnichPhilipSaurabhSuchy2012}, albeit in different
ways. Firstly, we resolve the kernelization question for
\stronglamex{} properties, except for the special cases of
non-hereditary \(\frac{1}{2}\)-extendible properties which do not
contain some orientation or labelling of the triangle, or hereditary \(\frac{1}{2}\)-extendible properties
which do not contain some labelling of the triangle.
Note that for non-hereditary properties, we may expect to find kernelization very difficult,
as a large subgraph with the property can disappear entirely if we delete even a small
part of the graph. For the cases when the membership of the triangle depends
on its labelling, we may expect the rules of kernelization to
depend greatly on the family of labellings, and so it is 
difficult to produce a general result.

 Secondly,
we get rid of the simplicity criterion required by Mnich et
al. Showing that a specific problem is \FPT{} on almost-forests of
cliques takes---in general---a non-trivial amount of work, as can
be seen from the corresponding proofs for
\maxcut{}~\cite[Lemma~9]{CrowstonJonesMnich2011}, \maxdag{}, and
having a homomorphism to a vertex transitive
graph~\cite[Lemmas~27,~31]{MnichPhilipSaurabhSuchy2012arXiv}. Mnich
et al. had proposed that a way to get around this problem was to
find a logic which captures all problems which are \FPT{} on
almost-forests of cliques, and had left open the problem of
finding the right logic. The proof of our main result shows that
\emph{all} \stronglamex{} properties---save for the special
cases ---are \FPT{} on almost-forests of cliques: in fact, that they
have polynomial size kernels on this class of graphs. No
special logic is required to capture these problems, and this
answers their second open problem.

Formally, our main result is as follows:
\begin{theorem}\label{thm:main}
  Let \(0<\lambda<1\), and let \(\Pi\) be a \stronglamex{}
  property of (possibly oriented and/or labelled) graphs. Then the
  \AbovePT{} problem has a kernel on \(\cO(k^{2})\) vertices if
  conditions \ref{cond:neqhalf} or \ref{cond:triangles} holds, and a kernel on \(\cO(k^{3})\) vertices if only \ref{cond:hereditary} holds:
  \begin{enumerate}
    \item\label{cond:neqhalf} \(\lambda\neq{}\frac{1}{2}\);
    \item\label{cond:triangles} All orientations and labels (if applicable) of the graph
      \(K_{3}\) belong to \(\Pi\); 
    \item\label{cond:hereditary} \(\Pi\) is a hereditary property of simple or oriented graphs.
  \end{enumerate}
\end{theorem}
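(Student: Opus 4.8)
The plan is to follow the template typical of kernelizations parameterized above the \PandT{} bound: first isolate a small ``complex core'', then shrink the structurally simple remainder. Call the \emph{excess} of an instance the largest $e$ for which $G$ has a spanning subgraph $H\in\Pi$ with at least $\lambda m+\lp(n-1)+e$ edges, so that $(G,k)$ is a \YES{}-instance exactly when its excess is at least $k$. First I would show that, in polynomial time, one can either certify that $(G,k)$ is a \YES{}-instance or compute a set $S\subseteq V(G)$ with $|S|=\cO(k)$ such that $G-S$ is a \emph{forest of cliques} (every block is a clique). The engine here is the slack in the \PandT{} bound: wherever $G$ departs locally from being a forest of cliques---a block that is not a clique, or an attachment that is not through a single cut vertex---strong $\lambda$-extendibility guarantees a constant additive gain over $\lambda m+\lp(n-1)$. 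Making this quantitative through a charging argument, the attainable excess is at least a constant times the number of such departures, so an instance of excess below $k$ has only $\cO(k)$ of them, and the vertices they involve form the desired set $S$. This is precisely the statement that $G$ is an almost-forest of cliques, obtained here without any appeal to $\Pi$ being \FPT{} on that class.

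With $S$ fixed, $G-S$ is a collection of cliques glued in a forest, and I would design reduction rules---stated purely in terms of strong $\lambda$-extendibility and the bound, so that they hold for \emph{every} such $\Pi$---to bound both the number of these cliques and their individual sizes. The two decisive rules are: (i) a clique of $G-S$ that attaches to the rest of $G$ through a single cut vertex and has no neighbour in $S$ contributes exactly the \PandT{} amount and may be deleted after decreasing $k$ appropriately; and (ii) a clique interacting with $S$ may be truncated once it exceeds a size threshold, because the surplus vertices are interchangeable and merely reproduce the tight part of the bound. Safety of each rule is proved with strong extendibility: an optimal $\Pi$-subgraph of the reduced instance lifts to one of the original achieving the matching edge count, and conversely.

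After exhaustive reduction every surviving clique of $G-S$ touches $S$, and combining $|S|=\cO(k)$ with the bounds on the number and size of the cliques yields $\cO(k^{2})$ vertices. The case split of the statement surfaces in the safety proof of the truncation rule, which needs a guaranteed per-clique gain to cut large cliques down to constant size. Conditions~\ref{cond:neqhalf} and~\ref{cond:triangles} each furnish such a gain: for $\lambda\neq\frac{1}{2}$ the bound cannot stay tight on cliques past a fixed size, and when every labelling and orientation of $K_{3}$ lies in $\Pi$ the triangles inside a clique can always be realized, in either case producing a strictly super-bound $\Pi$-subgraph. In the remaining regime---condition~\ref{cond:hereditary} at $\lambda=\frac{1}{2}$, as for \maxcut{}---odd cliques can be exactly tight, so truncation cannot reduce clique sizes below $\cO(k)$; this costs one factor of $k$ and yields the weaker $\cO(k^{3})$ bound.

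The crux is the structural decomposition together with the genericity constraint: both the decomposition and the safety of every rule must rest solely on the abstract strong-$\lambda$-extendibility axioms, with no access to the internals of $\Pi$. Proving that each local departure from the clique-forest structure forces a guaranteed constant excess---uniformly over all such properties, while separately accounting for the $\lambda=\frac{1}{2}$ degeneracy where odd cliques are tight---is where the genuine difficulty lies.
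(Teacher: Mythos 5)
Your outline for conditions~\ref{cond:neqhalf} and~\ref{cond:triangles} is broadly on track: the decomposition into a set $S$ of size $\cO(k)$ plus a forest of cliques is exactly the starting point (the paper simply cites it from Mnich et al.\ rather than re-deriving it), and the ``guaranteed per-clique gain'' you identify is the paper's notion of the property \emph{diverging on cliques}. One correction of mechanism, though: the paper does not \emph{truncate} large cliques. Your stated justification for truncation --- that surplus vertices ``merely reproduce the tight part of the bound'' --- is the opposite of what happens under conditions~\ref{cond:neqhalf} and~\ref{cond:triangles}: there $\ex(K_j)\to\infty$, so a large clique forces large excess and the correct move is a WIN/WIN step (declare \YES{}), not a safety-preserving deletion. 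A generic truncation rule would need to show that removing vertices from a clique preserves the excess exactly, which you have no handle on from the axioms alone; the WIN/WIN route avoids this entirely. This is repairable, but as written the safety proof of rule (ii) would not go through.

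The genuine gap is the $\lambda=\tfrac12$ hereditary case, which your proposal disposes of in one sentence and which is where most of the work lies. When the bound is tight on cliques (as for \maxcut{}, where odd cliques of \emph{every} size have excess zero), there is no per-clique gain at all, so nothing bounds clique sizes by $\cO(k)$ --- your claim that truncation still gets them down to $\cO(k)$, costing ``one factor of $k$,'' is unsupported and is precisely the hard part. The paper handles this by a structural classification (Theorem~\ref{theo:MaxCutEverywhere}): if $\Pi$ is hereditary, $\lambda=\tfrac12$, and no orientation/labelling of $K_3$ lies in $\Pi$, then $\Pi$ is \emph{exactly} the class of bipartite graphs, so one can invoke the known $\cO(k^3)$ kernel for \mcatlb{} as a black box. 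For oriented graphs there is an additional intermediate case ($\tuo\in\Pi$ but $\tor\notin\Pi$) in which the property does diverge on cliques yet some triangles have excess zero; this needs its own reduction rule (contracting chains of zero-excess oriented triangles) and a separate counting argument to reach $\cO(k^3)$. None of this is present in your proposal, and the restriction of condition~\ref{cond:hereditary} to simple or oriented graphs --- which exists precisely because this classification is only carried out for those classes --- is not explained by your argument. Without the classification theorem or an appeal to the existing \maxcut{} kernel, your proof of the $\cO(k^3)$ case does not close.
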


As a corollary, we get that a number of specific problems have
polynomial kernels when parameterized above their respective
Poljak-Turz\'{i}k bounds:
\begin{corollary}\label{cor:examples}
  The \AbovePT{} parameterization of \qcol{}, $q>2$, has a
  kernel on \(\cO(k^{2})\) vertices, and 
  the \AbovePT{} parameterization of \maxdag{} has a
  kernel on \(\cO(k^{3})\) vertices. Furthermore, the \AbovePT{} parameterization
of any problem which is defined by homomorphism to a vertex-transitive
graph with at least 3 vertices has a kernel on \(\cO(k^{2})\) vertices.
\end{corollary}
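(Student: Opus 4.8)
To prove \autoref{cor:examples}, the plan is to apply \autoref{thm:main} to each of the three properties in turn. Since Mnich et al.~\cite{MnichPhilipSaurabhSuchy2012} already established that each of these is \stronglamex{}, all that remains is to determine the value of \(\lambda\) and to check which of the conditions \ref{cond:neqhalf}--\ref{cond:hereditary} is satisfied; the kernel size then follows directly from the theorem.

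I would handle \qcol{} and the homomorphism problems together, because having a \(q\)-colouring is exactly having a homomorphism to \(K_{q}\). For any vertex-transitive \(H\) that is \(d\)-regular on \(h\) vertices, a uniformly random assignment \(V(G)\to V(H)\) preserves each edge of \(G\) with probability \(d/h\); this averaging fact identifies the extendibility constant as \(\lambda=d/h\). Taking \(H=K_{q}\) gives \(\lambda=(q-1)/q=1-1/q\), and \(1-1/q=\tfrac12\) precisely when \(q=2\), so for every \(q>2\) condition \ref{cond:neqhalf} holds and \autoref{thm:main} yields an \(\cO(k^{2})\)-vertex kernel for \qcol{}. (One could equally invoke condition \ref{cond:triangles}, since \(K_{q}\) contains a triangle once \(q\ge 3\).)

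For a general vertex-transitive \(H\) I would first replace \(H\) by its core, which is again vertex-transitive and defines the same property, so I may assume \(H\) is a core on at least three vertices. If \(H\) contains a triangle then condition \ref{cond:triangles} applies and the kernel has \(\cO(k^{2})\) vertices. If instead \(H\) is triangle-free, then by Mantel's theorem its \(dh/2\) edges obey \(dh/2\le h^{2}/4\), whence \(\lambda=d/h\le\tfrac12\); moreover equality forces \(H=K_{h/2,h/2}\), which is bipartite and hence not a core on at least three vertices. Therefore the inequality is strict, \(\lambda<\tfrac12\), condition \ref{cond:neqhalf} holds, and we again obtain an \(\cO(k^{2})\)-vertex kernel. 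This establishes the final assertion of the corollary.

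It remains to treat \maxdag{}, whose property is acyclicity of oriented graphs. A uniformly random vertex ordering keeps each arc with probability \(\tfrac12\), so \(\lambda=\tfrac12\) and condition \ref{cond:neqhalf} fails; the cyclic orientation of \(K_{3}\) is not acyclic, so not all orientations of \(K_{3}\) belong to \(\Pi\) and condition \ref{cond:triangles} also fails. But acyclicity is clearly hereditary, since every sub-digraph of a directed acyclic graph is again acyclic, so condition \ref{cond:hereditary} holds and \autoref{thm:main} gives an \(\cO(k^{3})\)-vertex kernel. I expect the only genuinely delicate step to be the triangle-free vertex-transitive case: there \(\lambda\) can reach \(\tfrac12\) only at the complete bipartite graphs, and the crux is recognising (via the equality case of Mantel's theorem, together with the fact that complete bipartite graphs are not cores) that these are exactly the instances that collapse to plain bipartiteness, the \maxcut{} case, and so do not arise once \(H\) is taken to be a core on at least three vertices.
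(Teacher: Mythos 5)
Your route is the same as the paper's: the paper proves this corollary in a single sentence, by citing Mnich et al.\ for the fact that each of these properties is \stronglamex{} with the stated value of \(\lambda\) and then invoking \autoref{thm:main}. What you add is the explicit verification of which condition of \autoref{thm:main} applies in each case, which the paper leaves implicit, and all of your verifications are correct: for \qcol{} with \(q>2\) one has \(\lambda=1-1/q\neq\frac12\) (or, equally, \(K_3\in\Pi\)); for \maxdag{} one has \(\lambda=\frac12\), the cyclic triangle witnesses that condition~\ref{cond:triangles} fails, and heredity of acyclicity gives condition~\ref{cond:hereditary} and hence \(\cO(k^3)\). One small caution: the random-assignment computation identifies the \emph{value} of \(\lambda\) but is not by itself a proof of strong \(\lambda\)-extendibility; you correctly outsource that to Mnich et al., exactly as the paper does.

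Your treatment of the general vertex-transitive case is genuinely more careful than the paper's, and it exposes a real imprecision in the statement. Replacing \(H\) by its core and then \emph{assuming} the core still has at least three vertices is a strengthening of the corollary's hypothesis: \(C_4=K_{2,2}\), and more generally \(K_{d,d}\), is vertex-transitive with at least three vertices, yet its core is \(K_2\), the property collapses to bipartiteness with \(\lambda=d/h=\frac12\), \(K_3\notin\Pi\), and the paper itself only obtains an \(\cO(k^3)\) kernel for that property (\autoref{thm:maxCutKernel}). So the final claim of the corollary, read literally, does not follow from \autoref{thm:main} for these targets; your Mantel-plus-core argument shows that \(K_{d,d}\) is the \emph{only} obstruction, and that for every other vertex-transitive \(H\) on at least three vertices either condition~\ref{cond:triangles} or condition~\ref{cond:neqhalf} applies. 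This is useful content the paper omits, but you should state explicitly that you are proving the corollary under the (presumably intended) reading that \(H\) is a core, rather than silently assuming it.
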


The corollary follows from Theorem~\ref{thm:main} using the fact that each of
these problems is \(\lambda\)-extendible for different values of
\(\lambda\)~\cite{MnichPhilipSaurabhSuchy2012}.

\myparagraph{An outline of the proof.} We now give an intuitive
outline of our proof of Theorem~\ref{thm:main}. 
Our proof starts from a key result of Mnich et al.
\begin{proposition}[\cite{MnichPhilipSaurabhSuchy2012}] \label{thm:GminusS}
  Let $\Pi$ be a strongly $\lambda$-extendible property and let
  $(G,k)$ be an instance of \APT{}. Then in polynomial time, we
  can either decide that $(G,k)$ is a {\sc Yes}-instance or find a
  set $S \subseteq V(G)$ such that $|S| < \frac{6k}{1-\lambda}$
  and $G-S$ is a forest of cliques.
\end{proposition}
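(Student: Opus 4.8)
\emph{Setup.} For a connected graph $G$ write $b(G)=\lambda m+\frac{1-\lambda}{2}(n-1)$ for the Poljak--Turz\'{\i}k bound, and for a graph with components $G_1,\dots,G_c$ set $b(G)=\sum_i b(G_i)=\lambda m+\frac{1-\lambda}{2}(n-c)$. Call the maximum number of edges in a spanning subgraph $H\in\Pi$ of $G$, minus $b(G)$, the \emph{excess} of $G$; by the \PandT{} bound the excess is always nonnegative, and $(G,k)$ is a \YES{}-instance exactly when the excess is at least $k$. The plan is to build $S$ greedily, charging a guaranteed amount of excess to every vertex we place into it, so that either enough excess accumulates to answer \YES{}, or $S$ stays small and $G-S$ becomes a forest of cliques.

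\emph{Additivity via strong $\lambda$-extendibility.} First I would record the two additivity facts that make this charging work. Both $m$ and $n-c$ are additive when a graph is split at a cut vertex or decomposed into components, so $b$ is additive over the blocks of the block--cut tree. Strong $\lambda$-extendibility supplies the matching statement for the optimum: optimal $\Pi$-subgraphs of distinct components combine by disjoint union, and optimal $\Pi$-subgraphs of two graphs meeting in a single cut vertex glue into a $\Pi$-subgraph of their union. Hence the excess itself is additive over blocks; in particular a clique block contributes excess $0$ (the bound is tight on cliques), while any non-clique block contributes strictly positive excess. Recall that $G-S$ is a forest of cliques precisely when every block of $G-S$ is a clique.

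\emph{The greedy procedure.} While $G-S$ is not a forest of cliques I would repeatedly apply one of three local moves. (i) If some component of $G-S$ is a clique, discard it: by additivity it contributes excess $0$ and affects nothing. (ii) If some clique block is attached to the rest of $G-S$ through a single cut vertex (a \emph{leaf clique}), peel off its non-cut vertices; again the excess is unchanged and no vertex enters $S$. (iii) Otherwise a structural lemma guarantees a connected set $X$ of at most three vertices whose presence certifies that the current graph has excess at least $\frac{1-\lambda}{2}$ more than the graph obtained by deleting $X$; I add $X$ to $S$ and recurse. Moves (i)--(ii) strictly shrink the graph without touching $S$, and move (iii) adds at most three vertices while banking excess $\frac{1-\lambda}{2}$, so the procedure terminates in polynomial time and ends with $G-S$ a forest of cliques.

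\emph{Accounting and the obstacle.} Each application of move (iii) adds at most $3$ vertices and guarantees excess $\frac{1-\lambda}{2}$, so after the run the total guaranteed excess is at least $\frac{1-\lambda}{2}\cdot\frac{|S|}{3}=\frac{(1-\lambda)|S|}{6}$. If $|S|\ge\frac{6k}{1-\lambda}$ this is at least $k$ and we report a \YES{}-instance; otherwise $|S|<\frac{6k}{1-\lambda}$ and $G-S$ is a forest of cliques, as required. The main obstacle is the gain step (iii): one must show, using \emph{only} strong $\lambda$-extendibility and no feature specific to $\Pi$, both that a graph which is not a forest of cliques always contains such a three-vertex obstruction locatable in polynomial time, and that deleting it genuinely costs at least $\frac{1-\lambda}{2}$ in excess in a way that is \emph{combinable} with the block-additivity above---so that the excess banked at different steps does not double-count and really sums to a single spanning $\Pi$-subgraph of $G$ with at least $b(G)+k$ edges. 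Verifying that the $\lambda$-extension axiom yields exactly the constant $\frac{1-\lambda}{2}$ for the various shapes $X$ can take (an induced $P_3$, or a triangle together with its attachment to the rest of the graph) is the delicate, case-based heart of the argument.
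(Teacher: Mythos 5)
The paper does not prove this proposition itself---it is imported verbatim from Mnich et al.~\cite{MnichPhilipSaurabhSuchy2012}---so the comparison is against the proof in that reference. Your skeleton matches it: a greedy peeling procedure that either discards clique-like pieces for free or moves a connected set of at most three vertices into $S$ while banking excess at least $\frac{1-\lambda}{2}$, giving $|S|<\frac{6k}{1-\lambda}$ by exactly the accounting you describe. But the proposal has a genuine gap, and you have in effect flagged it yourself: step (iii) rests on an unstated and unproved structural lemma asserting that any connected graph that is not a forest of cliques contains one of a short list of bounded-size configurations whose removal yields net excess at least $\frac{1-\lambda}{2}$. That lemma \emph{is} the proof; everything else (additivity of the bound and of the optimum over blocks and components, the arithmetic $\frac{1-\lambda}{2}\cdot\frac{|S|}{3}\ge k$) is routine. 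In particular you do not identify the configurations (in the cited proof they are, roughly: a vertex whose neighbourhood is not a clique, yielding an induced $P_3$; a pair of attachment vertices of a dangling near-clique; and a dangling clique proper), you do not verify the excess gain for each via the strong $\lambda$-subgraph extension axiom, and---crucially---you do not handle the disconnection penalty: deleting $X$ may split the remainder into several components, and by the composition inequality (the paper's Lemma~\ref{lem:half}) each extra component costs $\frac{1-\lambda}{2}$ of excess, so the rules must be engineered for the gain to survive that loss. This is exactly the case analysis you defer.

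One smaller error: you assert that a clique block ``contributes excess $0$ (the bound is tight on cliques).'' That is false for general strongly $\lambda$-extendible $\Pi$---the present paper is largely about properties for which $\ex(K_j)>0$, e.g.\ $\ex(K_3)\ge 1-2\lambda>0$ when $\lambda<\frac{1}{2}$ (Lemma~\ref{lem:exK3}). The slip is harmless for your accounting, since the excess is nonnegative by the Poljak--Turz\'{\i}k bound and discarding a clique can therefore only underestimate $\ex(G)$, which is the safe direction for a one-sided WIN/WIN test; but it should not be stated as an identity.
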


\autoref{thm:GminusS} is a classical WIN/WIN result, and either
outputs that the given instance is a {\sc YES} instance or outputs
a set $S\subseteq{}V(G);|S|<\frac{6k}{1-\lambda}$. In the former
case we return a trivial \textsc{YES} instance. In the latter case
we know that $G-S$ is a forest of cliques and
$|S|<\frac{6k}{1-\lambda}$; thus $G-S$ has a very special
structure. For $\lambda \neq \frac{1}{2}$, or when all orientations or labels of
the graph $K_3$ have the property,
we show combinatorially
that if the combined sizes of the cliques are too big then either we can get some
``extra edges'', or we can apply a reduction rule. We then show
that the reduced instance has size polynomial in $k$. For
$\lambda=\frac{1}{2}$, we need the extra technical condition that
the property is hereditary,
and defined only for simple or oriented graphs.
In this case we can show that the problem either contains (all
orientations of) $K_3$, or is exactly \maxcut{}, or that we can
bound the number and sizes of the cliques. In any of these cases
the problem admits a polynomial kernel.

A block of a graph $G$ is a maximal $2$-connected subgraph of $G$. Note
that a block ${\cal B}$ of $G$ may consist of a single vertex and no
edges, if that vertex is isolated in $G$.

Let \(G, S\) be as in \autoref{thm:GminusS}, and let $Q$ be the
set of cut vertices of $G-S$. For any block $B$ of $G-S$, let
$\inn{B}=V(B)\setminus Q$ be the {\em interior} of $B$.  Let
$\mathcal{B}$ be the set of blocks of $G-S$. A {\em block
  neighbor} of a block $B$ is a block $B'$ such that $|V(B)\cap
V(B')|=1$. Given a sequence of blocks $B_0,B_1,\dots,B_l,B_{l+1}$
in $G-S$, the subgraph induced by $V(B_1)\cup\dots\cup V(B_l)$ is
a {\em block path} if, for every $1\le i\le l$, $V(B_i)$ contains
exactly two vertices from $Q$, and $B_i$ has exactly two block
neighbors $B_{i-1}$ and $B_{i+1}$. A block $B$ in $G-S$ is a {\em
  leaf block} if $V(B)$ contains exactly one vertex from $Q$. A
block in $G-S$ is an {\em isolated block} if it contains no vertex
from \(Q\). Observe that an isolated block has no block neighbour,
while a leaf block has \emph{at least} one block neighbour.

Let $\mathcal{B}_0$ and $\mathcal{B}_1$ be the set of isolated
blocks and leaf blocks, respectively, contained in
$\mathcal{B}$. Let $\mathcal{B}_2$ be the set of blocks
$B\in\mathcal{B}$ such that \(B\) is a block in some block path of
\(G-S\). Finally, let $\mathcal{B}_{\geq
  3}=\mathcal{B}\setminus(\mathcal{B}_0\cup\mathcal{B}_1\cup\mathcal{B}_2)$. Thus:
\begin{itemize} 
\item \(\mathcal{B}_{0}\) is the set of all blocks of \(G-S\)
  which contain no cut vertex of \(G-S\), and therefore have no
  block neighbour;
\item \(\mathcal{B}_{1}\) is the set of all blocks of \(G-S\)
  which contain exactly one cut vertex of \(G-S\), and therefore
  have at least one block neighbour;
\item \(\mathcal{B}_{2}\) is the set of all blocks of \(G-S\)
  which (i) contain exactly two cut vertices of \(G-S\), \emph
  {and} (ii) have exactly two block neighbours; and,
\item \(\mathcal{B}_{3}\) is the set of all the remaining blocks
  of \(G-S\). A block of \(G-S\) is in \(\mathcal{B}_{3}\) if and
  only if it (i) contains at least two cut vertices of \(G-S\),
  \emph{and} (ii) has at least three block neighbours.
\end{itemize}

In order to bound the number of vertices in $G-S$ it is enough to
bound (i) the number of blocks, and (ii) the size of each block.
When $\lambda \neq \frac{1}{2}$ or the property includes all
orientations and labellings of $K_3$, we show (Lemma~\ref{lem:positivecliques}) that all blocks with two or
more vertices have positive excess. Using this fact, we can bound
the number of vertices in blocks of $\mathcal{B}_1$ or
$\mathcal{B}_2$ directly, and it remains only to bound
$|\mathcal{B}_0|$.  In the remaining case, we have to bound each
of $|\mathcal{B}_0|, |\mathcal{B}_1|, |\mathcal{B}_2|,
|\mathcal{B}_{\geq 3}|$ and the size of each block separately. We
bound these numbers over a number of lemmas.




%

\section{Definitions}
We use \(\uplus\) to denote the disjoint union of sets. We use
``graph'' to denote simple graphs without self-loops, directions,
or labels, and use standard graph terminology used by
Diestel~\cite{Diestel} for the terms which we do not explicitly
define.  Each edge in an \emph{oriented} graph has one of two
directions \(\{<,>\}\), while each edge in a \emph{labelled} graph
has an associated label \(\ell\in{}L\) chosen from a finite set
\(L\).  A \emph{graph property} is a subclass of the class of all
(possibly labelled and/or oriented) graphs. For a labelled and/or
oriented graph $G$, we use $U(G)$ to denote the underlying simple
graph; for any graph property of simple graphs, we say that $G$
has the property if $U(G)$ does: for instance, $G$ is connected if
$U(G)$ is. For a (possibly labelled and/or oriented) graph
\(G=(V,E)\) and weight function \(w:E(G)\to\mathbb{R}^{+}\), we
use \(w(F)\) to denote the sum of the weights of all the edges in
\(F\subseteq{}E\). We use $K_j$ to denote the complete simple
graph on $j$ vertices for \(j\in\mathbb{N}\), and $K$ to denote an
arbitrary complete simple graph. For a graph property \(\Pi\), we
say that $K_j \in \LE$ if $G \in \LE$ for every (oriented,
labelled) graph $G$ such that \(U(G)=K_{j}\).  A connected
(possibly labelled and/or oriented) graph is a {\em tree of
  cliques} if the vertex set of each block of the graph forms a
clique. We use \(\comp(G)\) to denote the set of connected
components of graph \(G\).  A {\em forest of cliques} is a graph
whose connected components are trees of cliques.  A graph \(G\) is
$2$-connected if and only if it does not contain cut vertices.


Mnich et al.~\cite{MnichPhilipSaurabhSuchy2012} defined the
following variant of \PandT's notion of
\(\lambda\)-extendibility~\cite{PoljakTurzik1986}.

\begin{definition}\label{def:stronglambda}
  Let $\mathcal{G}$ be a class of (possibly labelled and/or
  oriented) graphs and let $0<\lambda<1$. A graph property $\LE$
  is \emph{strongly $\lambda$-extendible} on $\mathcal{G}$ if it
  satisfies the following properties:
\begin{itemize}
\item {{\sc Inclusiveness}
    $\{G\in\mathcal{G}:U(G)\in\{K_1,K_2\}\}\subseteq\Pi$. That is,
    \(K_{1}\in\Pi\), and every possible orientation and labelling
    of the graph \(K_{2}\) is in \(\Pi\);}
\item {{\sc Block additivity} $G\in\mathcal{G}$ belongs to $\Pi$
    if and only if every block of $G$ belongs to $\Pi$;}
\item {{\sc Strong $\lambda$-subgraph extension} Let
    $G\in\mathcal{G}$ and let $(U,W)$ be a partition of $V(G)$,
    such that $G[U]\in\Pi$ and $G[W]\in\Pi$. For any weight
    function \(w:E(G)\to\mathbb{R}^{+}\) there exists an
    $F\subseteq E(U,W)$ with $w(F)\geq\lambda w(E(U,W))$, such
    that $G-(E(U,W)\setminus F)\in\Pi$.}
\end{itemize}
\end{definition}
In the rest of the paper we use $\mathcal{G}$ to denote a class of
(possibly labelled and/or oriented) graphs, and $\LE$ to denote an
arbitrary---but fixed---\stronglamex{} property defined on
$\mathcal{G}$ for some \(0<\lambda<1\).  The focus of our work is
the following ``above-guarantee'' parameterized problem:

\begin{parnamedefn}%
  {\AbovePT{} (\APT{})}%
  {A connected graph $G=(V,E)$ and an integer $k$.}%
  {$k$}%
  {Is there a spanning subgraph $H=(V,F)\in\Pi$ of $G$\newline
    such that $|F| \geq \lambda |E| +
    \frac{1-\lambda}{2}(|V|-1)+k$?}%
\end{parnamedefn}

Let $G\in\mathcal{G}$. A {\em $\Pi$-subgraph} of $G$ is a
\textbf{spanning} subgraph of $G$ which is in $\Pi$. Let
$\ms_\LE(G)$ denote the maximum number of edges in any
$\Pi$-subgraph of $G$, and let $\pt_\LE(G)$ denote the
Poljak-Turz\'{i}k bound on $G$; that is,
\(\pt_\LE(G)=\lambda|E(G)|+\lp(|V(G)|-|\mathcal{C}(G)|)\).  The
\emph{excess of $\LE$ on $G$}, denoted $\ex_\LE(G)$, is equal to
$\ms_\LE(G)-\pt_\LE(G)$. Thus, given a connected graph \(G\) and
\(k\in\mathbb{N}\) as inputs, the \APT{} problem asks whether
\(\ex_{\Pi}(G)\geq{}k\). We omit the subscript \(\Pi\) when it is
clear from the context.  We use $\ex(K_j)$ to denote the minimum
value of $\ex(G)$ for any (oriented, labelled) graph $G$ such that
$K_j=U(G)$. Thus, for example, if $\ex(K_3) = t$ then any graph
$G$ with underlying graph $K_3$ has a $\Pi$-subgraph with at least
$\pt(G) + t$ edges, regardless of orientations or labellings on
the edges of $G$. We say that a strongly $\lambda$-extendible
property {\em diverges on cliques} if there exists
$j\in\mathbb{N}$ such that $\ex(K_j)>\lp$.  We say that a simple
connected graph $\widetilde{K}$ is an {\em almost-clique} if there
exists $V'\subseteq V(\widetilde{K})$ with $|V'|\leq 1$ (possibly
$V'$ is empty) such that $\widetilde{K}-V'$ is a clique.  For an
almost-clique $\widetilde{K}$, we use $\ex(\widetilde{K})$ to
denote the minimum value of $\ex(G)$ for any (oriented, labelled)
graph $G$ such that $\widetilde{K}=U(G)$, and we say that
$\widetilde{K}\in\LE$ if and only if $G \in \LE$ for every
(oriented, labelled) graph $G$ with underlying graph
$\widetilde{K}$.

%

\begin{definition}
  We use $AK_{\Pi}^+$ to denote the class of all graphs
  $G\in\mathcal{G}$ such that $U(G)$ is an almost-clique and
  $\ex_{\Pi}(G)>0$. For any strongly $\lambda$-extendible property
  which diverges on cliques, we use $\mak$ to denote the value
  $\inf_{(G\in AK^+)}\ex(G)$.
\end{definition}

Note that the class $AK_{\Pi}^+$ contains an infinite number of graphs. Hence, it could be the case that $\mak=0$.
In the next section, we will show that for any strongly
$\lambda$-extendible property which diverges on cliques, it holds that $\mak>0$.

\section{Preliminary Results}

We begin with some preliminary results. The first two lemmas state
how, in two special cases, the excess of a graph \(G\) can be
bounded in terms of the excesses of its subgraphs.

\begin{lemma}\label{lem:cutvertex}
  Let $G$ be a connected (possibly labelled and/or oriented) graph
  and let $v$ be a cut vertex of $G$. Then
  \(\ex(G)=\smashoperator{\sum\limits_{X\in{}\comp{}(G-\{v\})}}\ex(G[V(X)\cup\{v\}])\).
\end{lemma}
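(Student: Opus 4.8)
The plan is to split the claimed identity into two separate additivities over the components of \(G-\{v\}\) and then subtract them. Writing \(\comp(G-\{v\})=\{X_1,\dots,X_r\}\) and \(G_i=G[V(X_i)\cup\{v\}]\), I would prove \(\pt(G)=\sum_i\pt(G_i)\) and \(\ms(G)=\sum_i\ms(G_i)\) separately; since \(\ex=\ms-\pt\), subtracting the two identities yields \(\ex(G)=\sum_i\ex(G_i)\) immediately.

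For the bound \(\pt\) the argument is a direct computation resting on three structural facts, all consequences of \(v\) being a cut vertex: \(G\) has no edge joining distinct \(X_i,X_j\), so \(E(G)=\biguplus_iE(G_i)\) and \(|E(G)|=\sum_i|E(G_i)|\); each \(G_i\) is connected; and \(|V(G_i)|-1=|V(X_i)|\) while \(|V(G)|-1=\sum_i|V(X_i)|\). Substituting these into \(\pt(H)=\lambda|E(H)|+\lp\,(|V(H)|-|\comp(H)|)\) for \(H=G\) and for each \(H=G_i\) (all connected, so the component count is \(1\)) gives \(\sum_i\pt(G_i)=\pt(G)\) in one line.

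The crux is \(\ms(G)=\sum_i\ms(G_i)\), and the key observation I would lean on is that, because \(v\) is a cut vertex of \(G\), every block of every subgraph \(H\subseteq G\) lies inside a single \(G_i\): a \(2\)-connected block cannot meet two different components \(X_i,X_j\) because all \(X_i\)--\(X_j\) paths in \(G\) (hence in \(H\)) run through the single vertex \(v\), and a bridge cannot cross either since there are no edges between components. For the inequality \(\ge\), I would take an optimal \(\Pi\)-subgraph \(H_i\) of each \(G_i\) and glue them at \(v\); by the observation the edge-bearing blocks of the resulting spanning subgraph \(H\) are precisely those of the \(H_i\), each in \(\Pi\) by \textsc{Block additivity} applied to \(H_i\in\Pi\), while every isolated-vertex block is a \(K_1\in\Pi\) by \textsc{Inclusiveness}; hence \textsc{Block additivity} yields \(H\in\Pi\) and \(|E(H)|=\sum_i\ms(G_i)\). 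For \(\le\), I would take an optimal \(\Pi\)-subgraph \(H\) of \(G\) and restrict it to \(H_i=H[V(G_i)]\); again every block of \(H_i\) is either a block of \(H\) (so in \(\Pi\)) or a \(K_1\), whence \(H_i\in\Pi\) and \(|E(H_i)|\le\ms(G_i)\), and summing over the edge-partition \(E(H)=\biguplus_iE(H_i)\) gives \(\ms(G)\le\sum_i\ms(G_i)\).

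I expect the only subtle point to be this block-containment observation, together with the bookkeeping that isolated-vertex blocks --- in particular a vertex \(v\) that might be isolated in some \(H_i\) --- never affect membership in \(\Pi\); this is exactly what makes \textsc{Block additivity} transfer cleanly in both directions. The single background assumption I would invoke is that \(\mathcal{G}\) is closed under the vertex-induced subgraph and gluing operations used, so that \textsc{Block additivity} and \textsc{Inclusiveness} apply to \(H\) and to the \(H_i\).
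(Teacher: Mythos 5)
Your proposal is correct and follows essentially the same route as the paper: prove $\pt(G)=\sum_i\pt(G_i)$ by direct computation, prove $\ms(G)=\sum_i\ms(G_i)$ by the two inequalities obtained from gluing optimal $\Pi$-subgraphs at $v$ and from restricting an optimal $\Pi$-subgraph to each piece, with \textsc{Block additivity} doing the work in both directions, and then subtract. Your explicit handling of isolated-vertex blocks via \textsc{Inclusiveness} is a small point the paper glosses over, but it is the same argument.
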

\begin{proof}

\sloppy  

Recall that by definition,
\(\pt(G)=\lambda|E(G)|+\frac{1-\lambda}{2}(|V(G)|-1)\). Observe
first that
\[|E(G)|=\smashoperator{\sum_{X\in\comp(G-\{v\})}}|E(G[V(X)\cup\{v\}])|,\] and
\[{|V(G)|-1\quad=\quad\smashoperator{\sum_{X\in\comp(G-\{v\})}}|V(X)|\quad=\quad\smashoperator{\sum_{X\in\comp(G-\{v\})}}(|V(X)\cup\{v\}|-1)}.\]
Thus
\begin{align*}
  \pt(G)&\quad=\quad\lambda\quad\smashoperator{\sum_{X\in\comp(G-\{v\})}}|E(G[V(X)\cup\{v\}])| 
   \quad+\quad\frac{1-\lambda}{2}\quad\smashoperator{\sum_{X\in\comp(G-\{v\})}}(|V(X)\cup\{v\}|-1)\\
  &\quad=\quad\smashoperator[l]{\sum_{X\in\comp(G-\{v\})}}\left(\lambda|E(G[V(X)\cup\{v\}])|
  +\frac{1-\lambda}{2}(|V(X)\cup\{v\}|-1)\right)\\
  &\quad=\quad\smashoperator{\sum_{X\in\comp(G-\{v\})}}\pt(G[V(X)\cup\{v\}]).
\end{align*}
  
\fussy


We now derive a similar expression for \(\ms(G)\). For each
\(X\in\comp(G-v)\), let $H_X$ be a largest $\LE$-subgraph of
\(G[V(X)\cup\{v\}]\), and let \(H=\union_{X\in \comp(G -
  v)}H_{X}\). Since \(v\) is a cutvertex of graph \(G\), 
we get that every block of \(H\) is a block of some such subgraph
\(H_{X}\). Hence we get---from the block additivity property of
\(\Pi\)---that \(H\) is a \(\Pi\)-subgraph of \(G\). Since no
edge of \(G\) appears in two distinct subgraphs \(H_{X}\), we get
that \(\ms(G)\geq\sum_{X\in\comp(G-\{v\})}\ms(G[V(X)\cup\{v\}])\).

Now consider a largest $\LE$-subgraph $H$ of $G$, and let
\(H_{X}=H[V(X)\cup\{v\}]\) for each \(X\in\comp(G-\{v\})\). Since
\(v\) is a cutvertex of graph \(G\), 
we get that every block of each subgraph \(H_{X}\) is a block of
\(H\). Hence we get---again, from the block additivity property of
\(\Pi\)---that each \(H_{X}\) is a \(\Pi\)-subgraph of the
corresponding subgraph \(G[V(X)\cup\{v\}]\).  Since each edge of
the subgraph \(H\) lies in at least one such \(H_{X}\), we get
that \(\ms(G)\leq\sum_{X\in\comp(G-\{v\})}\ms(G[V(X)\cup\{v\}])\).

Thus
\(\ms(G)=\sum_{X\in\comp(G-\{v\})}\ms(G[V(X)\cup\{v\}])\), and so 
\begin{align*}
\ex(G)&\quad=\quad\ms(G)-\pt(G)\\
&\quad=\quad\smashoperator{\sum_{X\in\comp(G-\{v\})}}\ms(G[V(X)\cup\{v\}])-\smashoperator{\sum_{X\in\comp(G-\{v\})}}\pt(G[V(X)\cup\{v\}])\\
&\quad=\quad\smashoperator{\sum_{X\in\comp(G-\{v\})}}\ms(G[V(X)\cup\{v\}])-\pt(G[V(X)\cup\{v\}])\\
&\quad=\quad\smashoperator{\sum_{X\in\comp(G-\{v\})}}\ex(G[V(X)\cup\{v\}]).\qedhere{}
\end{align*}
\end{proof}

\begin{lemma}\label{lem:half}
  Let $G\in\mathcal{G}$ be a connected graph, and let
  \(V(G)=V_{1}\uplus{}V_{2}\)
  . Let \(c_{1}\) be the number of components of \(G[V_{1}]\) and
  \(c_{2}\) the number of components of \(G[V_{2}]\). If
  \(\ex(G[V_{1}])\geq{}k_{1}\) and $\ex(G[V_2])\geq k_2$, then
  $\ex(G)\geq k_1+k_2-\lp (c_1+c_2-1)$.
\end{lemma}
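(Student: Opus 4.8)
The plan is to build a large $\Pi$-subgraph of $G$ by gluing together the best $\Pi$-subgraphs of the two parts via the strong $\lambda$-subgraph extension property. Concretely, I would let $H_1$ and $H_2$ be largest $\Pi$-subgraphs of $G[V_1]$ and $G[V_2]$, so that $|E(H_1)| = \ms(G[V_1])$ and $|E(H_2)| = \ms(G[V_2])$. The obstacle to applying the extension property to $G$ directly with the partition $(V_1,V_2)$ is that this would require $G[V_1],G[V_2]\in\Pi$, which need not hold; only the chosen \emph{subgraphs} $H_1,H_2$ are in $\Pi$.

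To get around this, I would form the spanning subgraph $G'$ of $G$ obtained by discarding, inside each part, every edge not used by the chosen $\Pi$-subgraph: $G' = G - (E(G[V_1])\setminus E(H_1)) - (E(G[V_2])\setminus E(H_2))$. By construction $G'[V_1]=H_1\in\Pi$ and $G'[V_2]=H_2\in\Pi$, while every crossing edge of $G$ survives, so $E_{G'}(V_1,V_2)=E_G(V_1,V_2)$. Now apply strong $\lambda$-subgraph extension to $G'$ with the partition $(V_1,V_2)$ and unit weights: it yields a set $F\subseteq E_G(V_1,V_2)$ with $|F|\geq\lambda|E_G(V_1,V_2)|$ such that $H := G' - (E_G(V_1,V_2)\setminus F)$ lies in $\Pi$. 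Since $V(H)=V(G)$, the graph $H$ is a genuine $\Pi$-subgraph of $G$, and as its edge set is the disjoint union $E(H_1)\cup E(H_2)\cup F$ we get $|E(H)| = |E(H_1)|+|E(H_2)|+|F| \geq \ms(G[V_1])+\ms(G[V_2])+\lambda|E_G(V_1,V_2)|$, which lower-bounds $\ms(G)$.

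It then remains to feed in the hypotheses and simplify. Using $\ms(G[V_i]) = \pt(G[V_i])+\ex(G[V_i]) \geq \pt(G[V_i])+k_i$ together with $\pt(G[V_i]) = \lambda|E(G[V_i])| + \lp(|V_i|-c_i)$ and the decompositions $|E(G)| = |E(G[V_1])|+|E(G[V_2])|+|E_G(V_1,V_2)|$, $|V(G)| = |V_1|+|V_2|$, I would check that $\pt(G[V_1])+\pt(G[V_2])+\lambda|E_G(V_1,V_2)| = \lambda|E(G)| + \lp(|V(G)|-c_1-c_2)$. Since $G$ is connected, $\pt(G) = \lambda|E(G)| + \lp(|V(G)|-1)$, so this quantity equals $\pt(G) - \lp(c_1+c_2-1)$. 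Combining gives $\ms(G)\geq \pt(G) - \lp(c_1+c_2-1) + k_1 + k_2$, i.e.\ $\ex(G)\geq k_1+k_2-\lp(c_1+c_2-1)$, as required.

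The only genuinely delicate points are, first, that we must invoke the extension property on the edge-reduced graph $G'$ rather than on $G$ itself; keeping all crossing edges but trimming each interior down to $H_1,H_2$ is exactly what makes the hypotheses $G'[V_i]\in\Pi$ hold, and here we rely on $G'$ being a member of $\mathcal{G}$. Second is the components bookkeeping: the connected graph $G$ incurs only a single $\lp$-deficit (the $-1$), whereas reassembling it from the $c_1+c_2$ components of $G[V_1]$ and $G[V_2]$ over-counts this deficit by $c_1+c_2-1$, and this is precisely the loss term in the statement. Everything else is routine arithmetic.
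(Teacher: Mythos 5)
Your proposal is correct and follows essentially the same route as the paper: the graph $G'$ you construct is exactly the graph $(V,E(H_1)\cup E(H_2)\cup E(V_1,V_2))$ to which the paper applies the strong $\lambda$-subgraph extension property, and the Poljak--Turz\'{i}k bookkeeping with the $\lp(c_1+c_2-1)$ deficit matches the paper's computation. No gaps.
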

\begin{proof}
  Let \(E_{i}=E(G[V_{i}])\) for \(i\in\{1,2\}\). Then
  \(E(G)=E_{1}\uplus{}E_{2}\uplus{}E(V{1},V_{2})\). By definition,
  \(\pt(G_{i})=\lambda|E_{i}|+\lp(|V_{i}|-c_{i})\) for
  \(i\in\{1,2\}\), and
  \begin{align*}
    \pt(G)&\quad=\quad\lambda|E(G)|+\lp(|V(G)|-1)\\
    &\quad=\quad\lambda(|E_{1}|+|E_{2}|+|E(V_{1},V_{2})|)+\lp(|V_{1}|+|V_{2}|-1)\\
    &\quad=\quad[\lambda|E_{1}|+\lp(|V_{1}|-c_{1})]+[\lambda|E_{2}|+\lp(|V_{2}|-c_{2})]\\
    &\quad+\quad\lambda|E(V_{1},V_{2})|+\lp(c_{1}+c_{2}-1)\\
    &\quad=\quad\pt(G[V_{1}])+\pt(G[V_{2}])+\lambda|E(V_{1},V_{2})|+\lp(c_{1}+c_{2}-1).
  \end{align*}

  Let $H_i$ be a largest $\LE$-subgraph of $G[V_i]$ for $i \in
  \{1,2\}$. We apply the strong $\lambda$-subgraph extension
  property to the graph
  \((V,E(H_{1})\cup{}E(H_{2})\cup{}E(V_{1},V_{2}))\), its vertex
  partition \((V_{1},V_{2})\), and a weight function which assigns
  unit weights to all its edges. We get that there exists a
  $\LE$-subgraph $H$ of $G$ such that
  \(H=(V,E(H_{1})\uplus{}E(H_{2})\uplus{}F)\), where
  \(F\subseteq{}E(V_{1},V_{2})\) is such that
  \(|F|\geq\lambda|E(V_{1},V_{2})|\). Therefore
  \(\ms(G)\geq\ms(G[V_{1}])+\ms(G[V_{2}])+\lambda|E(V_{1},V_{2})|\). So
  we get that
  \begin{align*}
    \ex(G)&\quad=\quad\ms(G)-\pt(G)\\
    &\quad\geq\quad[\ms(G[V_{1}])+\ms(G[V_{2}])+\lambda|E(V_{1},V_{2})|]\\
    &\quad-\quad[\pt(G[V_{1}])+\pt(G[V_{2}])+\lambda|E(V_{1},V_{2})|+\lp(c_{1}+c_{2}-1)]\\
    &\quad=\quad\ex(G[V_1])+\ex(G[V_2])- \lp(c_1+c_2-1)\\
    &\quad\geq\quad{}k_{1}+k_{2}-\lp(c_1+c_2-1).\qedhere{}
\end{align*}
\end{proof}

We now prove some useful facts about strongly $\lambda$-extendible
properties which diverge on cliques. In particular, we show that
for a property $\Pi$ which diverges on cliques, $\ex(K_j)$
increases as $j$ increases; this motivated our choice of the
name. We also show that $\mak$ is necessarily a constant greater
than $0$.

\begin{lemma}\label{lem:nottoomuchless}
  Let $\ex(K_j)=a\geq\lp$ for some $j\in\mathbb{N}$. Then, for
  every almost-clique $\widetilde{K}$ with at least $j+1$
  vertices, $\ex(\widetilde{K})\geq a-\lp$.
\end{lemma}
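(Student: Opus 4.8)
The plan is to fix an arbitrary oriented and/or labelled graph $G$ with $U(G)=\widetilde{K}$ and prove directly that $\ex(G)\geq a-\lp$; since $\ex(\widetilde{K})$ is by definition the minimum of $\ex(G)$ over all such $G$, this suffices. The tool is \autoref{lem:half}. I would split $V(G)$ into $V_1\uplus V_2$ so that $G[V_1]$ has underlying graph exactly $K_j$, whence $\ex(G[V_1])\geq \ex(K_j)=a$ by the definition of $\ex(K_j)$, and so that $G[V_2]$ is a single connected piece. Because the Poljak-Turz\'{i}k bound guarantees $\ex\geq 0$ on every connected graph, I may take $k_1=a$ and $k_2=0$ in \autoref{lem:half}; provided $c_1=c_2=1$, the lemma then yields $\ex(G)\geq a+0-\lp(1+1-1)=a-\lp$, which is exactly the claim.

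The whole argument therefore reduces to exhibiting such a partition with both parts connected. Since $\widetilde{K}$ is an almost-clique on $n\geq j+1$ vertices, there is a vertex set $C$ inducing a clique with $|C|\geq n-1\geq j$, together with at most one extra vertex $v'$ adjacent to a nonempty set $A\subseteq C$ (nonempty because $\widetilde{K}$ is connected). I distinguish two cases. If $\widetilde{K}$ is itself a clique, I let $V_1$ be any $j$ of its vertices; then $G[V_1]$ and $G[V_2]$ are both cliques, hence connected, and $V_2\neq\emptyset$ since $n\geq j+1$. If $\widetilde{K}$ is not a clique, I keep $v'$ in $V_2$ and take $V_1$ to be a $j$-subset of $C$, so that $G[V_1]$ is a copy of $K_j$ and $c_1=1$; the issue is then only to keep $G[V_2]$ connected.

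The one delicate point, and the main obstacle, is guaranteeing $c_2=1$ in the non-clique case, i.e.\ choosing $V_1$ so that $v'$ retains a neighbour in $V_2$. When $n=j+1$ we have $|C|=j$, which forces $V_1=C$ and $V_2=\{v'\}$, trivially connected. When $n\geq j+2$ we have $|C|\geq j+1$, so I can pick a vertex $a\in A$ and choose $V_1$ to be a $j$-subset of $C\setminus\{a\}$ (possible since $|C\setminus\{a\}|=n-2\geq j$); this leaves $a\in V_2\cap A$, so $v'$ is adjacent to $a$ and $G[V_2]$, a clique on $C\setminus V_1$ together with $v'$, is connected. With $c_1=c_2=1$ secured in every case, \autoref{lem:half} delivers $\ex(G)\geq a-\lp$. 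The hypothesis $a\geq\lp$ is not needed for the inequality itself; it merely ensures the conclusion is non-vacuous and matches the form in which the lemma is subsequently applied.
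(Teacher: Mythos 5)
Your proof is correct and follows essentially the same route as the paper: both arguments split $V(G)$ into a set inducing an exact copy of $K_j$ (giving excess at least $a$) and a connected remainder containing the exceptional vertex of the almost-clique (giving excess at least $0$), and then apply \autoref{lem:half} with $c_1=c_2=1$. Your version merely swaps the roles of $V_1$ and $V_2$ and spells out the connectivity check that the paper leaves implicit; your closing remark that the hypothesis $a\geq\lp$ is not actually used is also accurate.
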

\begin{proof}
  Let $G\in\mathcal{G}$ be a graph such that $U(G)=\widetilde{K}$,
  where $\widetilde{K}$ is an almost-clique with at least $j+1$
  vertices. Let \(V'\) be a minimum-sized subset of
  \(V(\widetilde{K})\) such that \(\widetilde{K}-V'\) is a clique.
  Set \(V_{1}\) to be any subset of exactly \(|V(G)|-j\) vertices
  of \(G\) such that (i) \(V'\subseteq{}V_{1}\), and (ii)
  \(G[V_{1}]\) is connected. Set
  \(V_{2}=V(G)\setminus{}V_{1}\). Observe that \(G\) is connected,
  \(V(G)=V_{1}\uplus{}V_{2}\), \(G[V_{1}]\) is connected, and
  \(U(G[V_{2}])=K_{j}\). Further, \(\ex(G[V_{1}])\)
  is---trivially---at least \(0\), and \(\ex(G[V_{2}])\) is---by
  assumption---at least \(a\). So by Lemma~\ref{lem:half}, we get
  that \(\ex(G)\geq{}a-\lp\).
\end{proof}

\begin{lemma}\label{theo:increasingsequence}
  Let $\Pi$ be a strongly $\lambda$-extendible property which
  diverges on cliques, and let $j,a$ be such that
  \(\ex(K_{j})=\lp+a\), $a>0$.  Then \(\ex(K_{rj})\geq\lp+ra\)
  for each \(r\in\mathbb{N}^{+}\).  Furthermore,
  $\lim_{s\rightarrow +\infty}{\ex(K_s)}=+\infty$.
\end{lemma}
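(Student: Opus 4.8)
The plan is to prove the two assertions in turn: first establish the linear lower bound $\ex(K_{rj})\geq\lp+ra$ by induction on $r$ using Lemma~\ref{lem:half}, and then bootstrap this bound from the arithmetic progression $\{rj\}$ to all sufficiently large $s$ via Lemma~\ref{lem:nottoomuchless}, which yields the divergence.

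For the first assertion, the base case $r=1$ is exactly the hypothesis $\ex(K_j)=\lp+a$. For the inductive step ($r\geq 2$), I would fix an arbitrary (oriented/labelled) graph $G$ with $U(G)=K_{rj}$ and split its vertex set as $V(G)=V_1\uplus V_2$ with $|V_1|=(r-1)j$ and $|V_2|=j$. Then $U(G[V_1])=K_{(r-1)j}$ and $U(G[V_2])=K_j$, so the induction hypothesis gives $\ex(G[V_1])\geq\lp+(r-1)a$ while the hypothesis on $K_j$ gives $\ex(G[V_2])\geq\lp+a$. Both induced subgraphs are cliques, hence connected, so $c_1=c_2=1$ in the notation of Lemma~\ref{lem:half}. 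That lemma then yields $\ex(G)\geq[\lp+(r-1)a]+[\lp+a]-\lp(c_1+c_2-1)=\lp+ra$. Since $G$ was arbitrary with $U(G)=K_{rj}$, and $\ex(K_{rj})$ is the infimum of $\ex(G)$ over all such $G$, we conclude $\ex(K_{rj})\geq\lp+ra$.

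For the second assertion, I would fix a target $M>0$ and choose $r\in\mathbb{N}^{+}$ with $ra\geq M$. Setting $j'=rj$ and $a'=\ex(K_{j'})$, the first assertion gives $a'\geq\lp+ra>\lp$, so Lemma~\ref{lem:nottoomuchless} applies with this $j'$ and $a'$: every almost-clique on at least $j'+1$ vertices has excess at least $a'-\lp$. In particular, every clique $K_s$ with $s\geq rj+1$ (which is an almost-clique, taking the removable set empty) satisfies $\ex(K_s)\geq a'-\lp\geq ra\geq M$, and the endpoint $s=rj$ already satisfies $\ex(K_{rj})\geq\lp+ra\geq M$. Hence $\ex(K_s)\geq M$ for all $s\geq rj$, and since $M$ was arbitrary, $\lim_{s\to+\infty}\ex(K_s)=+\infty$.

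The induction is routine; the step I expect to require the most care is the transition in the second assertion from the subsequence $\{rj\}$ to all large $s$. The tempting but fruitless move is to apply Lemma~\ref{lem:nottoomuchless} to the original clique $K_j$, which only delivers the fixed, non-growing bound $\ex(K_s)\geq a$. The key is instead to feed the already-amplified clique $K_{rj}$ into the lemma, so that the additive loss of $\lp$ it incurs is swamped by the growing excess $\lp+ra$. I would also verify explicitly that $K_s$ qualifies as an almost-clique so that Lemma~\ref{lem:nottoomuchless} applies verbatim.
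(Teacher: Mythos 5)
Your proposal is correct and follows essentially the same route as the paper: the first assertion by induction on $r$ via Lemma~\ref{lem:half} applied to a partition of $K_{rj}$ into a $K_j$ and a $K_{(r-1)j}$, and the second by applying Lemma~\ref{lem:nottoomuchless} to an already-amplified clique $K_{rj}$ so that the $\lp$ loss is absorbed by the growing excess. Your explicit check that $K_s$ is an almost-clique and your handling of the infimum over orientations/labellings are slightly more careful than the paper's write-up, but the argument is the same.
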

\begin{proof}
  We prove the first part of the lemma by induction on \(r\).  The
  claim holds for $r=1$ by assumption. Suppose that the claim
  holds for some \(r\geq1\). We show that it holds for $r+1$ as
  well. Let \(G=K_{(r+1)j}\), and consider a partition of \(V(G)\)
  into two parts \(U,W\) with \(|U|=j,|W|=rj\). Note that
  \(G[U]=K_{j},G[W]=K_{rj}\). By assumption we have that
  \(\ex(G[U])=\lp+a\), and from the induction hypothesis we get
  that \(\ex(G[W])\geq\lp+ra\). Lemma~\ref{lem:half} now tells us
  that \(\ex(G)\geq\lp+(r+1)a\), and this completes the induction
  step.

  Now consider the function \(f:\mathbb{N}^{+}\to\mathbb{R}^{+}\)
  defined as \(f(r)=\ex(K_{rj})\). Our arguments above show also
  that \(f\) is an \emph{unbounded} function. Indeed,
  \(f(r+1)=\ex(K_{(r+1)j})\geq\ex(K_{rj})+\ex(K_{j})-\lp=\ex(K_{rj})+\lp+a-\lp=\ex(K_{rj})+a=f(r)+a\). We
  use this to argue that given any \(x\in\mathbb{R}^{+}\), there
  is an \(r_{x}\in\mathbb{N}^{+}\) such that
  \(\forall{}r\geq{}r_{x}{}\;.\;\ex(K_{r})>x\); this would prove
  the second part of the lemma. So let \(x\in\mathbb{R}^{+}\). We
  choose \(y\in\mathbb{N}^{+}\) such that
  \(f(y)=\ex(K_{yj})=a>x+\lp\). Since \(f\) is unbounded, such a
  choice of \(y\) exists. We set \(r_{x}=yj\), and from
  Lemma~\ref{lem:nottoomuchless} we get that
  \(\forall{}r>r_{x}\;.\;\ex(K_{r})\geq{}a-\lp>x\).
\end{proof}

\begin{lemma}\label{theo:positiveminimum}
  Let $\Pi$ be a strongly $\lambda$-extendible property which
  diverges on cliques. Then $\mak>0$.
\end{lemma}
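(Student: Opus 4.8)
The plan is to partition the graphs \(G \in AK^+\) according to whether the underlying almost-clique \(U(G)\) is ``small'' or ``large'', and to bound \(\ex(G)\) away from zero on each part by a separate argument.

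First I would dispose of the large almost-cliques using divergence on cliques. Since \(\Pi\) diverges on cliques there is a \(j\) with \(\ex(K_j) > \lp\), so by Lemma~\ref{theo:increasingsequence} we have \(\ex(K_s) \to +\infty\) as \(s \to \infty\). Hence I can fix an integer \(N\) with \(\ex(K_N) > 2\lp\); in particular \(\ex(K_N) \geq \lp\) (recall \(\lp > 0\) as \(0<\lambda<1\)), so Lemma~\ref{lem:nottoomuchless} applies with \(j = N\) and yields \(\ex(\widetilde{K}) \geq \ex(K_N) - \lp > \lp\) for every almost-clique \(\widetilde{K}\) on at least \(N+1\) vertices. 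Because \(\ex(\widetilde{K})\) is by definition the minimum of \(\ex(G)\) over all orientations and labellings \(G\) with \(U(G) = \widetilde{K}\), every \(G \in AK^+\) whose underlying almost-clique has more than \(N\) vertices satisfies \(\ex(G) \geq \ex(U(G)) > \lp > 0\).

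It then remains to handle the graphs in \(AK^+\) whose underlying almost-clique has at most \(N\) vertices. There are only finitely many almost-cliques on at most \(N\) vertices up to isomorphism, and since the label set \(L\) is finite and each edge admits only the two orientations, each such underlying graph carries only finitely many orientations and labellings. As \(\ex\) depends only on the isomorphism type of the graph, the set of values \(\{\ex(G) : G \in AK^+,\ |V(U(G))| \leq N\}\) is finite. Being a finite set of positive reals, it is either empty---in which case \(AK^+\) consists only of large almost-cliques and \(\mak \geq \lp\)---or it attains a positive minimum \(\varepsilon\), in which case \(\mak \geq \min\{\varepsilon, \lp\}\). Either way \(\mak > 0\), as required.

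The only delicate point is the finiteness argument for the small almost-cliques: one has to observe that bounding the number of vertices, together with the finiteness of the label set and of the edge orientations, leaves only finitely many graphs to inspect, so that the infimum of their positive excesses is a genuine positive minimum and cannot accumulate at \(0\). The large-clique part is then an immediate consequence of the unbounded growth established in Lemma~\ref{theo:increasingsequence}, transferred to almost-cliques through Lemma~\ref{lem:nottoomuchless}.
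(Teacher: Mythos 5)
Your proof is correct and follows essentially the same route as the paper's: partition $AK^+$ by the number of vertices of the underlying almost-clique, handle the large ones via Lemma~\ref{lem:nottoomuchless} and the small ones by finiteness of that subclass. The only (harmless) difference is that you detour through Lemma~\ref{theo:increasingsequence} to pick $N$ with $\ex(K_N)>2\cdot\lp$, whereas the paper applies Lemma~\ref{lem:nottoomuchless} directly to the witness $j$ of divergence and gets the lower bound $a=\ex(K_j)-\lp$ on the large part.
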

\begin{proof}
  Since $\Pi$ diverges on cliques, there exist
  \(j\in\mathbb{N}^{+},a\in\mathbb{R}^{+}\) such that
  \(\ex(K_{j})=\lp+a\). Then, by Lemma~\ref{lem:nottoomuchless}, for
  every graph \(G\in{}AK^{+}\) with at least $j+1$ vertices,
  \(\ex(G)\geq{}a\). Now observe that
  \(\{G\in{}AK^{+}:|V(G)|\leq{}j\}\) is a finite set, hence the
  minimum of $\ex(G)$ over this set is defined and is positive. So
  we have that
  \(\mak\geq\min(a,\min_{\{G\in{}AK^{+}:|V(G)|\leq{}j\}}\ex(G))>0\).
\end{proof}


\section{Polynomial kernel for divergence}

In this section we show that \pisub{} has a polynomial kernel, as long as $\Pi$ diverges on cliques and all cliques with at least two vertices have positive excess.

Recall the partition $\mathcal{B}_0, \mathcal{B}_1, \mathcal{B}_2, \mathcal{B}_{\ge 3}$ of the blocks of $G-S$. Since $|S|<\frac{6k}{1-\lambda}$, and the number of cut vertices in $G-S$ is bounded by the number of blocks in $G-S$, it is enough to prove upper bounds on $|\mathcal{B}_0|, |\mathcal{B}_1|, |\mathcal{B}_2|, |\mathcal{B}_
{\ge 3}|$, and $|\inn{B}|$ for every block $B$ in $G-S$.

In order to prove the main result of this section, Theorem \ref{theo:ksquaredkernel}, it is enough to bound $|\mathcal{B}_0|$, together with the number and size of all cliques with positive excess. This is because only the blocks in $\mathcal{B}_0$ may have fewer than two vertices.

We will prove bounds on $|\mathcal{B}_0|, |\mathcal{B}_1|$ and $|\mathcal{B}_{\ge 3}|$ (subject to a reduction rule), and a bound on $|\inn{B}|$ for all blocks $B$ in $G-S$. We do not give a bound on $|\mathcal{B}_2|$ directly, but we do give a bound on the number of cliques with positive excess, which is enough. The bound on $|\mathcal{B}_1|$ is used to bound the number of components in $G-S$, which is required to prove Theorem \ref{theo:ksquaredkernel}. The bounds on $|\mathcal{B}_{\ge 3}|$ and $|\inn{B}|$ are not used by the proof of Theorem \ref{theo:ksquaredkernel}, but they will be useful in the following sections.


\subsection{Bounding $|\mathcal{B}_0|$ and $|\mathcal{B}_1|$}

\begin{definition}
Let $v$ be a cut vertex of $G$ and let $X \subseteq V(G) \setminus \{v\}$ be such that $G[X]$ is a component of $G-\{v\}$ and the underlying graph of $G[X\cup\{v\}]$ is a $2$-connected almost-clique. Then we say that $G[X\cup\{v\}]$ is a {\em dangling component} with root $v$.
\end{definition}

 To bound the number of isolated and leaf blocks in $G-S$, we require the following reduction rule.
 
\begin{krule}\label{krule:zeroexcess}
 Let $G\in\mathcal{G}$ be a connected graph with at least two $2$-connected components and let $G'$ be a 
dangling component.
 Then if $\ex(G')=0$, delete $G'-\{v\}$ (where $v$ is root of $G'$) and leave $k$ the same.
\end{krule}

\begin{lemma}
 Reduction Rule \ref{krule:zeroexcess} is valid.
\end{lemma}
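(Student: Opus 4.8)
The plan is to prove that Reduction Rule \ref{krule:zeroexcess} is valid, i.e.\ that the instance $(G,k)$ is a \YES{}-instance of \APT{} if and only if the reduced instance $(G'',k)$ is, where $G''=G-(V(G')\setminus\{v\})$ is the graph obtained by deleting the interior of the dangling component $G'$. Since the parameter $k$ is unchanged, it suffices to show $\ex(G)=\ex(G'')$; then $\ex(G)\geq k \iff \ex(G'')\geq k$ and validity follows immediately.

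First I would set up the decomposition at the root $v$. Since $G'$ is a dangling component with root $v$, the vertex $v$ is a cut vertex of $G$, and $X=V(G')\setminus\{v\}$ is one of the components of $G-\{v\}$. Let $\comp(G-\{v\})=\{X\}\cup\mathcal{R}$, where $\mathcal{R}$ collects the remaining components; note $G''$ is exactly the graph induced on $\{v\}$ together with all of $\mathcal{R}$. Here I should check that $v$ remains a cut vertex of $G''$ (or handle the degenerate case where $\mathcal{R}$ consists of a single component meeting $v$): because $G$ has at least two $2$-connected components, deleting the interior of one dangling component leaves $G''$ connected and containing $v$, so the component decomposition of $G''-\{v\}$ is precisely $\mathcal{R}$.

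The main engine is \autoref{lem:cutvertex} applied at $v$ in both $G$ and $G''$. Applying it to $G$ gives
\begin{equation*}
  \ex(G)=\ex(G')+\smashoperator{\sum_{Y\in\mathcal{R}}}\ex(G[V(Y)\cup\{v\}]),
\end{equation*}
since $G[X\cup\{v\}]=G'$. Applying it to $G''$ gives
\begin{equation*}
  \ex(G'')=\smashoperator{\sum_{Y\in\mathcal{R}}}\ex(G''[V(Y)\cup\{v\}])=\smashoperator{\sum_{Y\in\mathcal{R}}}\ex(G[V(Y)\cup\{v\}]),
\end{equation*}
where the last equality holds because the subgraphs induced on $V(Y)\cup\{v\}$ are identical in $G$ and $G''$ for each $Y\in\mathcal{R}$. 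Subtracting, $\ex(G)-\ex(G'')=\ex(G')$, and the hypothesis $\ex(G')=0$ yields $\ex(G)=\ex(G'')$, as required.

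The step I expect to require the most care is the boundary/degeneracy bookkeeping rather than the core algebra: I must confirm that \autoref{lem:cutvertex} genuinely applies in $G''$, which requires $v$ to be a cut vertex of $G''$. If after the deletion $v$ is no longer a cut vertex (for instance if $\mathcal{R}$ is a single block meeting $v$ in only that vertex), the lemma still applies trivially since $\comp(G''-\{v\})$ is then a single component and the sum is a single term equal to $\ex(G'')$; but if $v$ becomes an interior vertex of a single block the equation $\ex(G'')=\sum_{Y}\ex(G[V(Y)\cup\{v\}])$ reduces to a tautology $\ex(G'')=\ex(G'')$ and no appeal to the lemma is even needed. I would therefore phrase the argument so that the decomposition at $v$ is valid regardless of whether $v$ survives as a cut vertex, and explicitly note that no edges or vertices outside $V(G')\setminus\{v\}$ are affected by the deletion, so the excesses of all the retained pieces are preserved verbatim.
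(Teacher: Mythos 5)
Your proof is correct and follows essentially the same route as the paper: both apply \autoref{lem:cutvertex} at the root $v$ to split the excess of $G$ into $\ex(G')+\ex(G'')$ and then use $\ex(G')=0$. Your version merely spells out the degenerate bookkeeping (whether $v$ remains a cut vertex of $G''$) that the paper's one-line proof leaves implicit.
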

\begin{proof}
 Let $G''$ be the graph obtained after an application of the rule. By Lemma \ref{lem:cutvertex}, $\ms(G)=\ms(G')+\ms(G'')$ and $\pt(G)=\pt(G')+\pt(G'')$, which is enough to ensure that $\ex(G)=\ex(G'')$. 
\end{proof}

\begin{lemma}
 Reduction Rule \ref{krule:zeroexcess} applies in polynomial time if $\Pi$ diverges on cliques.
\end{lemma}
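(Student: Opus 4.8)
The plan is to show that the hypothesis ``$\Pi$ diverges on cliques'' forces any dangling component with zero excess to have only constantly many vertices, so that the single potentially expensive step---deciding whether $\ex(G')=0$---needs to be carried out only on graphs whose size is bounded by a constant depending on $\Pi$ and $\lambda$ alone.

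First I would extract a constant threshold $N$ beyond which every almost-clique has strictly positive excess. Since $\Pi$ diverges on cliques, by definition there is a $j$ with $\ex(K_j)>\lp$. Applying Lemma~\ref{lem:nottoomuchless} with $a=\ex(K_j)\geq\lp$, every almost-clique $\widetilde{K}$ with at least $j+1$ vertices satisfies $\ex(\widetilde{K})\geq\ex(K_j)-\lp>0$. Setting $N=j+1$---a constant, as $\Pi$ and $\lambda$ are fixed throughout---I conclude that if $G'$ is a dangling component with $|V(G')|\geq N$, then, since $U(G')$ is a $2$-connected almost-clique on at least $N$ vertices, $\ex(G')\geq\ex(U(G'))>0$. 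Contrapositively, every dangling component $G'$ with $\ex(G')=0$ has fewer than $N$ vertices.

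Then I would describe the polynomial-time procedure. Using a standard block--cut-tree decomposition, all candidate dangling components can be enumerated in polynomial time: for each cut vertex $v$ and each component $X$ of $G-\{v\}$, test whether $G[X\cup\{v\}]$ is $2$-connected and whether its underlying graph is an almost-clique (the latter by deleting each single vertex in turn and testing for a clique). For each candidate $G'$ so found, if $|V(G')|\geq N$ the argument above certifies $\ex(G')>0$ and the rule does not apply; if $|V(G')|<N$, then $G'$ has constant size, so $\ms(G')$---and hence $\ex(G')=\ms(G')-\pt(G')$, with $\pt(G')$ given by its closed formula---can be computed by brute force, enumerating the constantly many spanning subgraphs of $G'$ and testing each for membership in $\Pi$. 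Membership is a constant-time table lookup here, since there are only finitely many labelled and/or oriented graphs on at most $N-1$ vertices. If some such $G'$ has $\ex(G')=0$ the rule is applied; otherwise we report that it does not apply. Every step runs in time polynomial in $|V(G)|$.

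The main obstacle is conceptual rather than computational: in general there is no reason to expect $\ex(G')$ to be efficiently computable, as deciding membership in $\Pi$ and maximizing over spanning subgraphs could be intractable for arbitrary graphs. The crux is therefore to confine these computations to graphs of constant size, which is exactly what divergence on cliques buys us through the threshold $N$. I would take care to record that $N$, the finite membership table for $\Pi$ on small graphs, and the brute-force enumeration all have size and running time bounded by constants depending only on the fixed property $\Pi$, the finite label set (if applicable), and $\lambda$, so that they do not affect the polynomial dependence on $|V(G)|$.
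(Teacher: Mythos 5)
Your proposal is correct and follows essentially the same route as the paper: invoke divergence on cliques to get $j$ with $\ex(K_j)>\lp$, apply Lemma~\ref{lem:nottoomuchless} to certify positive excess for any dangling component on at least $j+1$ vertices, and brute-force the constantly many remaining small cases. Your write-up merely spells out the enumeration of dangling components and the constant-size membership table in more detail than the paper does.
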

\begin{proof}
  In polynomial time it is possible to find all $2$-connected
  components of $G$ and to check which ones have an underlying
  graph which is an almost-clique.  Thus, in polynomial time we
  can find all dangling components.  Now, we claim that in
  constant time it is possible to evaluate whether their excess is
  zero. In fact, by the definition of divergence on cliques, it
  holds that $\ex(K_j)>\lp$ for some $j$. Given a subgraph $G'$
  whose underlying graph is an almost-clique, if $G'$ has at least
  $j+1$ vertices Lemma \ref{lem:nottoomuchless} ensures that
  $\ex(G')>0$. On the other hand, if $G'$ has at most $j$
  vertices, a brute force algorithm which finds the largest
  $\Pi$-subgraph of $G'$ runs in time $O(2^{j^2})$, where $j$ is a
  constant which only depends on $\Pi$.  
\end{proof}

\begin{lemma}\label{theo:danglingbound}
 Let $\Pi$ be a strongly $\lambda$-extendible property which diverges on cliques and let $G$ be a connected graph reduced under Reduction Rule \ref{krule:zeroexcess}. Then the number of 
dangling components
 is less than $\frac{k}{\mak}$, or the instance is a YES-instance.
\end{lemma}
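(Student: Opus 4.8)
The plan is to charge at least $\mak$ units of excess to each dangling component, so that if there are many of them the total excess $\ex(G)$ already reaches $k$. Write $G'_1,\dots,G'_d$ for the dangling components of $G$, with roots $v_1,\dots,v_d$, and set $X_i = V(G'_i)\setminus\{v_i\}$ for the interior of $G'_i$. Since $G$ has a dangling component it has a cut vertex, hence at least two $2$-connected components, so Reduction Rule~\ref{krule:zeroexcess} is applicable; as $G$ is reduced, every dangling component has $\ex(G'_i)>0$. Because the underlying graph of $G'_i$ is a $2$-connected almost-clique, this gives $G'_i\in AK^+$, and therefore $\ex(G'_i)\geq\mak$ by the definition of $\mak$, which is positive by Lemma~\ref{theo:positiveminimum}.

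The heart of the argument is the additivity bound $\ex(G)\geq\sum_{i=1}^{d}\ex(G'_i)$, which I would establish by induction on $d$ through a peeling argument. For the inductive step, note that $v_1$ is a cut vertex of $G$ and that $X_1$ is exactly one component of $G-\{v_1\}$. Applying Lemma~\ref{lem:cutvertex} at $v_1$, the summand coming from the component $X_1$ is precisely $\ex(G[X_1\cup\{v_1\}])=\ex(G'_1)$, while the remaining summands reassemble (via a second application of Lemma~\ref{lem:cutvertex}, or directly) into $\ex(G^*)$, where $G^*:=G-X_1$ is connected. This yields the exact identity $\ex(G)=\ex(G'_1)+\ex(G^*)$.

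The step I expect to be the main obstacle is verifying that peeling preserves the remaining dangling components, i.e.\ that $G'_2,\dots,G'_d$ are still dangling components of $G^*$. This relies on two structural observations: first, an interior vertex of a dangling component is never a cut vertex of $G$ (since $G'_i$ is $2$-connected and $X_i$ is a full component of $G-\{v_i\}$, so deleting an interior vertex keeps both $G'_i$ and its attachment to the rest of $G$ connected); and second, distinct $2$-connected blocks of a graph meet in at most one vertex. Together these imply $X_1\cap V(G'_i)=\emptyset$ for all $i\geq 2$, so deleting $X_1$ leaves each $G'_i$ intact with $X_i$ still a full component of $G^*-\{v_i\}$. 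The induction hypothesis applied to $G^*$ then gives $\ex(G^*)\geq\sum_{i=2}^{d}\ex(G'_i)$, and combined with the identity above completes the step; the base case $d=0$ is just $\ex(G)\geq 0$, which holds for any connected graph by the Poljak-Turz\'{\i}k bound.

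Finally I would combine the two bounds: $\ex(G)\geq\sum_{i=1}^{d}\ex(G'_i)\geq d\cdot\mak$. If $d\geq k/\mak$ this forces $\ex(G)\geq k$, so $(G,k)$ is a YES-instance of \APT{}; otherwise $d<k/\mak$, which is exactly the claimed bound on the number of dangling components.
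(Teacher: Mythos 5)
Your proof is correct and follows essentially the same route as the paper: both arguments charge at least $\mak$ units of excess to each dangling component (using that $G$ is reduced under Reduction Rule~\ref{krule:zeroexcess}, that dangling components lie in $AK^+$, and Lemma~\ref{theo:positiveminimum}) and then decompose $\ex(G)$ at the roots via Lemma~\ref{lem:cutvertex}. The only difference is presentational: you peel off one dangling component at a time by induction and explicitly verify that the remaining ones survive, whereas the paper deletes all interiors at once and invokes Lemma~\ref{lem:cutvertex} in a single (implicitly repeated) step.
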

\begin{proof}
 Let $B_1,\dots,B_l$ be the dangling components of $G$. Since the graph is reduced under Reduction Rule \ref{krule:zeroexcess}, $\ex(B_i)>0$ for every $1\leq i\leq l$. Since $\Pi$ diverges on cliques, Lemma \ref{theo:positiveminimum} ensures that $\mak>0$.
Let $G'=G-(\cup_{i=1}^l(\inn{(B_i)})$.

 By Lemma \ref{lem:cutvertex}, $\ex(G)=\ex(G')+\sum_{i=1}^l\ex(B_i)\geq 0+\mak l$. Then if $l\geq\frac{k}{\mak}$ the instance is a YES-instance.  
\end{proof}


\begin{theorem}\label{theo:leafbound}
 Let $\Pi$ be a strongly $\lambda$-extendible property which diverges on cliques and let $G$ be a connected graph reduced under Reduction Rule \ref{krule:zeroexcess}. If there exists $s\in S$ such that $\sum_{B\in\mathcal{B}}|N_G(\inn{B})\cap\{s\}|$ is at least $(\frac{16}{1-\lambda}+\frac{2}{\mak})k-2$, then the instance is a YES-instance.
\end{theorem}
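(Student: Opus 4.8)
The plan is to show directly that $\ex_\Pi(G) \ge k$ whenever the charge at $s$ is this large. Write $\mathcal{B}_s$ for the set of blocks $B \in \mathcal{B}$ with $s \in N_G(\inn B)$, so that $|\mathcal{B}_s| = \sum_{B \in \mathcal{B}} |N_G(\inn B) \cap \{s\}| \ge (\frac{16}{1-\lambda} + \frac{2}{\mak})k - 2$; note the interiors $\inn B$ are pairwise disjoint, so each block of $\mathcal{B}_s$ contributes a distinct interior vertex adjacent to $s$. For each $B \in \mathcal{B}_s$ the induced graph $A_B := G[\inn B \cup \{s\}]$ has an underlying graph which is an almost-clique (deleting $s$ leaves the clique on $\inn B$), and since $\inn B \ne \emptyset$ and $s$ is adjacent to it, $A_B$ contains at least one edge. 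First I would record that $\ex(A_B) > 0$: the graph contains the edge $su_B$ (excess $\lp$ on its own) and, using the hypothesis that every clique on at least two vertices has positive excess together with Lemma~\ref{lem:nottoomuchless} and Lemma~\ref{theo:positiveminimum}, $A_B \in AK^+$ and hence $\ex(A_B) \ge \mak > 0$; moreover a direct computation gives $\ex(A_B) \ge \lp$.

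The engine of the proof is that the different $A_B$ meet only in $s$. I would set $V_1 = \{s\} \cup \bigcup_{B \in \mathcal{B}_s} \inn B$ and observe that $G[V_1]$ is connected with $s$ as a cut vertex, so Lemma~\ref{lem:cutvertex} gives $\ex(G[V_1]) = \sum_{B \in \mathcal{B}_s} \ex(A_B)$, a sum of $|\mathcal{B}_s|$ positive terms. To transfer this excess to the whole graph I would apply Lemma~\ref{lem:half} to the partition $(V_1, V_2)$ with $V_2 = V(G) \setminus V_1$; since $G[V_1]$ is connected this yields $\ex(G) \ge \ex(G[V_1]) - \lp\, c_2$, where $c_2$ is the number of components of $G[V_2]$. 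Combined with the per-block lower bounds of the previous step, it then suffices to show that $\sum_{B \in \mathcal{B}_s} \ex(A_B) - \lp c_2 \ge k$.

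The crux --- and the step I expect to be hardest --- is bounding $c_2$ so that the $\lp$-per-component loss cannot cancel the $\lp$-per-block (and $\mak$-per-block) gains. Here I would use two structural facts. First, deleting the interiors $\bigcup \inn B$ from $G - S$ does not increase its number of components, because each block is a clique and its cut vertices retain the clique edge joining them; consequently a component of $G[V_2]$ that contains no vertex of $S$ must attach to the rest of $G$ through $s$ alone, i.e.\ it is (part of) a dangling structure at $s$. Second, every component of $G[V_2]$ that does meet $S$ can be charged to a distinct vertex of $S \setminus \{s\}$, and $|S| < \frac{6k}{1-\lambda}$ gives $\lp|S| < 3k$. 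Splitting $\mathcal{B}_s$ into the blocks lying in dangling structures at $s$ and the remainder, the former are handled purely by Lemma~\ref{lem:cutvertex} (treating $s$ as a cut vertex) together with Lemma~\ref{theo:danglingbound}, contributing the $\frac{2}{\mak}k$ part of the bound, while the latter feed the Lemma~\ref{lem:half} estimate and, after absorbing the $\lp|S| < 3k$ term, contribute the $\frac{16}{1-\lambda}k$ part; the additive $-2$ absorbs the ``$c_1 + c_2 - 1$'' boundary term. A pigeonhole split of the $|\mathcal{B}_s|$ blocks between the two regimes then forces one of the two estimates to reach $k$, so $(G,k)$ is a \YES{}-instance.
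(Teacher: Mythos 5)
Your overall architecture (accumulate excess at $s$ via Lemma~\ref{lem:cutvertex}, transfer it to $G$ via Lemma~\ref{lem:half}, then bound the number of components of the complement) is the same as the paper's, but there is a genuine gap at the very first step: the claim that $\ex(A_B)>0$, let alone $\ex(A_B)\geq\mak$ or $\ex(A_B)\geq\lp$, for $A_B=G[\inn{B}\cup\{s\}]$. Membership in $AK_{\Pi}^{+}$ requires \emph{proving} $\ex(A_B)>0$; it is not conferred by $A_B$ merely being an almost-clique with an edge. Divergence on cliques only guarantees $\ex(K_j)>\lp$ for \emph{some} $j$, and small almost-cliques can have excess exactly $0$. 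Concretely, for $\lambda=\tfrac12$ and a property not containing some triangle (e.g.\ \maxcut{}, or the oriented case with $\tor\notin\Pi$ in which this theorem is actually invoked via Corollary~\ref{cor:sneighborbound} inside Theorem~\ref{thm:oriented}), take $\inn{B}=\{u,v\}$ with $s$ adjacent to both: then $A_B$ is a triangle with $\ms(A_B)=2=\pt(A_B)$, so $\ex(A_B)=0$. With many such blocks your sum $\sum_{B}\ex(A_B)$ can be $0$ while the hypothesis of the theorem is satisfied, and the argument collapses. You also quietly import the hypothesis ``every clique on at least two vertices has positive excess,'' which belongs to Theorem~\ref{theo:ksquaredkernel} and is \emph{not} assumed here; and even that hypothesis would not rescue the claim, since $A_B$ is an almost-clique, not a clique.

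The paper's fix is precisely to avoid the induced subgraph: from each block $B$ with $N(s)\cap\inn{B}\neq\emptyset$ it selects a \emph{single} vertex of $N(s)\cap\inn{B}$, so that $G[U]$ is a star. A star is in $\Pi$ by inclusiveness and block additivity, so its excess is computable exactly as $\lp d$ (each edge is a block contributing $1-\lambda-\lp=\lp$), with no dependence on the excess of triangles or larger almost-cliques. Your component-counting in the last paragraph is essentially the paper's (components meeting $S\setminus\{s\}$ charged to $|S|-1$; components attached only through $s$ charged to dangling components via Lemma~\ref{theo:danglingbound}), though you omit the paper's third category --- components of $G-S$ contributing at least two chosen vertices to $U$, of which there are at most $d/2$ --- and the exchange argument showing the chosen neighbours can be assumed to realise the minimum number of components. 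But none of that matters until the per-block excess contribution is repaired, and repairing it essentially forces you back onto the star construction.
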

\begin{proof}
 Let $U=\{s\}$. For every block $B$ of $G-S$ such that $|N_G(\inn{B})\cap\{s\}|=1$, pick a vertex in $N(s)\cap\inn{B}$ and add it to $U$. Since the vertices are chosen in the interior of different blocks of $G-S$, $G[U]$ is a star and therefore it is in $\Pi$ by block additivity. By Lemma \ref{lem:cutvertex}, $\ex(G[U])=\lp d$, where $d$ is the degree of $s$ in $G[U]$. Let $c$ be the number of components of $G-U$, and assume that $U$ is constructed such that $d$ is maximal and $c$ is minimal. By Lemma \ref{lem:half}, $\ex(G)\geq\lp(d-c)$.

 We will now show that $c$ is bounded. The number of components of $G-U$ which contain a vertex of $S\setminus\{s\}$ is bounded by $(|S| - 1)<\frac{6k}{1-\lambda}-1$. In addition, the number of components of $G-S$ which contain at least two blocks from which a vertex has been added to $U$ is at most $\frac{d}{2}$.

 Now, let $T$ be a component of $G-S$ such that in the graph $G$, no vertex in $T-U$ has a neighbor in $S\setminus\{s\}$ and $|U\cap V(T)|=1$. Firstly, suppose that $T$ contains only one block $B$ of $G-S$. Let $\{v\}=U\cap V(T)$. Note that, by the current assumptions, $N(S\setminus\{s\}) \cap V(T) \subseteq\{v\}$. If $v$ is the only neighbor of $s$ in $T$, then it is a cut vertex in $G$, hence $B$ is a dangling component of $G$. If $s$ has another neighbor $v'$ in $T$ and $v$ has no neighbor in $S$ different from $s$, then $s$ is a cut vertex, therefore $G[V(B)\cup\{s\}]$ is a dangling component. Finally, if $v$ has at least two neighbors in $S$ and $s$ has at least another neighbor $v'$ in $T$, let $U'$ be the star obtained by replacing $v$ with $v'$ in $U$, and observe that $T$ is connected to $S\setminus\{s\}$ in $G-U'$, contradicting the minimality of $c$.


 Now, suppose that $T$ contains at least two blocks of $G-S$. In this case, every block except $B$ does not contain neighbors of $S$. In particular, this holds for at least one leaf block $B'$ in $T$. Hence, $B'$ is a dangling component.

 This ensures that carefully choosing the vertices of $U$ we may assume that any component of $G-U$ 
still contains a vertex of $S\setminus\{s\}$, or contains at least two blocks from which a vertex of $U$ has been chosen, or contains part of a dangling component. Hence, the number of components of $G-U$ is at most $\frac{6k}{1-\lambda}-1+\frac{d}{2}+\frac{k}{\mak}$.

 Therefore, if $d\geq(\frac{16}{1-\lambda}+\frac{2}{\mak})k-2$, then $\ex(G)\geq k$.  
\end{proof}

\begin{corollary}\label{cor:sneighborbound}
 Let $\Pi$ be a strongly $\lambda$-extendible property which diverges on cliques and let $G$ be a connected graph reduced under Reduction Rule \ref{krule:zeroexcess}. If $\sum_{s\in S}\sum_{B\in\mathcal{B}}|N_G(\inn{B})\cap\{s\}|$ is at least $((\frac{16}{1-\lambda}+\frac{2}{\mak})k-2)\frac{6k}{1-\lambda}$, then the instance is a YES-instance.
\end{corollary}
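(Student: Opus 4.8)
The plan is to reduce the statement to \autoref{theo:leafbound} by a simple averaging (pigeonhole) argument over the vertices of $S$. The quantity appearing in the hypothesis is a double sum $\sum_{s\in S}\sum_{B\in\mathcal{B}}|N_G(\inn{B})\cap\{s\}|$, and \autoref{theo:leafbound} already guarantees that $(G,k)$ is a \YES-instance as soon as one of the inner sums, for a single $s\in S$, reaches the threshold $(\frac{16}{1-\lambda}+\frac{2}{\mak})k-2$. So it suffices to show that if the total over all of $S$ is large enough, then at least one term must individually meet this threshold.

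First I would invoke \autoref{thm:GminusS} to recall that $|S| < \frac{6k}{1-\lambda}$, so the outer sum has strictly fewer than $\frac{6k}{1-\lambda}$ terms. Writing $T = (\frac{16}{1-\lambda}+\frac{2}{\mak})k-2$ for the per-vertex threshold, the hypothesis says the double sum is at least $T\cdot\frac{6k}{1-\lambda}$. By averaging, some $s\in S$ satisfies $\sum_{B\in\mathcal{B}}|N_G(\inn{B})\cap\{s\}| \geq \frac{T\cdot\frac{6k}{1-\lambda}}{|S|}$; since $|S| < \frac{6k}{1-\lambda}$, this lower bound strictly exceeds $T$, and in particular is at least $T$. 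Applying \autoref{theo:leafbound} to this $s$ then yields that the instance is a \YES-instance, as required.

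I expect no genuine obstacle here: the whole argument is a one-line pigeonhole once \autoref{theo:leafbound} and the size bound on $S$ are available. The only point requiring a little care is the bookkeeping of the strict inequality $|S| < \frac{6k}{1-\lambda}$, which is precisely what makes the averaged lower bound reach the threshold $T$ of \autoref{theo:leafbound} rather than falling just short of it; this is also why the constant $\frac{6k}{1-\lambda}$ is the natural multiplier to place in the statement of the corollary.
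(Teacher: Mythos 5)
Your proof is correct and is exactly the argument the paper intends: the corollary is stated without proof precisely because it follows from Theorem~\ref{theo:leafbound} by the averaging/pigeonhole step over $S$ together with the bound $|S|<\frac{6k}{1-\lambda}$ from Proposition~\ref{thm:GminusS}. The only (trivial) point left implicit is that the per-vertex threshold $T$ is positive for $k\geq 1$, which is needed for the averaged bound to strictly exceed $T$; this is immediate.
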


\begin{corollary}\label{cor:danglingleafbound}
 Let $\Pi$ be a strongly $\lambda$-extendible property which diverges on cliques and let $G$ be a connected graph reduced under Reduction Rule \ref{krule:zeroexcess}. Then $|\mathcal{B}_0|+|\mathcal{B}_1|\leq((\frac{16}{1-\lambda}+\frac{2}{\mak})k-2)\frac{6k}{1-\lambda}+\frac{k}{\mak}$, or the instance is a {\sc Yes}-instance.
\end{corollary}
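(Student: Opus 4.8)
The plan is to split the blocks in $\mathcal{B}_0 \cup \mathcal{B}_1$ according to whether their interior has a neighbour in $S$, and to bound each of the two parts using the two results already established in this subsection: Corollary~\ref{cor:sneighborbound} for the blocks whose interior touches $S$, and the dangling-component machinery (Lemma~\ref{theo:danglingbound}) for those whose interior does not.

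First I would bound the number of blocks $B \in \mathcal{B}_0 \cup \mathcal{B}_1$ for which some vertex of $\inn{B}$ has a neighbour in $S$. For each such block we have $\sum_{s\in S}|N_G(\inn{B})\cap\{s\}| = |N_G(\inn{B})\cap S| \geq 1$, so summing over all such blocks shows that their number is at most $\sum_{s\in S}\sum_{B\in\mathcal{B}}|N_G(\inn{B})\cap\{s\}|$. By Corollary~\ref{cor:sneighborbound}, either this quantity is below $((\frac{16}{1-\lambda}+\frac{2}{\mak})k-2)\frac{6k}{1-\lambda}$, or the instance is a \YES{}-instance.

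Next I would argue that every block in $\mathcal{B}_0 \cup \mathcal{B}_1$ whose interior has no neighbour in $S$ is a dangling component. For an isolated block $B \in \mathcal{B}_0$ we have $\inn{B} = V(B)$, an entire connected component of $G-S$; since $G$ is connected (the degenerate case where $G$ is itself a single block makes the bound immediate) this component must have an edge to $S$, so $\mathcal{B}_0$ contains no such block. For a leaf block $B \in \mathcal{B}_1$ with its unique cut vertex $q$ of $G-S$, if no vertex of $\inn{B}=V(B)\setminus\{q\}$ is adjacent to $S$, then $\inn{B}$ is joined to the rest of $G$ only through $q$; hence $q$ is a cut vertex of $G$, $G[\inn{B}]$ is a component of $G-\{q\}$, and $B = G[\inn{B}\cup\{q\}]$ is a $2$-connected clique, i.e.\ a dangling component with root $q$. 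By Lemma~\ref{theo:danglingbound}, the number of dangling components is less than $\frac{k}{\mak}$, or the instance is a \YES{}-instance.

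Adding the two bounds then gives $|\mathcal{B}_0|+|\mathcal{B}_1| \leq ((\frac{16}{1-\lambda}+\frac{2}{\mak})k-2)\frac{6k}{1-\lambda} + \frac{k}{\mak}$, unless the instance is a \YES{}-instance. The step needing the most care is the dangling-component identification: one must check that ``no interior vertex sees $S$'' genuinely forces $q$ to be a cut vertex of the full graph $G$ and that $\inn{B}$ is exactly one component of $G-\{q\}$, which both follow because $G-S$ is a forest of cliques (so $B$ is a clique, and thus an almost-clique) and $B$ is a leaf block with $q$ as its only cut vertex.
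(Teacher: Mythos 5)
Your proposal is correct and follows essentially the same route as the paper, which observes in one line that every isolated or leaf block either has a neighbour of $S$ in its interior (bounded via Corollary~\ref{cor:sneighborbound}) or is a dangling component (bounded via Lemma~\ref{theo:danglingbound}). Your write-up merely fills in the details the paper leaves implicit, in particular the verification that a leaf block with no interior neighbour in $S$ is indeed a dangling component rooted at its cut vertex.
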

\begin{proof}
 Note that every isolated or leaf block either has a neighbor of $S$ in its interior or is a dangling block. The claim follows from Lemma \ref{theo:danglingbound} and Corollary \ref{cor:sneighborbound}. 
\end{proof}

\begin{corollary}\label{cor:gminuscomponents}
  Let $\Pi$ be a strongly $\lambda$-extendible property which diverges on cliques and let $G$ be a connected graph reduced under Reduction Rule \ref{krule:zeroexcess}. Then either $G-S$ has at most $((\frac{16}{1-\lambda}+\frac{2}{\mak})k-2)\frac{6k}{1-\lambda} + \frac{k}{\mak}$ components,  or the instance is a {\sc Yes}-instance. 
\end{corollary}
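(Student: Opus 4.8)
The plan is to reduce the count of connected components of $G-S$ to the already-bounded quantity $|\mathcal{B}_0|+|\mathcal{B}_1|$ and then invoke Corollary~\ref{cor:danglingleafbound}. Recall that by Proposition~\ref{thm:GminusS} the graph $G-S$ is a forest of cliques, so each of its connected components is a tree of cliques. The whole argument rests on the observation that every such component must ``announce itself'' either as a single isolated block (contributing to $\mathcal{B}_0$) or through a leaf block (contributing to $\mathcal{B}_1$), and these contributions are disjoint across distinct components.

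Concretely, I would fix a connected component $T$ of $G-S$ and examine its block--cut tree $\mathcal{T}$. If $T$ consists of a single block, then that block contains no cut vertex of $G-S$, and hence lies in $\mathcal{B}_0$; conversely, any block of $\mathcal{B}_0$ is an entire component, so the number of single-block components equals $|\mathcal{B}_0|$. If instead $T$ contains at least two blocks, then $\mathcal{T}$ is a tree with at least two block-nodes. Since every cut vertex belongs to at least two blocks, the leaves of $\mathcal{T}$ are always blocks, and a tree on at least two nodes has at least one leaf; such a leaf block contains exactly one cut vertex of $G-S$ and therefore lies in $\mathcal{B}_1$. Mapping each multi-block component to one of its leaf blocks is injective, so the number of multi-block components is at most $|\mathcal{B}_1|$.

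Combining the two cases gives that the number of components of $G-S$ is at most $|\mathcal{B}_0|+|\mathcal{B}_1|$. By Corollary~\ref{cor:danglingleafbound}, either this sum is at most $\bigl(\bigl(\frac{16}{1-\lambda}+\frac{2}{\mak}\bigr)k-2\bigr)\frac{6k}{1-\lambda}+\frac{k}{\mak}$, or the instance is a \textsc{Yes}-instance, which yields exactly the claimed dichotomy.

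I do not expect any step to be a genuine obstacle here: the statement is essentially a bookkeeping corollary. The only point requiring a little care is the standard block--cut-tree fact that a component with at least two blocks possesses a leaf block lying in $\mathcal{B}_1$, and that isolated blocks are precisely the single-block components. Once these are stated cleanly, the bound on the number of components follows immediately and the quantitative estimate is inherited verbatim from Corollary~\ref{cor:danglingleafbound}.
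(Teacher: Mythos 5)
Your proof is correct and follows exactly the paper's argument: the paper's own one-line proof observes that every component of $G-S$ contains at least one block from $\mathcal{B}_0\cup\mathcal{B}_1$ and then invokes Corollary~\ref{cor:danglingleafbound}. You have merely spelled out the block--cut-tree details behind that observation, so there is nothing further to add.
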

\begin{proof}
 Since a component of $G-S$ contains at least one block from $\mathcal{B}_0\cup\mathcal{B}_1$, the result follows applying Corollary \ref{cor:danglingleafbound}.
\end{proof}


\subsection{Bounding blocks with positive excess}

\begin{lemma}\label{lem:positiveexcessblock}
  Let $\Pi$ be a strongly $\lambda$-extendible property which diverges on cliques. If $G-S$ contains at least $((\frac{16}{1-\lambda}+\frac{2}{\mak})k-1)\frac{6k}{\mak(1-\lambda)} + \frac{k}{(\mak)^2} + \frac{k-1}{\mak}$ blocks with positive excess, then the instance is a {\sc Yes}-instance.
\end{lemma}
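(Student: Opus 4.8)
The plan is to show that many positive-excess blocks in $G-S$ force the total excess $\ex(G-S)$ to be large, and then to transfer this lower bound to $\ex(G)$ via \autoref{lem:half}, paying only a controlled penalty for reconnecting the pieces through $S$. Concretely, the target is to establish $\ex(G)\geq k$ whenever the instance is not already disposed of by the earlier corollaries.

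First I would observe that every block of $G-S$ is a clique, since $G-S$ is a forest of cliques; hence a block $B$ with positive excess is an almost-clique with $\ex(B)>0$, i.e.\ $B\in AK^{+}$. By the definition of $\mak$ together with \autoref{theo:positiveminimum}, this gives $\ex(B)\geq\mak>0$. Next, since the excess is additive across a cut vertex (\autoref{lem:cutvertex}), a repeated application over the block-cut structure shows that for each component $T$ of $G-S$ the excess of $T$ equals the sum of the excesses of its blocks; summing over all components of $G-S$ yields $\ex(G-S)=\sum_{B}\ex(B)$, the sum ranging over all blocks of $G-S$. Blocks of zero excess (in particular single-vertex blocks) contribute nothing, so if $N$ denotes the number of positive-excess blocks, then $\ex(G-S)\geq\mak\cdot N$.

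To pass from $G-S$ to $G$, I would apply \autoref{lem:half} with the partition $V(G)=V(G-S)\uplus S$. Writing $c_1$ for the number of components of $G-S$ and $c_2$ for the number of components of $G[S]$, and using $\ex(G[S])\geq 0$, this gives
\[
\ex(G)\;\geq\;\ex(G-S)-\lp(c_1+c_2-1)\;\geq\;\mak\cdot N-\lp(c_1+c_2-1).
\]
The penalty term is controlled by the earlier bounds: if the instance is not already a {\sc Yes}-instance, then \autoref{cor:gminuscomponents} bounds $c_1$ by $((\frac{16}{1-\lambda}+\frac{2}{\mak})k-2)\frac{6k}{1-\lambda}+\frac{k}{\mak}$, while $c_2\leq|S|<\frac{6k}{1-\lambda}$.

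It then remains to substitute the hypothesised lower bound $N\geq((\frac{16}{1-\lambda}+\frac{2}{\mak})k-1)\frac{6k}{\mak(1-\lambda)}+\frac{k}{(\mak)^{2}}+\frac{k-1}{\mak}$ and verify that $\mak\cdot N-\lp(c_1+c_2-1)\geq k$; the threshold is tuned precisely so that this holds, the key simplification being $\lp\cdot\frac{6k}{1-\lambda}=3k$ applied to each contribution of $c_1,c_2$. I expect the only genuinely delicate point to be the additivity step of the second paragraph: one must apply \autoref{lem:cutvertex} inductively along the block-cut tree of each component and combine this with the componentwise additivity of $\ms$ and $\pt$ on the disconnected graph $G-S$. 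Once additivity is in hand, the remaining argument is routine constant-chasing, and in either branch (YES by \autoref{cor:gminuscomponents}, or $\ex(G)\geq k$) the instance is a {\sc Yes}-instance.
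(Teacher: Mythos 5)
Your proof is correct and follows essentially the same route as the paper's: lower-bound the total excess contributed by the positive-excess blocks via repeated use of Lemma~\ref{lem:cutvertex} and the definition of $\mak$, then transfer this to $\ex(G)$ via Lemma~\ref{lem:half}, controlling the component penalty with Corollary~\ref{cor:gminuscomponents} and $|S|<\frac{6k}{1-\lambda}$. The only cosmetic difference is the choice of bipartition --- you use $(V(G-S),S)$, whereas the paper splits off only the components of $G-S$ containing a positive-excess block and lumps the remaining components together with $S$ --- but both choices give exactly the same numerical bound.
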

\begin{proof}
 Let $d$ be the number of blocks in $G-S$ with positive excess, and let $G'$ be the union of all components of $G-S$ which contain a block with positive excess. Observe that by Corollary \ref{cor:gminuscomponents}, we may assume $G'$ has at most $((\frac{16}{1-\lambda}+\frac{2}{\mak})k-2)\frac{6k}{1-\lambda} + \frac{k}{\mak}$ components. Observe that by repeated use of Lemma \ref{lem:cutvertex}, $\ex(G') \ge d\mak$. Now let $G''=G - G'$, and observe that $G''$ has at most $|S|\le \frac{6k}{1-\lambda}$ components.
Then by Lemma \ref{lem:half}, $\ex(G) \ge d\mak - \lp (((\frac{16}{1-\lambda}+\frac{2}{\mak})k-2)\frac{6k}{1-\lambda} + \frac{k}{\mak}+\frac{6k}{1-\lambda}-1)$.
Therefore if $d \ge ((\frac{16}{1-\lambda}+\frac{2}{\mak})k-1)\frac{6k}{\mak(1-\lambda)} + \frac{k}{(\mak)^2} + \frac{k-1}{\mak}$, the instance is a {\sc Yes}-instance. 
\end{proof}

%

\subsection{Bounding $|\mathcal{B}_{\geq 3}|$}

The following lemma is not required to prove the last theorem in this section, but it will be used in Section 6.

\begin{lemma}\label{lem:nonleafblocks}
 Let $G\in\mathcal{G}$ be a connected graph and $S\subseteq V(G)$ be such that $G-S$ is a forest of cliques. Then $|\mathcal{B}_{\geq 3}|\leq 3|\mathcal{B}_1|$.
\end{lemma}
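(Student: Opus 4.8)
The plan is to argue component by component using the block--cut tree. Every block that lies in $\mathcal{B}_{\geq 3}$ or $\mathcal{B}_1$ belongs to a component of $G-S$ containing at least two blocks (a single-block component contributes only to $\mathcal{B}_0$), so it suffices to prove $|\mathcal{B}_{\geq 3}\cap C|\leq 3|\mathcal{B}_1\cap C|$ for each such component $C$ and then sum over components. For a fixed $C$, let $T$ be its block--cut tree: one node per block of $C$, one node per cut vertex of $G-S$ lying in $C$, with a block node $B$ adjacent to a cut-vertex node $c$ exactly when $c\in V(B)$. Then $T$ is a tree; the degree $\deg_T(B)$ of a block node equals the number of cut vertices it contains; and, since in a tree of cliques two blocks meet in at most one cut vertex, the number of block neighbours of $B$ is exactly $\sum_{c\in V(B)\cap Q}(\deg_T(c)-1)$.

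First I would record the two structural facts that set up the counting. Each cut-vertex node has degree at least $2$ (a cut vertex lies in at least two blocks), so the leaves of $T$ are precisely the block nodes of degree $1$, which are exactly the blocks of $\mathcal{B}_1\cap C$; write $\ell=|\mathcal{B}_1\cap C|\geq 2$. The handshake identity $\sum_v(\deg_T(v)-2)=-2$ then yields $\sum_{v:\deg_T(v)\geq 3}(\deg_T(v)-2)=\ell-2$, where the sum ranges over \emph{all} high-degree nodes, block and cut-vertex alike. This single quantity $\ell-2$ is the ``budget'' against which I will charge every $\mathcal{B}_{\geq 3}$ block.

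Next I would split $\mathcal{B}_{\geq 3}\cap C$ into two types. Type~A consists of blocks with at least three cut vertices; these are exactly the block nodes of $\deg_T\geq 3$, so their number is at most $\sum_{B\text{ type A}}(\deg_T(B)-2)$ (each summand is at least $1$). Type~B consists of blocks with exactly two cut vertices but at least three block neighbours; such a $B$ has $\deg_T(B)=2$, and its block-neighbour count $(\deg_T(c_1)-1)+(\deg_T(c_2)-1)\geq 3$ forces $\deg_T(c_1)+\deg_T(c_2)\geq 5$, so at least one of its two incident cut vertices has $\deg_T\geq 3$. The main obstacle is precisely these Type~B blocks: they are invisible to the naive ``count the branch nodes of the tree'' argument because their own tree-degree is only $2$. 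I would handle them by charging each Type~B block to one incident high-degree cut vertex $c$; since at most $\deg_T(c)$ blocks are incident to $c$, the number of Type~B blocks is at most $\sum_{c:\deg_T(c)\geq 3}\deg_T(c)$.

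Finally I would combine the two estimates against the common budget, and this is where the factor $3$ is born. Using the elementary inequality $d\leq 3(d-2)$ for integers $d\geq 3$ (tight exactly at $d=3$), I get $\sum_{c:\deg_T(c)\geq 3}\deg_T(c)\leq 3\sum_{c:\deg_T(c)\geq 3}(\deg_T(c)-2)$, while the Type~A bound is trivially at most $3\sum_{B\text{ type A}}(\deg_T(B)-2)$. Adding these, every contribution is dominated by $3$ times its share of the budget, so $|\mathcal{B}_{\geq 3}\cap C|\leq 3\sum_{v:\deg_T(v)\geq 3}(\deg_T(v)-2)=3(\ell-2)\leq 3\ell$. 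Summing over all multi-block components then gives $|\mathcal{B}_{\geq 3}|\leq 3|\mathcal{B}_1|$, as claimed. Note the argument uses only that $G-S$ is a forest of cliques, consistent with the fact that the statement is purely combinatorial and does not invoke $\Pi$.
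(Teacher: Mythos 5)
Your proof is correct, and it takes a genuinely different route from the one in the paper. The paper argues by induction on the number of blocks: it removes a leaf block, tracks how $|\mathcal{B}_1|$ and $|\mathcal{B}_{\geq 3}|$ change, and splits into cases according to how many block neighbours the removed block had (with further subcases when the unique neighbour is itself in $\mathcal{B}_{\geq 3}$, where a second block must be peeled off). You instead work globally on the block--cut tree of each multi-block component: the identity $\sum_{v:\deg_T(v)\geq 3}(\deg_T(v)-2)=\ell-2$ gives a budget, Type~A blocks (three or more cut vertices) are charged to their own high-degree tree node, Type~B blocks (two cut vertices but three or more block neighbours) are charged to an incident cut-vertex node of tree-degree at least $3$, and the inequality $d\leq 3(d-2)$ for $d\geq 3$ closes the argument. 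All the supporting facts you use are sound: two blocks share at most one vertex, so the block-neighbour count of $B$ is exactly $\sum_{c\in V(B)\cap Q}(\deg_T(c)-1)$; cut-vertex nodes have degree at least $2$, so the leaves of $T$ are exactly the blocks of $\mathcal{B}_1\cap C$; and Types A and B together exhaust $\mathcal{B}_{\geq 3}\cap C$. Your approach makes the origin of the constant $3$ transparent (it is exactly the tightness of $d\leq 3(d-2)$ at $d=3$) and in fact yields the slightly stronger per-component bound $|\mathcal{B}_{\geq 3}\cap C|\leq 3|\mathcal{B}_1\cap C|-6$; the paper's induction avoids the block--cut-tree formalism at the cost of a more intricate case analysis. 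Either proof suffices for the way the lemma is used later.
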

\begin{proof}
 The proof is by induction on $|\mathcal{B}|$. We may assume that $G-S$ is connected, otherwise we can prove the bound separately for every component. If $|\mathcal{B}|=1$, then $|\mathcal{B}_{\geq 3}|=0$ and the bound trivially holds. Suppose now that $|\mathcal{B}|=l+1\geq 2$ and that the bound holds for every tree of cliques with at most $l$ blocks. Let $B\in\mathcal{B}$ be a leaf block and let $v$ be its root. Let $G'=G-(V(B)\setminus\{v\})$. $G'-S$ is a tree of cliques with at most $l$ blocks, so by induction hypothesis $|\mathcal{B}_{\geq 3}'|\leq 3|\mathcal{B}_1'|$. Now, note that only block neighbors of $B$ can be influenced by the removal of $B$: in other words, if a block $B'$ is not a block neighbor of $B$ and $B'\in\mathcal{B}_i$, then $B'\in\mathcal{B}_i'$ for every $i\in\{1,2,\geq 3\}$.

 We distinguish three cases. Recall that $Q$ is the set of
cutvertices of $G - S$. Let $Q'$ be the set of cutvertices of $G' - S$.

\noindent{\bf Case 1 }({\small \em $B$ has at least three block neighbors){\bf :}} In this case it holds that $Q=Q'$, which ensures that the removal of $B$ does not increase the number of leaf blocks, that is, $|\mathcal{B}_1'|=|\mathcal{B}_1|-1$. In addition, if a block neighbor $B'$ of $B$ is in $\mathcal{B}_{\geq 3}$, then it is in $\mathcal{B}_{\geq 3}'$, which means that $|\mathcal{B}_{\geq 3}'|=|\mathcal{B}_{\geq 3}|$. Therefore in this case, using the induction hypothesis it follows that $|\mathcal{B}_{\geq 3}|=|\mathcal{B}_{\geq 3}'|\leq 3|\mathcal{B}_1'|\leq 3|\mathcal{B}_1|-3$.

\medskip

\noindent{\bf Case 2 }({\small \em $B$ has two block neighbors){\bf :}} As in the previous case $Q=Q'$, hence $|\mathcal{B}_1'|=|\mathcal{B}_1|-1$. On the other hand, if a block neighbor $B'$ of $B$ is in $\mathcal{B}_{\geq 3}$, it could be the case that $B'$ is in $\mathcal{B}_2'$. Therefore, $|\mathcal{B}_{\geq 3}'|\geq|\mathcal{B}_{\geq 3}|-2$. Using the induction hypothesis it follows that $|\mathcal{B}_{\geq 3}|\leq|\mathcal{B}_{\geq 3}'|+2\leq 3|\mathcal{B}_1'|+2\leq 3|\mathcal{B}_1|$.

\medskip

\noindent{\bf Case 3 }({\small \em $B$ has exactly one block neighbor){\bf :}} Let $B'$ be the only block neighbor of $B$. Again, we distinguish three cases. If $B'\in\mathcal{B}_1$, then $B$ and $B'$ are the only blocks of $G-S$ and $|\mathcal{B}_{\geq 3}|=0$, therefore the bound trivially holds. If $B'\in\mathcal{B}_2$, then $B'$ is a leaf block in $G'-S$, hence $|\mathcal{B}_1|=|\mathcal{B}_1'|$ and $|\mathcal{B}_{\geq 3}|=|\mathcal{B}_{\geq 3}'|$: the bound follows using the induction hypothesis.

Lastly, let $B'\in\mathcal{B}_{\geq 3}$. If $|V(B')\cap Q|\geq 3$, then $|\mathcal{B}_{\geq 3}'|\geq|\mathcal{B}_{\geq 3}|-1$ and $|\mathcal{B}_1'|=|\mathcal{B}_1|-1$: therefore, by induction hypothesis, $|\mathcal{B}_{\geq 3}|\leq |\mathcal{B}_{\geq 3}'|+1\leq 3|\mathcal{B}_1'|+1\leq 3|\mathcal{B}_1|$. Otherwise, $|V(B')\cap Q|=2$ and $B'$ is a leaf block in $G'-S$. In this case, $|\mathcal{B}_1'|=|\mathcal{B}_1|$ and $|\mathcal{B}_{\geq 3}'|=|\mathcal{B}_{\geq 3}|-1$. Now, consider the graph $G''=G'-(V(B')\setminus\{v'\})$, where $v'$ is the root of $B'$ in $G'-S$. Removing $B'$ from $G'$ corresponds either to case 1 or 2, hence it holds that $|\mathcal{B}_1''|=|\mathcal{B}_1'|-1$ and $|\mathcal{B}_{\geq 3}''|\geq |\mathcal{B}_{\geq 3}'|-2$. Therefore, using the induction hypothesis on $G''-S$ (which is a tree of cliques with $l-1$ blocks) it follows that $|\mathcal{B}_{\geq 3}|=|\mathcal{B}_{\geq 3}'|+1\leq|\mathcal{B}_{\geq 3}''|+3\leq 3|\mathcal{B}_1''|+3=3|\mathcal{B}_1'|=3|\mathcal{B}_1|$, which concludes the proof.  
\end{proof}

\begin{corollary}\label{cor:nonpathblocksbound}
 Let $\Pi$ be a strongly $\lambda$-extendible property which diverges on
cliques and let $G$ be a connected graph reduced under Reduction Rule
\ref{krule:zeroexcess}. Then
$|\mathcal{B}_0|+|\mathcal{B}_1|+|\mathcal{B}_{\geq 3}|\leq
4(((\frac{16}{1-\lambda}+\frac{2}{\mak})k-2)\frac{6k}{1-\lambda}+\frac{k}{\mak})$,
or the instance is a YES-instance.
\end{corollary}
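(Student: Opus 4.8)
The plan is to obtain this bound as a direct composition of the two facts already established in this section: the bound on $|\mathcal{B}_0|+|\mathcal{B}_1|$ from Corollary~\ref{cor:danglingleafbound}, and the structural bound $|\mathcal{B}_{\geq 3}|\leq 3|\mathcal{B}_1|$ from Lemma~\ref{lem:nonleafblocks}. It is convenient to write $B$ for the quantity
\[
B=\left(\left(\frac{16}{1-\lambda}+\frac{2}{\mak}\right)k-2\right)\frac{6k}{1-\lambda}+\frac{k}{\mak}
\]
appearing in Corollary~\ref{cor:danglingleafbound}, so that the target bound is exactly $4B$.

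First I would apply Corollary~\ref{cor:danglingleafbound}. This yields a dichotomy: either the instance is a \textsc{Yes}-instance, in which case we are immediately done, or else $|\mathcal{B}_0|+|\mathcal{B}_1|\leq B$. Assume the latter. In particular, since $|\mathcal{B}_1|$ is one of the two summands, we have $|\mathcal{B}_1|\leq|\mathcal{B}_0|+|\mathcal{B}_1|\leq B$.

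Next I would invoke Lemma~\ref{lem:nonleafblocks} to bound $|\mathcal{B}_{\geq 3}|$ in terms of $|\mathcal{B}_1|$. Its hypotheses are simply that $G$ is connected and that $G-S$ is a forest of cliques; both hold in our setting, the latter being guaranteed by the structure produced in Proposition~\ref{thm:GminusS}. Hence $|\mathcal{B}_{\geq 3}|\leq 3|\mathcal{B}_1|\leq 3B$. Adding the two estimates gives
\[
|\mathcal{B}_0|+|\mathcal{B}_1|+|\mathcal{B}_{\geq 3}|\;\leq\;B+3B\;=\;4B,
\]
which is precisely the claimed bound.

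There is no genuine obstacle here: the corollary is a pure aggregation of results proved earlier in the section, and the only point requiring verification is that Lemma~\ref{lem:nonleafblocks} is applicable, which follows from the forest-of-cliques structure of $G-S$. The mild looseness in the final expression---the factor $4$ multiplies the entire bound $B$, including its $\frac{k}{\mak}$ term---is simply the cost of bounding $|\mathcal{B}_1|$ by the full quantity $B$ rather than by a sharper estimate, and is immaterial for the asymptotic $\cO(k^{2})$ target.
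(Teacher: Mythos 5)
Your proof is correct and follows exactly the route the paper takes: the paper's own one-line justification is that the bound follows from Corollary~\ref{cor:danglingleafbound} combined with Lemma~\ref{lem:nonleafblocks}, which is precisely your aggregation $|\mathcal{B}_0|+|\mathcal{B}_1|\leq B$ and $|\mathcal{B}_{\geq 3}|\leq 3|\mathcal{B}_1|\leq 3B$. Your write-up just makes the arithmetic and the applicability of Lemma~\ref{lem:nonleafblocks} explicit.
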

\begin{proof}
 The bound follows from Corollary \ref{cor:danglingleafbound} and Lemma
\ref{lem:nonleafblocks}.
\end{proof}

\subsection{Bounding $|\inn{B}|$}
\begin{lemma}\label{lem:int}
 Let $\Pi$ be a strongly $\lambda$-extendible property which diverges on cliques,
and let $j,a$ be such that $\ex(K_j)=\frac{1-\lambda}{2}+a$, $a>0$.
If $|\inn{B}|\geq \lceil \frac{4k}{a}+ \frac{1-\lambda}{2a}\rceil j$ for any block $B$ of $G-S$, then the instance is a {\sc Yes}-instance. 
\end{lemma}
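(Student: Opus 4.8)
The plan is to exhibit inside the block \(B\) a large clique whose excess alone forces \(\ex(G)\ge k\). Since \(G-S\) is a forest of cliques, the block \(B\) is a clique, and hence its interior \(\inn{B}\) induces a complete subgraph. Writing \(r=\lceil \frac{4k}{a}+\frac{1-\lambda}{2a}\rceil\) and noting that \(\frac{1-\lambda}{2a}=\lp/a\), the hypothesis gives \(|\inn{B}|\ge rj\), so I can pick a set \(W\subseteq\inn{B}\) with \(|W|=rj\). As \(G[W]\) is a (possibly oriented/labelled) graph with \(U(G[W])=K_{rj}\), the definition of \(\ex(K_{rj})\) together with Lemma~\ref{theo:increasingsequence} yields \(\ex(G[W])\ge\ex(K_{rj})\ge\lp+ra\ge\lp+4k+\lp=4k+2\lp\).

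Next I would apply Lemma~\ref{lem:half} to the partition \(V_1=W\), \(V_2=V(G)\setminus W\). Here \(G[V_1]=G[W]\) is a clique, so its component count \(c_1=1\), and \(\ex(G[V_2])\ge 0\) since the Poljak-Turz\'{i}k bound holds on each component of \(G[V_2]\); thus I may take \(k_2=0\). Lemma~\ref{lem:half} then gives \(\ex(G)\ge (4k+2\lp)+0-\lp(1+c_2-1)=4k+2\lp-\lp c_2\), where \(c_2=|\comp(G-W)|\).

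The main step is to bound \(c_2\), and I claim \(c_2\le|S|+1\). A component of \(G-W\) that meets \(S\) accounts for at least one vertex of \(S\), so there are at most \(|S|\) of these. For a component \(D\) of \(G-W\) avoiding \(S\): every vertex of \(D\) lies in \(V(G-S)\) and, since no vertex of \(W\) lies in \(S\), \(D\) can have no edge to \(S\) in \(G\) (such an edge would place its \(S\)-endpoint in \(D\)). Hence \(D\) is a connected component of \((G-S)-W\) having no neighbour in \(S\). Now \(W\) is contained in a single block \(B\) of a single tree-of-cliques component \(T_0\) of \(G-S\); every other component \(T\ne T_0\) of \(G-S\) is untouched by the deletion of \(W\) and retains its edge to \(S\) (which exists because \(G\) is connected), so it is not \(S\)-free. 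Finally, because \(B\) is a clique and \(W\subseteq\inn{B}\) omits every cut vertex of \(G-S\) lying in \(B\), the set \(V(B)\setminus W\) still contains all those cut vertices and keeps \(T_0-W\) connected; thus \(T_0-W\) contributes at most one component, which is the only candidate for an \(S\)-free component. This gives \(c_2\le|S|+1<\frac{6k}{1-\lambda}+1\).

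Since \(\frac{6k}{1-\lambda}=3k/\lp\), we have \(\lp c_2<3k+\lp\), and therefore \(\ex(G)>4k+2\lp-(3k+\lp)=k+\lp>k\). Hence \(\ex(G)\ge k\) and \((G,k)\) is a \YES-instance. The only genuinely delicate point is the component count \(c_2\); everything else is bookkeeping built on Lemmas~\ref{lem:half} and \ref{theo:increasingsequence}. Verifying that deleting the interior of a single clique block creates at most one new \(S\)-free component---because the surviving cut vertices keep the rest of \(T_0\) connected---is where the forest-of-cliques structure of \(G-S\) is essential.
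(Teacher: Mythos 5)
Your proof is correct and follows essentially the same route as the paper: lower-bound the excess of a large clique inside $\inn{B}$ via Lemma~\ref{theo:increasingsequence}, apply Lemma~\ref{lem:half}, and bound the number of components of the complement by $|S|+1$ using the forest-of-cliques structure. The only cosmetic difference is that you carve out an exact subset $W$ of size $rj$ (avoiding Lemma~\ref{lem:nottoomuchless}), whereas the paper works with all of $\inn{B}$ and invokes Lemma~\ref{lem:nottoomuchless} to absorb the overshoot.
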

\begin{proof}
 Note that $G-\inn{B}$ has at most $\frac{6k}{1-\lambda}+1$ components, since every component of $G-S$ which does not contain $B$ still has an edge to a vertex of $S$, while the only component of $G-S$ that could be disconnected from $S$ is the one containing $B$. Therefore, if $\ex(\inn{B})\geq\lp(\frac{6k}{1-\lambda}+1) + k = 4k+\lp$, Lemma \ref{lem:half} ensures that we have a {\sc Yes}-instance. 

By Lemma \ref{theo:increasingsequence}, if $r$ is an integer such that $r \ge \frac{4k}{a} + \frac{1-\lambda}{2a}$, then $\ex(K_{rj}) \geq 4k + (1-\lambda)$. This shows that if $|\inn{B}|= rj$ then $\ex(\inn{B}) \geq 4k + (1-\lambda)$.
Furthermore, by Lemma \ref{lem:nottoomuchless}, if $|\inn{B}|\ge rj$ then $\ex(\inn{B})\geq 4k + \lp$. 
Thus, if  $|\inn{B}|\geq\lceil \frac{4k}{a}+ \frac{1-\lambda}{2a}\rceil j$  we have a {\sc Yes}-instance. 
\end{proof}

\subsection{Combined kernel}

\begin{theorem}\label{theo:ksquaredkernel}
 Let $\Pi$ be a strongly $\lambda$-extendible property which diverges on
cliques, and suppose $\ex(K_i)>0$ for all $i\ge 2$.
Then \pisub{} has a kernel with at most \(\cO(k^{2})\) vertices.
\end{theorem}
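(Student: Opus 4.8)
The plan is to combine the bounds established in the previous subsections into a single kernel bound. Recall that to bound the total number of vertices in $G-S$, it suffices to bound (i) the number of blocks, and (ii) the size of each block, and then add the bound $|S|<\frac{6k}{1-\lambda}$. The hypothesis $\ex(K_i)>0$ for all $i\ge 2$ means that \emph{every} block with at least two vertices has positive excess, so the only blocks that can contain just a single vertex are those in $\mathcal{B}_0$. This is exactly the simplification noted in the section preamble: rather than bounding $|\mathcal{B}_2|$ separately, I can bound the total number of blocks with positive excess using Lemma~\ref{lem:positiveexcessblock}, and bound $|\mathcal{B}_0|$ using Corollary~\ref{cor:danglingleafbound}.

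First I would apply Reduction Rule~\ref{krule:zeroexcess} exhaustively; since $\Pi$ diverges on cliques, each application runs in polynomial time and is valid, so we may assume $G$ is reduced. If at any point a lemma or corollary declares the instance a {\sc Yes}-instance, we simply output a trivial {\sc Yes}-instance of constant size, so henceforth I assume none of those thresholds is met. Under this assumption, Corollary~\ref{cor:danglingleafbound} bounds $|\mathcal{B}_0|$ by $\cO(k^2)$ (using $\mak>0$ from Lemma~\ref{theo:positiveminimum}, a constant depending only on $\Pi$), and Lemma~\ref{lem:positiveexcessblock} bounds the number of blocks with positive excess by $\cO(k^2)$. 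Since every block not in $\mathcal{B}_0$ has at least two vertices and hence positive excess, these two bounds together cap the total number of blocks in $G-S$ by $\cO(k^2)$, and hence the number of cut vertices of $G-S$ as well.

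Next I would bound the interior size of each block. By hypothesis $\ex(K_j)=\lp+a$ for some $j$ with $a>0$ (taking $j=2$ if necessary, or any $j$ witnessing divergence), so Lemma~\ref{lem:int} gives $|\inn{B}|=\cO(k)$ for every block $B$ of $G-S$. The total number of vertices in $G-S$ is then at most the number of cut vertices plus the sum of interior sizes over all blocks; since a block on $b$ vertices contributes at most $b$ vertices and the interiors are disjoint, $|V(G-S)|$ is bounded by (number of cut vertices) $+$ (number of blocks)$\times\max_B|\inn{B}|$, which is $\cO(k^2)\cdot\cO(k)$. This naively gives $\cO(k^3)$, so the key refinement is to observe that for blocks with at least two vertices, the positive-excess bound already controls how much total size they can carry: more precisely, I would sum the interior-size bound $\cO(k)$ only over the $\cO(k^2)$ blocks and argue that, because each nontrivial block has excess at least $\mak$ and the total excess is below $k$ (else {\sc Yes}), the number of blocks is in fact $\cO(k^2)$ independent of the per-block size, so the product of block-count and per-block-size bounds must be reconciled.

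The main obstacle is precisely getting the exponent down to $k^2$ rather than $k^3$: a crude product of an $\cO(k^2)$ block bound and an $\cO(k)$ interior bound overshoots. The resolution is that Lemma~\ref{lem:positiveexcessblock} bounds the \emph{number} of positive-excess blocks by $\cO(k^2)$, while Lemma~\ref{lem:int} bounds each interior by $\cO(k)$, but one cannot have both many large blocks and small total excess; I would instead argue that $\sum_B\ex(B)<k$ (otherwise Lemma~\ref{lem:cutvertex} gives a {\sc Yes}-instance), and since a block of interior size $s$ has excess growing with $s$ via Lemma~\ref{theo:increasingsequence}, the total interior size summed over all nontrivial blocks is itself $\cO(k)$ up to the constant $\mak$, leaving the $\cO(k^2)$ blocks of $\mathcal{B}_0$ (each of bounded size) as the dominant term. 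Combining $|S|=\cO(k)$, $|V(G-S)|=\cO(k^2)$ yields the claimed $\cO(k^2)$-vertex kernel.
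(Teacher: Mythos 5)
Your overall architecture matches the paper's: bound the single-vertex blocks (necessarily in $\mathcal{B}_0$) via Corollary~\ref{cor:danglingleafbound}, bound the \emph{number} of positive-excess blocks via Lemma~\ref{lem:positiveexcessblock}, and then control the total size of large blocks by exploiting the fact that excess grows linearly with clique size (Lemma~\ref{theo:increasingsequence}), rather than multiplying a block-count bound by a per-block size bound. The paper implements this last step by writing each large block's size as $rj+l$, summing $d=\sum_B r$, and showing $d=\cO(k^2)$.

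However, the step you lean on to close the argument is wrong as stated. You claim $\sum_B\ex(B)<k$ ``otherwise Lemma~\ref{lem:cutvertex} gives a {\sc Yes}-instance.'' Lemma~\ref{lem:cutvertex} only decomposes $\ex$ across a cut vertex of $G$ itself; the blocks of $G-S$ are attached to the rest of $G$ through the vertices of $S$, not through cut vertices of $G$, so that lemma does not let you transfer $\sum_B\ex(B)$ to $\ex(G)$. The correct tool is Lemma~\ref{lem:half} applied to the partition into (the components of $G-S$ carrying the large blocks) and (everything else), and this costs a penalty of $\lp(c_1+c_2-1)$ where $c_1$ is the number of such components --- which by Corollary~\ref{cor:gminuscomponents} can be as large as $\Theta(k^2)$. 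Consequently the right threshold is $\sum_B\ex(B)=\cO(k^2)$, not $<k$, and the total number of vertices in nontrivial blocks comes out as $\cO(k^2)$, not $\cO(k)$ as you assert. Fortunately $\cO(k^2)$ is all the theorem needs, so the gap is repairable by exactly the accounting the paper performs (it bounds $d$ by a quantity of order $k^2$ dominated by the component-penalty term and concludes $|V'''|\le 2dj$); but as written, your claim that the total interior size is $\cO(k)$ and its justification via Lemma~\ref{lem:cutvertex} alone would not survive scrutiny. A smaller slip: isolated blocks in $\mathcal{B}_0$ are not ``each of bounded size'' --- they can be large cliques and must be folded into the same size-versus-excess accounting (or bounded by Lemma~\ref{lem:int}); only the single-vertex ones are handled purely by the count from Corollary~\ref{cor:danglingleafbound}.
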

\begin{proof}
 Let $j\in\mathbb{N}$ be such that $\ex(K_j)=\lp+a$, where $a>0$. Let
$V(G-S)=V'\cup V''\cup V''$, where $V'$ is the set of vertices contained in blocks with exactly one vertex,
$V''$ is the set of vertices contained in
blocks with between $2$ and $j-1$ vertices and $V'''$ is the set of vertices
contained in blocks with at least $j$ vertices (note that in general
$V''\cap V'''\neq\emptyset$). 
Observe that the blocks containing $V'$ are isolated blocks, therefore by Corollary
\ref{cor:danglingleafbound} there exists a constant $c_1$ depending only on $\Pi$ such that $|V'|\leq c_1k^2$,
 or the instance is a YES-instance.
By Lemma \ref{lem:positiveexcessblock}, there
exists a constant $c_2$ depending only on $\Pi$ such that $|V''|\leq
c_2(j-1)k^2$, or the instance is a YES-instance.

 Now, let $\mathcal{B}''$ be the set of blocks of $G-S$ which contain at
least $j$ vertices. For every block $B\in\mathcal{B}''$, let $rj+l$ be
the number of its vertices, where $0\leq l<j$. Note that, by Lemma
\ref{theo:increasingsequence} and Lemma \ref{lem:half}, $\ex(B)\geq ra$.
Let $d=\sum_{B\in\mathcal{B}''}r$ and let $G''$ be the union of all
components of $G-S$ which contain a block in $\mathcal{B}''$. By
Corollary \ref{cor:gminuscomponents}, we may assume that $G''$ has at
most $((\frac{16}{1-\lambda}+\frac{2}{\mak})k-2)\frac{6k}{1-\lambda} +
\frac{k}{\mak}$ components. Furthermore, by repeated use of Lemma
\ref{lem:cutvertex}, we get that $\ex(G'')\geq da$. Observe that $G-G''$
has at most $|S|\leq \frac{6k}{1-\lambda}$ components: then, by Lemma
\ref{lem:half}, $\ex(G)\geq da-\lp
(((\frac{16}{1-\lambda}+\frac{2}{\mak})k-2)\frac{6k}{1-\lambda} +
\frac{k}{\mak}+\frac{6k}{1-\lambda}-1)$. Therefore if $d \ge
((\frac{16}{1-\lambda}+\frac{2}{\mak})k-1)\frac{6k}{a(1-\lambda)} +
\frac{k}{(a\mak)} + \frac{k-1}{a}$, the instance is a {\sc Yes}-instance.
Otherwise, $|V'''|\leq 2dj\leq c_3jk^2$, where $c_3$ is a constant which
depends only on $\Pi$, which means that $|V(G)|\leq
\frac{6k}{1-\lambda}+(c_1+c_2(j-1)+c_3j)k^2$.
\end{proof}

\newcommand{\tor}{\stackrel{\rightarrow}{K}_3}
\newcommand{\tuo}{\stackrel{\nrightarrow}{K}_3}

\section{Kernel when $\lambda\neq\frac{1}{2}$ or $K_3\in\Pi$}



\begin{lemma}\label{lem:exK3}
 Let $\Pi$ be a strongly $\lambda$-extendible property with $\lambda\neq\frac{1}{2}$. Then $\ex(K_3)\geq 1-2\lambda$ and, if $\lambda>\frac{1}{2}$, $\ex(K_3)=2-2\lambda$. In particular, $\ex(K_3)>0$ in every case.
\end{lemma}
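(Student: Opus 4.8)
The plan is to apply the strong $\lambda$-subgraph extension property to a graph $G$ with $U(G)=K_3$, carefully partitioning its three vertices so as to bound the number of edges that must be discarded, and then compare against the Poljak-Turz\'{i}k bound $\pt(K_3)=3\lambda+\lp\cdot 2 = 3\lambda + (1-\lambda) = 2\lambda + 1 - \lambda$. Wait—let me recompute: with $|E|=3$, $|V|=3$, a single component, $\pt(K_3)=3\lambda + \lp(3-1)=3\lambda + (1-\lambda)=2\lambda+1$. Hmm, that exceeds $3$ when $\lambda>1$, so for $\lambda<1$ this is below $3$, as expected. The excess $\ex(K_3)=\ms(K_3)-\pt(K_3)$ is what we must bound from below.

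First I would fix an arbitrary $G$ with $U(G)=K_3$ on vertices $\{u,v,w\}$ and unit edge weights, and split $V(G)$ as $U=\{u\}$, $W=\{v,w\}$. Since $G[U]=K_1\in\Pi$ and $G[W]$ is (an orientation/labelling of) $K_2$, which is in $\Pi$ by \textsc{Inclusiveness}, the strong $\lambda$-subgraph extension property applies: there exists $F\subseteq E(U,W)$ with $|F|\geq\lambda|E(U,W)|=2\lambda$ such that $G-(E(U,W)\setminus F)\in\Pi$. Because $|F|$ is an integer and $2\lambda$ need not be, we actually get $|F|\geq\lceil 2\lambda\rceil$; combined with the edge $vw$ that is retained, this yields a $\Pi$-subgraph with at least $\lceil 2\lambda\rceil + 1$ edges, so $\ms(K_3)\geq \lceil 2\lambda\rceil+1$. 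For the general lower bound I would use $|F|\geq 2\lambda$ to get $\ms(K_3)\geq 2\lambda+1$, whence $\ex(K_3)\geq (2\lambda+1)-(2\lambda+1)=0$; to push past $0$ I must exploit that integrality of $|F|$ forces a strict gain when $2\lambda$ is not an integer, giving $\ex(K_3)\geq \lceil 2\lambda\rceil - 2\lambda$. The claimed bound $\ex(K_3)\geq 1-2\lambda$ is exactly this integrality slack in the regime $\lambda<\frac12$, where $\lceil 2\lambda\rceil=1$.

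For the case $\lambda>\frac12$ I expect an exact computation to be possible: here $\lceil 2\lambda\rceil = 2$ (as $1<2\lambda<2$), so the extension argument already forces $|F|=2$, meaning \emph{both} edges incident to $u$ are kept alongside $vw$, giving $\ms(K_3)=3=|E|$. Then $\ex(K_3)=3-(2\lambda+1)=2-2\lambda$, matching the claim. I would need to confirm the matching upper bound $\ms(K_3)\leq 3$ trivially (a spanning subgraph has at most $3$ edges), and argue that $\ms(K_3)=3$ actually holds, i.e.\ that the whole $G$ lies in $\Pi$; this follows since retaining all three edges is exactly $G$ itself and the extension guarantees membership once $|F|=|E(U,W)|$.

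The main obstacle will be handling the integrality of $|F|$ rigorously and making sure the extension property is being applied to the correct auxiliary graph with the right retained-edge accounting—specifically, keeping track that the edge $vw$ inside $G[W]$ is always preserved (it is never in $E(U,W)$) so it adds to the count independently of $F$. A secondary subtlety is the $\lambda>\frac12$ claim: I must verify that $\lceil 2\lambda\rceil=2$ genuinely forces all cross edges to be kept rather than merely bounding their weighted total, which hinges on there being exactly two unit-weight cross edges so that $|F|\geq 2$ is only satisfiable by $F=E(U,W)$. Once these integrality bookkeeping points are nailed down, the conclusion $\ex(K_3)>0$ in all cases ($\lambda\neq\frac12$) drops out: for $\lambda<\frac12$ from $1-2\lambda>0$, and for $\lambda>\frac12$ from $2-2\lambda>0$.
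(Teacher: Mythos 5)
Your proof is correct and takes essentially the same route as the paper: both compute $\pt(K_3)=3\lambda+(1-\lambda)=2\lambda+1$ and, for $\lambda>\frac12$, apply the strong $\lambda$-subgraph extension to the partition of $K_3$ into an edge and a vertex, where $|F|\geq 2\lambda>1$ forces $F=E(U,W)$ and hence $\ms(K_3)=3$ and $\ex(K_3)=2-2\lambda$. The only (harmless) difference is in the $\lambda<\frac12$ case: the paper obtains $\ms(K_3)\geq 2$ by noting that a spanning path on three vertices lies in $\Pi$ by inclusiveness and block additivity, whereas you obtain the same bound from the integrality of $|F|$ ($\lceil 2\lambda\rceil=1$) in the extension property — both yield $\ex(K_3)\geq 1-2\lambda>0$.
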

\begin{proof}
 Note that $\beta(G)\geq 2$ for any connected graph $G\in\mathcal{G}$ with at least two edges, because any graph whose underlying graph is a path on three vertices is in $\Pi$ by inclusiveness and block additivity. Therefore, $\beta(K_3)\geq 2$, which ensures that $\ex(K_3)\geq 2-(3\lambda+\frac{1-\lambda}{2}2)=1-2\lambda$, which is strictly greater than zero if $\lambda<\frac{1}{2}$.

 Now, assume $\lambda>\frac{1}{2}$, let $G\in\mathcal{G}$ be such that $U(G)=K_3$ and let $V(G)=\{v_1,v_2,v_3\}$. Consider $U=\{v_1,v_2\}$ and $W=\{v_3\}$ and note that $G[U],G[W]\in\Pi$ by inclusiveness. Then, by the strong $\lambda$-subgraph extension property, it holds that $G\in\Pi$, which ensures that $\beta(K_3)=3$. This means that $\ex(K_3)=3-(3\lambda+\frac{1-\lambda}{2}2)=2-2\lambda>0$. 
\end{proof}

\begin{lemma}\label{theo:easylambda}
 Let $\Pi$ be a strongly $\lambda$-extendible property. If $\lambda\neq\frac{1}{2}$, then $\ex(K_3)>\frac{1-\lambda}{2}$ or $\ex(K_4)>\frac{1-\lambda}{2}$. In particular, $\Pi$ diverges on cliques.
\end{lemma}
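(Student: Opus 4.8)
The plan is to recall from Lemma~\ref{lem:exK3} that for $\lambda \neq \frac{1}{2}$ we already have $\ex(K_3) > 0$; the whole content of the present lemma is to sharpen this to one of the two cliques $K_3, K_4$ having excess strictly exceeding $\lp$. I would split into the two cases $\lambda < \frac{1}{2}$ and $\lambda > \frac{1}{2}$, since Lemma~\ref{lem:exK3} gives us different exact/lower-bound information in each.

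First I would dispose of the case $\lambda > \frac{1}{2}$, which should be immediate: Lemma~\ref{lem:exK3} tells us $\ex(K_3) = 2 - 2\lambda$. I need to check whether $2 - 2\lambda > \lp = \frac{1-\lambda}{2}$, i.e.\ whether $4 - 4\lambda > 1 - \lambda$, i.e.\ whether $3 > 3\lambda$, which holds exactly when $\lambda < 1$. Since $\lambda < 1$ always, this case gives $\ex(K_3) > \lp$ directly. For the case $\lambda < \frac{1}{2}$ the bound $\ex(K_3) \geq 1 - 2\lambda$ from Lemma~\ref{lem:exK3} may not by itself exceed $\lp$ (comparing $1 - 2\lambda$ with $\frac{1-\lambda}{2}$ gives the threshold $\lambda = \frac{1}{3}$), so here I expect to need the extra leverage of $K_4$. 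The natural move is to apply Lemma~\ref{lem:half} to $K_4$ with a partition into two parts each inducing a $K_2$ (so $c_1 = c_2 = 1$), using that $\ex(K_2) = \lp$ by the Inclusiveness property, or alternatively to partition $K_4$ into a $K_3$ part and a $K_1$ part and feed in $\ex(K_3) \geq 1 - 2\lambda$. I would compute the resulting lower bound on $\ex(K_4)$ and verify it beats $\lp$ precisely when $\ex(K_3)$ fails to, so that at least one of the two always works.

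The main obstacle I anticipate is getting the inequality arithmetic in the $\lambda < \frac{1}{2}$ subcase to line up so that the two conditions are genuinely complementary: I would need to show that whenever $\ex(K_3) \leq \lp$ (the bad case), the excess inherited by $K_4$ via Lemma~\ref{lem:half} from its subpartition is forced above $\lp$. This amounts to choosing the right partition of $K_4$ and tracking the $-\lp(c_1 + c_2 - 1)$ correction term carefully; the danger is that the additive loss of $\lp$ per component in Lemma~\ref{lem:half} eats exactly the gain, leaving a boundary case. I would handle this by summing excesses across a fine partition (e.g.\ building $K_4$ up from $K_3$ and a single vertex, where $c_1 = c_2 = 1$ so the loss is just $\lp$), giving $\ex(K_4) \geq \ex(K_3) - \lp \geq 1 - 2\lambda - \lp$, and then check this exceeds $\lp$ on the complementary range of $\lambda$.

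Finally, once one of $\ex(K_3) > \lp$ or $\ex(K_4) > \lp$ is established in every case $\lambda \neq \frac{1}{2}$, the ``in particular'' conclusion that $\Pi$ diverges on cliques is immediate from the \emph{definition} of divergence on cliques, which requires exactly the existence of some $j$ with $\ex(K_j) > \lp$.
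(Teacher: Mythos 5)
Your overall case split and the first two subcases are right and match the paper: for $\lambda>\frac12$, Lemma~\ref{lem:exK3} gives $\ex(K_3)=2-2\lambda>\lp$ since $\lambda<1$, and for $\lambda<\frac13$ the bound $\ex(K_3)\ge 1-2\lambda$ already exceeds $\lp$. The gap is in the remaining range $\frac13\le\lambda<\frac12$, where neither of the two routes you propose actually clears the threshold. Partitioning $K_4$ into two copies of $K_2$ and applying Lemma~\ref{lem:half} with $k_1=k_2=\ex(K_2)=\lp$ and $c_1=c_2=1$ yields only $\ex(K_4)\ge \lp+\lp-\lp=\lp$, i.e.\ exactly the boundary case you feared, with no strict inequality. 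The alternative of building $K_4$ from $K_3$ plus a single vertex is worse: it gives $\ex(K_4)\ge(1-2\lambda)-\lp=\frac{1-3\lambda}{2}$, which is below $\lp$ for every $\lambda>0$. So as written the two conditions are not complementary and the lemma does not follow.

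The missing idea is integrality, which Lemma~\ref{lem:half} throws away. The paper applies the strong $\lambda$-subgraph extension property \emph{directly} to $K_4$ partitioned into two $K_2$'s: the retained set $F$ of crossing edges satisfies $|F|\ge \lambda\cdot 4=4\lambda>1$, and since $|F|$ is an integer this forces $|F|\ge 2$, which is strictly more than $4\lambda$ when $\lambda<\frac12$. Hence $\beta(K_4)\ge 4$ and $\ex(K_4)\ge 4-6\lambda-3\cdot\lp=\frac52-\frac92\lambda$, which exceeds $\lp$ precisely when $\lambda<\frac12$. Replacing your Lemma~\ref{lem:half} step with this direct application of the extension axiom repairs the argument; the final ``in particular'' conclusion from the definition of divergence on cliques is fine as you state it.
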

\begin{proof}
 If $\lambda>\frac{1}{2}$, then by Lemma \ref{lem:exK3} $\ex(K_3)=2-2\lambda>\lp$. If $\lambda<\frac{1}{3}$, then by Lemma \ref{lem:exK3} $\ex(K_3)\geq 1-2\lambda$, which is greater than $\frac{1-\lambda}{2}$. Lastly, if $\frac{1}{3}\leq\lambda<\frac{1}{2}$, let $G\in\mathcal{G}$ be such that $U(G)=K_4$ and let $V(G)=\{v_1,v_2,v_3,v_4\}$. Consider $U=\{v_1,v_2\}$ and $W=\{v_3,v_4\}$ and note that $G[U],G[W]\in\Pi$ by inclusiveness. By the strong $\lambda$-subgraph extension property, it holds that $\beta(G)\geq 4$, since $\lambda>\frac{1}{4}$. Therefore, $\ex(K_4)\geq 4-6\lambda-\frac{1-\lambda}{2}3=\frac{5}{2}-\frac{9}{2}\lambda$ which is greater than $\frac{1-\lambda}{2}$. 
\end{proof}

\begin{lemma}\label{lem:triangle}
 Let $\Pi$ be a strongly $\lambda$-extendible property. If $K_3\in\Pi$, then $\ex(K_3)>\frac{1-\lambda}{2}$. In particular, $\Pi$ diverges on cliques.
\end{lemma}
\begin{proof}
 If $K_3\in\Pi$, then $\beta(K_3)=3$, which means that $\ex(K_3)=2-2\lambda>\lp$. 
\end{proof}

\begin{lemma}\label{lem:positivecliques}
  Let $\Pi$ be a strongly $\lambda$-extendible property. If $\lambda\neq\frac{1}{2}$ or $K_3\in\Pi$, then $\ex(K_i)>0$ for all $i\ge 2$.
\end{lemma}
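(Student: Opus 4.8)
The plan is to prove that $\ex(K_i) > 0$ for all $i \geq 2$ by reducing to the two facts already established in Lemmas~\ref{lem:exK3}, \ref{theo:easylambda} and \ref{lem:triangle}, which handle the small cliques and guarantee divergence. The key observation is that the hypothesis ``$\lambda \neq \frac{1}{2}$ or $K_3 \in \Pi$'' is exactly the hypothesis under which we have already shown $\ex(K_3) > \lp$ (via Lemma~\ref{lem:exK3} and Lemma~\ref{lem:triangle}), or at least that $\Pi$ diverges on cliques (Lemma~\ref{theo:easylambda} gives $\ex(K_3) > \lp$ or $\ex(K_4) > \lp$). So in every case covered by the hypothesis, $\Pi$ diverges on cliques.

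First I would dispose of the base case $i = 2$: by {\sc Inclusiveness}, every orientation/labelling of $K_2$ is in $\Pi$, so $\ms(K_2) = 1$ and $\pt(K_2) = \lambda \cdot 1 + \lp \cdot (2-1) = \lambda + \lp = \frac{1+\lambda}{2} < 1$, giving $\ex(K_2) = 1 - \frac{1+\lambda}{2} = \lp > 0$. Next, for $i = 3$ the positivity $\ex(K_3) > 0$ is already contained in Lemma~\ref{lem:exK3} (when $\lambda \neq \frac{1}{2}$) and in Lemma~\ref{lem:triangle} (when $K_3 \in \Pi$), since both conclude $\ex(K_3) > 0$. This leaves the general case $i \geq 4$.

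For $i \geq 4$ the natural tool is Lemma~\ref{lem:nottoomuchless}, which states that if $\ex(K_j) = a \geq \lp$ then every almost-clique with at least $j+1$ vertices has excess at least $a - \lp$. Under our hypothesis we know (from Lemma~\ref{theo:easylambda} or Lemma~\ref{lem:triangle}) that there is some small $j \in \{3,4\}$ with $\ex(K_j) > \lp$; writing $\ex(K_j) = \lp + a$ with $a > 0$, Lemma~\ref{lem:nottoomuchless} then yields $\ex(K_i) \geq a > 0$ for every clique $K_i$ with $i \geq j+1$. The only remaining gaps are the finitely many cliques of intermediate size (those $K_i$ with $3 \leq i \leq j$ not already covered), and these can be handled directly: I would split such a $K_i$ into a part inducing $K_3$ (or the small clique of guaranteed positive excess) and a remaining connected part, and apply Lemma~\ref{lem:half} together with the fact that $\ex$ of the small part strictly exceeds $\lp$ to absorb the $-\lp(c_1 + c_2 - 1) = -\lp$ penalty.

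The main obstacle I anticipate is bookkeeping the boundary between the ``small'' cliques handled by explicit computation and the ``large'' cliques handled by Lemma~\ref{lem:nottoomuchless}, making sure the threshold $a > \lp$ (strict, as in divergence) rather than merely $a \geq \lp$ is what lets the $-\lp$ correction in Lemma~\ref{lem:half} leave a \emph{strictly} positive remainder. Concretely, when I apply Lemma~\ref{lem:half} to a partition with $c_1 + c_2 = 2$ the excess bound becomes $\ex(K_i) \geq (\lp + a) + 0 - \lp = a > 0$, so strictness of $a > 0$ (guaranteed by the divergence results) is precisely what closes the argument. The most delicate point is therefore ensuring that every $i \geq 4$ is reachable either as ``large enough'' relative to the witnessing small clique, or as a direct sum decomposition; once the witness $K_j$ with $\ex(K_j) > \lp$ is fixed, Lemma~\ref{theo:increasingsequence} also guarantees $\ex(K_{rj}) \geq \lp + ra$, reinforcing that every sufficiently large clique has strictly positive—indeed unbounded—excess.
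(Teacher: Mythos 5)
Your proposal is correct and follows essentially the same route as the paper: compute $\ex(K_2)=\lp>0$ directly, get $\ex(K_3)>0$ from Lemma~\ref{lem:exK3} or Lemma~\ref{lem:triangle}, and lift the witness $\ex(K_j)>\lp$ ($j\in\{3,4\}$ from Lemma~\ref{theo:easylambda} or Lemma~\ref{lem:triangle}) to all larger cliques via Lemma~\ref{lem:nottoomuchless}. Your worry about ``intermediate'' cliques is moot, since the only possibly uncovered case is the witness $K_4$ itself, whose excess already exceeds $\lp>0$.
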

\begin{proof}
 By Lemma \ref{theo:easylambda} and Lemma \ref{lem:triangle}, $\ex(K_3)>\lp$ or $\ex(K_4)>\lp$. In the first case, by Lemma \ref{lem:nottoomuchless}, it holds that $\ex(K_j)>0$ for all $j\geq 4$, while in the second case, using the same Lemma, $\ex(K_j)>0$ for all $j\geq 5$. In addition, by Lemma \ref{lem:exK3}, $\ex(K_3)>0$. Finally, $\ex(K_2)=1-(\lambda+\lp)=\lp>0$. 
\end{proof}

\begin{theorem}\label{thm:easycase}
  Let $\Pi$ be a strongly $\lambda$-extendible property. If $\lambda\neq\frac{1}{2}$ or $K_3\in\Pi$, then \APT{} has a kernel with \(\cO(k^{2})\) vertices.
\end{theorem}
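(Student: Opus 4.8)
The plan is to recognize that the hypothesis of this theorem is precisely engineered to feed the general kernelization result already proved in Theorem~\ref{theo:ksquaredkernel}. That theorem produces an $\cO(k^2)$-vertex kernel for \pisub{} under two hypotheses: (i) $\Pi$ diverges on cliques, and (ii) $\ex(K_i)>0$ for every $i\geq 2$. Both of these have been established in the preceding lemmas under exactly the case split ``$\lambda\neq\frac{1}{2}$ or $K_3\in\Pi$'' that appears here, so the proof reduces to verifying that the two prerequisites hold and then invoking Theorem~\ref{theo:ksquaredkernel}.

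First I would check that $\Pi$ diverges on cliques, handling the two disjuncts of the hypothesis in turn. If $\lambda\neq\frac{1}{2}$, then Lemma~\ref{theo:easylambda} gives $\ex(K_3)>\frac{1-\lambda}{2}$ or $\ex(K_4)>\frac{1-\lambda}{2}$; either inequality witnesses divergence on cliques directly from the definition. If instead $K_3\in\Pi$, then Lemma~\ref{lem:triangle} gives $\ex(K_3)>\frac{1-\lambda}{2}$, which again exhibits a clique of excess exceeding $\frac{1-\lambda}{2}$ and hence establishes divergence. So in both branches of the hypothesis, $\Pi$ diverges on cliques.

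Next I would establish the positive-excess condition on all cliques with at least two vertices. This is exactly the statement of Lemma~\ref{lem:positivecliques}, which shows that under the same disjunctive hypothesis ($\lambda\neq\frac{1}{2}$ or $K_3\in\Pi$) one has $\ex(K_i)>0$ for every $i\geq 2$. With both hypotheses of Theorem~\ref{theo:ksquaredkernel} now verified, a single application of that theorem yields a kernel for \APT{} on $\cO(k^2)$ vertices, which is the desired conclusion.

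There is no genuine obstacle to overcome at this stage: the substantive work—bounding $|\mathcal{B}_0|$, the number and size of positive-excess blocks, and assembling these into the quadratic kernel—was entirely absorbed into Theorem~\ref{theo:ksquaredkernel}, and the structural facts about $\ex(K_3)$ and $\ex(K_4)$ were settled in the lemmas of this section. The one point worth a moment's care is confirming that the two prerequisites of Theorem~\ref{theo:ksquaredkernel} both follow from the \emph{same} disjunctive hypothesis, rather than requiring incompatible assumptions; but since Lemma~\ref{theo:easylambda}, Lemma~\ref{lem:triangle}, and Lemma~\ref{lem:positivecliques} are each stated under the full condition ``$\lambda\neq\frac{1}{2}$ or $K_3\in\Pi$,'' the two prerequisites hold simultaneously and no gap arises.
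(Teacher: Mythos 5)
Your proposal is correct and follows essentially the same route as the paper's own proof: divergence on cliques via Lemma~\ref{theo:easylambda} or Lemma~\ref{lem:triangle}, positivity of $\ex(K_i)$ for $i\geq 2$ via Lemma~\ref{lem:positivecliques}, and then a direct application of Theorem~\ref{theo:ksquaredkernel}. (One minor slip: Lemmas~\ref{theo:easylambda} and~\ref{lem:triangle} are each stated under a single disjunct rather than the full disjunctive hypothesis, but this does not affect the argument.)
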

\begin{proof}
 By Lemma \ref{theo:easylambda} or Lemma \ref{lem:triangle}, $\Pi$ diverges on cliques. Furthermore, by Lemma \ref{lem:positivecliques}, $\ex(K_i)>0$ for all $i\ge 2$. Then, by Theorem \ref{theo:ksquaredkernel}, \APT{} has a kernel with at most \(\cO(k^{2})\) vertices. 
\end{proof}

By Theorem \ref{thm:easycase}, the only remaining cases to consider are those for which $\lambda = 1/2$ and $\Pi$ does not contain all triangles. We do this in the following section.

\section{Kernel when $\lambda=\frac{1}{2}$}

\begin{definition}\label{def:hereditary}
A graph property $\LE$ is \emph{hereditary} if, for any graph $G$ and vertex-induced subgraph $G'$ of $G$, if $G \in \LE$ then $G' \in \LE$.
\end{definition}

\begin{theorem}\label{theo:MaxCutEverywhere}
 Let $\Pi$ be a strongly $\lambda$-extendible property with $\lambda=\frac{1}{2}$. Suppose $\Pi$ is hereditary and $G\notin\Pi$ for any $G\in\mathcal{G}$ such that $U(G)=K_3$. Then $\Pi=\{G\in\mathcal{G}:G\ is\ bipartite\}$.
\end{theorem}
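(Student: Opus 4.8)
The claim is that under the stated hypotheses—$\lambda=\tfrac12$, $\Pi$ hereditary, and no graph $G$ with $U(G)=K_3$ belongs to $\Pi$—the property $\Pi$ coincides exactly with bipartiteness. The plan is to prove the two inclusions separately. Let me think about what each direction requires.

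For the inclusion $\{G : G\text{ bipartite}\}\subseteq\Pi$, I would argue by induction on the number of edges (or vertices) of a bipartite graph $G\in\mathcal{G}$. The base cases $U(G)\in\{K_1,K_2\}$ are handled by inclusiveness. By block additivity it suffices to treat a single block, i.e.\ a $2$-connected bipartite graph. Here I would peel off a vertex $v$ and apply the strong $\tfrac12$-subgraph extension property with the partition $(\{v\}, V(G)\setminus\{v\})$: since $G[\{v\}]=K_1\in\Pi$ and the remainder is bipartite (hence in $\Pi$ by induction), the extension property guarantees some $\Pi$-subgraph retaining at least half of the edges across the cut. The delicate point is that a priori the extension property only gives me \emph{some} large-weight subset $F$ of the cut edges, not necessarily \emph{all} of them; I need to argue that for a bipartite input all cut edges can in fact be kept. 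I would exploit the hereditary assumption together with the absence of triangles: if some edge across the cut had to be deleted, one could look at the induced structure and derive either a forbidden triangle or a contradiction with heredity. Making this step airtight is what I expect to be the main obstacle.

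For the reverse inclusion $\Pi\subseteq\{G : G\text{ bipartite}\}$, I would prove the contrapositive: if $G\in\mathcal{G}$ is \emph{not} bipartite then $G\notin\Pi$. A non-bipartite graph contains an odd cycle, and by heredity it suffices to show that no odd cycle is in $\Pi$; then any $G$ containing one as an induced subgraph is excluded. The given hypothesis already handles the shortest odd cycle $C_3$ (every labelling/orientation with $U(G)=K_3$ is excluded). For longer odd cycles $C_{2t+1}$, I would again use heredity in combination with block additivity: an odd cycle is $2$-connected, so if it were in $\Pi$ one should be able to contract or shortcut it down to a triangle while staying inside $\Pi$, contradicting the triangle hypothesis. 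The natural tool is once more the strong $\tfrac12$-extension property applied to a suitable bipartition of the cycle's vertices, reducing a putative $\Pi$-membership of $C_{2t+1}$ to that of a smaller odd closed walk and ultimately to $K_3$.

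Combining the two inclusions yields $\Pi=\{G\in\mathcal{G}:G\text{ is bipartite}\}$, as required. The overall structure is thus: (i) reduce to blocks via block additivity, (ii) use inclusiveness for the base cases, and (iii) leverage heredity plus the triangle-exclusion to force the boundary between ``in $\Pi$'' and ``out of $\Pi$'' to fall exactly at the bipartite/non-bipartite divide. The crux in both directions is the same tension: the strong $\lambda$-subgraph extension property is an \emph{existential} statement about retaining $\lambda$-fraction of a cut, and I must upgrade it—using heredity and the no-triangle condition specific to $\lambda=\tfrac12$—into the \emph{exact} combinatorial characterization of bipartiteness.
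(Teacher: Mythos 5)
Your high-level architecture matches the paper's (prove the two inclusions separately; use heredity to reduce the non-bipartite direction to excluding induced odd cycles; use induction on the number of vertices for the bipartite direction), but in both directions the step you flag as "the crux" is exactly the step you leave open, and the resolutions you sketch would not work. For the inclusion $\Pi\subseteq\{\text{bipartite}\}$: heredity only yields \emph{induced subgraphs} and the strong $\lambda$-subgraph extension property only \emph{builds larger} graphs from two $\Pi$-pieces, so there is no mechanism to "contract or shortcut" $C_{2t+1}$ down to a smaller odd cycle --- an induced proper subgraph of a cycle is a union of paths. The paper's missing idea is an \emph{apex construction}: given $C_l\in\Pi$ ($l\ge 5$ odd), add a new vertex $v$ joined to every vertex of $C_l$ and apply the extension property to the partition $(V(C_l),\{v\})$; since at least $\lceil l/2\rceil=\frac{l+1}{2}$ of the $l$ spoke edges survive, the pigeonhole principle gives two retained spokes $vx,vy$ with $xy\in E(C_l)$, and heredity then exhibits an induced triangle in a $\Pi$-graph, contradicting the hypothesis.

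For the inclusion $\{\text{bipartite}\}\subseteq\Pi$, your single-vertex peeling genuinely fails: applying the extension property to $(\{v\},V(H)\setminus\{v\})$ only guarantees that \emph{half} of $v$'s edges survive, and since every subgraph of a bipartite graph is bipartite, dropping the other half creates no forbidden triangle and no conflict with heredity --- there is simply no contradiction to derive, and you do not recover $H$ itself. The paper's fix is a \emph{doubling gadget}: replace $v$ by an edge $v_1v_2$ (a $K_2$, in $\Pi$ by inclusiveness), give each $v_i$ the full neighbourhood of $v$, and apply the extension property to this graph $H''$. Retaining at least half of the $2\deg(v)$ cross edges, while simultaneously avoiding any induced odd cycle (one red edge $v_1w_1$ plus one blue edge $v_2w_2$ plus the edge $v_1v_2$ plus an even $w_1$--$w_2$ path closes an odd cycle, excluded by the first part), forces the retained set to be \emph{all} red edges or \emph{all} blue edges; either outcome contains a copy of $H$ with a pendant vertex, which is in $\Pi$ by block additivity. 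Without these two gadget constructions the proof does not go through.
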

\begin{proof}
 First, assume for the sake of contradiction that $\Pi$ contains a non-bipartite graph $H$. Then $H$ contains an odd cycle $C_l$. By choosing $l$ as small as possible we may assume that $C_l$ is a vertex-induced subgraph of $H$. Then, since $\LE$ is hereditary, $C_l$ is in $\Pi$. Note that if $l=3$, then $U(C_3)=K_3$, so this is not the case. Consider the graph $H'$ obtained from $C_l$ adding a new vertex $v$ and an edge from $v$ to every vertex of $C_l$. Since both $C_l$ and $K_1=\{v\}$ are in $\Pi$, by the strong $\lambda$-subgraph extension property we can find a subgraph of $H'$ which contains $C_l$, $v$ and at least half of the edges between $v$ and $C_l$. Since $l$ is odd, for any choice of $\frac{l+1}{2}$ edges there are two of them, say $e_1=vx$ and $e_2=vy$, such that the edge $xy$ is in $C_l$. Therefore, since $\Pi$ is hereditary, $H'[v,x,y]\in\Pi$, which leads to a contradiction, as $U(H'[v,x,y])=K_3$.

 Now, we will show that all connected bipartite graphs are in $\Pi$, independently from any possible labelling and/or orientation. We will proceed by induction. The claim is trivially true for $j=1,2$. Assume $j\geq 3$ and that every bipartite graph with $l<j$ vertices is in $\Pi$. Consider any connected bipartite graph $H$ with $j$ vertices. Consider a vertex $v$ such that $H'=H-\{v\}$ is connected. By induction hypothesis, $H'\in\Pi$. Consider the graph $H''$ obtained from $H'$ and $G_2$, where $G_2$ is any graph in $\mathcal{G}$ with $U(G_2)=K_2$ (let $V(G_2)=\{v_1,v_2\}$), adding an edge from $v_i$ to $w\in V(H')$ if and only if in $H$ there is an edge from $v$ to $w$. Colour red the edges from $v_1$ and blue the edges from $v_2$.

 Since $G_2\in\Pi$ by inclusiveness and $H'\in\Pi$, by the strong $\lambda$-subgraph extension property there must exist a subgraph $\widetilde{H}$ of $H''$ which contains $G_2$, $H'$ and at least half of the edges between them. Note that the red edges are exactly half of the edges and that if $\widetilde{H}$ contains all of them and no blue edges, then we can conclude that $H$ is in $\Pi$ by block additivity. The same holds if $\widetilde{H}$ contains every blue edge and no red edge.

If, on the contrary, $\widetilde{H}$ contains one red and one blue edge, we will show that it contains a vertex-induced cycle of odd length, which leads to a contradiction according to the first part of the proof. First, suppose that both these edges contain $w\in V(H')$: if this happens, $\widetilde{H}$ contains a cycle of length $3$ as a vertex-induced subgraph.

Now, suppose $\widetilde{H}$ contains a red edge $v_1w_1$ and a blue edge $v_2w_2$. Note that $w_1$ and $w_2$ are in the same partition and, since $H'$ is connected, there is a path from $w_1$ to $w_2$ which has even length. Together with $v_1w_1$, $v_2w_2$ and $v_1v_2$, this gives a cycle of odd length. Choosing the shortest path between $w_1$ and $w_2$, we may assume that the cycle is vertex-induced.

Thus, we conclude that the only possible choices for $\widetilde{H}$ are either picking the red edges or picking the blue edges, which concludes the proof. 
\end{proof}

The above theorem is of interest due to the following theorem:

\begin{theorem}\cite{CrowstonGJM12}\label{thm:maxCutKernel}
\mcatlb{} has a kernel with \(\cO(k^{3})\) vertices.
\end{theorem}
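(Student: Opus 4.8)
The plan is to run the same WIN/WIN machinery used throughout the paper, but then to replace the divergence-based counting arguments by reduction rules tailored to the parity structure special to \maxcut{}. First I would apply Proposition~\ref{thm:GminusS} with $\lambda=\frac{1}{2}$: in polynomial time we either declare the instance a \YES{}-instance or obtain a set $S\subseteq V(G)$ with $|S|<12k$ such that $G-S$ is a forest of cliques. The crucial preliminary observation is that for \maxcut{} the excess of a clique depends only on its parity: a direct computation of $\ms(K_j)=\lfloor j^2/4\rfloor$ against the Poljak-Turz\'{i}k bound gives $\ex(K_j)=0$ for odd $j$ and $\ex(K_j)=\frac{1}{4}=\lp$ for even $j\ge 2$. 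In particular \maxcut{} does \emph{not} diverge on cliques, so Theorem~\ref{theo:ksquaredkernel} is unavailable and new arguments are required; this is exactly why \maxcut{} must be handled as a standalone case.

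Next I would bound the number of blocks of $G-S$. Even cliques each carry excess $\lp$, so combining Lemma~\ref{lem:cutvertex} and Lemma~\ref{lem:half} (managing the $O(k)$ components of $G-S$ exactly as in Section~4) shows that if $G-S$ contains more than $O(k)$ even blocks, the accumulated excess already reaches $k$ and we output \YES{}. Odd cliques contribute no excess and must be controlled differently: a dangling odd clique can be deleted by an analogue of Reduction Rule~\ref{krule:zeroexcess}, and long block-paths $B\in\mathcal{B}_2$ consisting of zero-excess odd cliques admit a dedicated shortening rule. After these reductions I would bound $|\mathcal{B}_0|,|\mathcal{B}_1|,|\mathcal{B}_{\geq 3}|$ as in Section~4 and control $\mathcal{B}_2$ through the path structure, obtaining $O(k^2)$ surviving blocks.

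Then I would bound the size of a single block $B$. The interior vertices of $B$ that have no neighbour in $S$ and are not cut vertices of $G-S$ are mutually twins, and I would introduce a reduction rule deleting them in \emph{pairs}: removing two such vertices preserves the parity of $|V(B)|$, hence preserves $\ex(B)$ by the clique computation above, while decreasing $|V(G)|$. Iterating this bounds $|\inn{B}|$ in terms of $|N_G(\inn{B})\cap S|$ plus a constant for the incident cut vertices, so each surviving block retains only $O(k)$ interior vertices.

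Finally, assembling the pieces yields $|S|=O(k)$, at most $O(k^2)$ blocks, and $O(k)$ vertices per block, for an $O(k^3)$-vertex kernel. The hard part is the handling of odd cliques: because they have zero excess, pure excess-counting fails entirely, so both their \emph{number} and their \emph{size} must be controlled through twin-based reductions and a careful accounting of their incidences with $S$ and with one another. It is precisely the looseness of this incidence accounting at $\lambda=\frac{1}{2}$ that forces the kernel from the quadratic bound of Theorem~\ref{theo:ksquaredkernel} up to cubic, and improving it would require a sharper structural understanding of how zero-excess blocks can be packed around the separator $S$.
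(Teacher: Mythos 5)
First, note that the paper does not prove this statement at all: Theorem~\ref{thm:maxCutKernel} is imported verbatim from \cite{CrowstonGJM12}, so there is no in-paper proof to compare against. Judged on its own merits, your sketch starts correctly --- the computation $\ex(K_j)=0$ for odd $j$ and $\ex(K_j)=\frac{1}{4}$ for even $j$ is right, \maxcut{} indeed does not diverge on cliques, and the even blocks can be charged against the excess --- but the step that carries the whole kernel, the bound on $|\inn{B}|$, is unsound as stated.

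Your rule ``delete a pair of twin interior vertices of $B$ with no neighbour in $S$; parity of $|V(B)|$ is preserved, hence $\ex(B)$ and therefore $\ex(G)$ are preserved'' does not follow: once $S$ attaches to \emph{several} vertices of a block, the global excess is not a function of the block excesses (Lemma~\ref{lem:cutvertex} only decomposes across cut vertices). Concretely, take $G$ with $V(G)=\{u,v,w,x,s\}$, where $\{u,v,w,x\}$ induces $K_4$ and $s$ is adjacent to $w$ and $x$ only. Here $S=\{s\}$, $G-S=K_4$ is a single block, and $u,v$ are interior twins with no neighbour in $S$. One checks $\pt(G)=\frac{8}{2}+\frac{4}{4}=5$ and $\beta(G)=6$ via the cut $(\{u,v,s\},\{w,x\})$, so $\ex(G)=1$ and $(G,1)$ is a \YES{}-instance; after deleting $u,v$ the graph is a triangle on $\{w,x,s\}$ with $\ex=0$, so $(G',1)$ is a \NO{}-instance. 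The deleted twins are exactly what allows the cut to place $w$ and $x$ on the same side (cutting both $sw$ and $sx$) while still cutting four edges of the $K_4$; the correct rule in the literature deletes twin pairs only when the twins form a sufficiently large \emph{majority} of the clique, and establishing that threshold is the real content of the block-size bound. Secondary, more repairable issues: the Section~4 machinery you invoke ``exactly'' is predicated on divergence on cliques (e.g.\ $\mak$ is only defined in that case), the number of components of $G-S$ obtained there is $\cO(k^{2})$ rather than $\cO(k)$, and the ``dedicated shortening rule'' for block paths of zero-excess odd cliques is asserted without a validity argument, which is precisely where the difficulty lies (compare the care needed in Lemma~\ref{lem:rulenotrianglepathvalid}, and note that zero-excess blocks for \maxcut{} include arbitrarily large odd cliques, not just triangles).
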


\subsection{Simple graphs}

In this part, we assume that $\mathcal{G}$ is the class of simple graphs, that is, without any labelling or orientation. Note that, in this case, there is only one graph, up to isomorphism, whose underlying graph is $K_3$ (namely, $K_3$ itself).

\begin{theorem}\label{thm:simple}
  Let $\Pi$ be a strongly $\lambda$-extendible property on simple graphs, with $\lambda=\frac{1}{2}$, and suppose $\Pi$ is hereditary. Then \APT{} has a kernel with \(\cO(k^{2})\) or \(\cO(k^{3})\) vertices.
\end{theorem}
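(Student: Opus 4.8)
The plan is to use the structural dichotomy that the previous sections have set up. By Theorem~\ref{thm:easycase}, the only outstanding case is $\lambda=\frac{1}{2}$ with $\Pi$ not containing the triangle $K_3$ (on simple graphs there is a unique graph with underlying graph $K_3$, so ``$K_3\in\Pi$'' is unambiguous). The key observation is that on simple graphs, a hereditary $\frac{1}{2}$-extendible property $\Pi$ either contains $K_3$ or does not. I would split into exactly these two cases and show that each yields a polynomial kernel.

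First I would handle the case where $K_3\in\Pi$. Here $\Pi$ satisfies all the hypotheses of Theorem~\ref{thm:easycase} (indeed $K_3\in\Pi$ is one of its two alternatives), so \APT{} immediately has a kernel with $\cO(k^{2})$ vertices. This case requires essentially no new work; it is a direct appeal to the already-proven easy case.

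The second case is where $K_3\notin\Pi$. Here I would invoke Theorem~\ref{theo:MaxCutEverywhere}, whose hypotheses are precisely that $\Pi$ is a hereditary $\frac{1}{2}$-extendible property on (simple) graphs with no graph whose underlying graph is $K_3$ lying in $\Pi$. That theorem concludes $\Pi=\{G\in\mathcal{G}:G\text{ is bipartite}\}$, so \APT{} for $\Pi$ is exactly \mcatlb{}. Then Theorem~\ref{thm:maxCutKernel} gives a kernel with $\cO(k^{3})$ vertices, which is the weaker bound appearing in the statement. Combining the two cases, \APT{} has a kernel with $\cO(k^{2})$ vertices (when $K_3\in\Pi$) or $\cO(k^{3})$ vertices (when $K_3\notin\Pi$, i.e.\ the \maxcut{} case), matching the claimed bound.

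The main obstacle is not in this theorem itself but lies in the supporting results it cites: the delicate hereditariness argument of Theorem~\ref{theo:MaxCutEverywhere} (showing that forbidding the triangle forces bipartiteness via the odd-cycle/strong-extension interplay) and the external \maxcut{} kernel of Theorem~\ref{thm:maxCutKernel}. Given those, the proof here is simply a clean two-case dispatch. The only subtlety to verify is that for simple graphs the triangle condition genuinely bifurcates cleanly---that there is no labelling/orientation ambiguity to worry about---which the subsection's opening remark already guarantees, so the case split ``$K_3\in\Pi$ versus $K_3\notin\Pi$'' is exhaustive and the two regimes produce the two stated kernel sizes.
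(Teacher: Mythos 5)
Your proposal is correct and follows exactly the same two-case dispatch as the paper's own proof: if $K_3\in\Pi$ apply Theorem~\ref{thm:easycase} for an $\cO(k^2)$ kernel, and if $K_3\notin\Pi$ apply Theorem~\ref{theo:MaxCutEverywhere} to reduce to \mcatlb{} and invoke Theorem~\ref{thm:maxCutKernel} for an $\cO(k^3)$ kernel. No gaps.
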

\begin{proof}
 If $K_3\notin\Pi$, by Theorem \ref{theo:MaxCutEverywhere} $\Pi$ is equal to Max Cut and therefore by Theorem \ref{thm:maxCutKernel} it admits a kernel with \(\cO(k^{3})\). On the other hand, if $K_3\in\Pi$, then by Theorem \ref{thm:easycase} $\Pi$ admits a kernel with \(\cO(k^{2})\) vertices. 
\end{proof}

\subsection{Oriented graphs}

In this part, we assume that $\mathcal{G}$ is the class of oriented graphs, without any labelling.

\begin{definition}
 Let $\tor\in\mathcal{G}$ be such that $U(\tor)=K_3$, $V(\tor)=\{v_1,v_2,v_3\}$ and $o((v_i,v_{i+1}))=\ >$ for $1\leq i\leq 2$ and $o((v_1,v_3))=\ <$. We will call $\tor$ the {\em oriented} triangle.

 Similarly, let $\tuo\in\mathcal{G}$ be such that $U(\tuo)=K_3$, $V(\tuo)=\{u_1,u_2,u_3\}$ and $o((u_i,u_j))=\ >$ for every $i<j$, $1\leq i,j\leq 3$. We will call $\tuo$ the {\em non-oriented} triangle.
\end{definition}

It is not difficult to see that, up to isomorphism, $\tor$ and $\tuo$ are the only graphs in $\mathcal{G}$ with $K_3$ as underlying graph.

\begin{lemma}\label{lem:torthenall}
 Let $\Pi$ be a strongly $\lambda$-extendible property on oriented graphs, with $\lambda=\frac{1}{2}$, and suppose $\Pi$ is hereditary. If $\tor\in\Pi$, then $\tuo\in\Pi$.
\end{lemma}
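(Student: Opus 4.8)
The plan is to exhibit $\tuo$ as an induced subgraph of a graph that the extension property forces into $\Pi$, and then to invoke heredity. Concretely, I would take the oriented triangle $\tor$ on vertices $\{v_1,v_2,v_3\}$, which lies in $\Pi$ by hypothesis, and attach a single new apex vertex $w$ joined to each $v_i$. Setting $U=\{v_1,v_2,v_3\}$ and $W=\{w\}$ gives a partition of the vertex set with $G[U]=\tor\in\Pi$ and $G[W]=K_1\in\Pi$ (by inclusiveness), so the strong $\tfrac12$-subgraph extension property applies to $G$ with unit edge weights.

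The crucial design choice is how to orient the three edges $wv_1,wv_2,wv_3$. I would orient all of them away from $w$, making $w$ a source; orienting them all into $w$ would serve equally well by symmetry. The point is that $\tor$ is \emph{cyclically} oriented, so the single edge between any pair $\{v_i,v_j\}$ already carries a fixed direction; adjoining a common source $w$ then yields, on every vertex triple $\{w,v_i,v_j\}$, a tournament in which $w$ has out-degree two. Any tournament on three vertices that has a vertex of out-degree two cannot be the directed $3$-cycle and is therefore transitive, hence isomorphic to $\tuo$. Thus, no matter which of the apex edges survive, the induced triangle on $w$ together with any pair of the $v_i$ is a copy of $\tuo$.

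With unit weights, $w(E(U,W))=3$, so the extension property produces $F\subseteq E(U,W)$ with $|F|\ge\lceil 3/2\rceil=2$ such that $G'=G-(E(U,W)\setminus F)\in\Pi$. Since $|F|\ge 2$, at least one pair $\{v_i,v_j\}$ has both $wv_i,wv_j\in F$, and the induced subgraph $G'[\{w,v_i,v_j\}]$ is then the transitive triangle described above. As $\Pi$ is hereditary, this induced subgraph lies in $\Pi$; being isomorphic to $\tuo$, it certifies $\tuo\in\Pi$.

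The only genuine obstacle is choosing an orientation of the attached edges that guarantees a transitive triangle regardless of the \emph{uncontrolled} set $F$ returned by the extension property; the observation that a common source (or sink) produces a transitive triangle on \emph{every} pair---precisely because $\tor$ is a directed cycle and so supplies a consistent third edge---removes it. No quantitative estimates are needed beyond the fact that $\lambda=\tfrac12$ forces $|F|\ge 2$, which is exactly what ensures that two of the apex edges, and hence a full transitive triangle, survive.
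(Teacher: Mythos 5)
Your proof is correct and follows essentially the same route as the paper's: attach an apex source to $\tor$, invoke the strong $\tfrac12$-subgraph extension property to keep at least two of the three apex edges, and use heredity to extract the resulting transitive triangle. Your explicit justification that a common source over the directed $3$-cycle yields a transitive triangle on \emph{every} pair is a nice way of making precise the ``without loss of generality'' step that the paper states more tersely.
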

\begin{proof}
 Consider the graph $H$ obtained by adding a vertex $v$ to $\tor$ and an edge from $v$ to every vertex of $\tor$, such that $o((v,v_i))=\ >$ for every $1\leq i\leq 3$. Since $\tor\in\Pi$, by the strong $\lambda$-subgraph extension property there exists a $\Pi$-subgraph $H'$ of $H$ which contains $\tor$, $v$ and at least two edges between $\tor$ and $v$: without loss of generality, assume these edges are $vv_1$ and $vv_2$. Then since $\LE$ is hereditary $H'[v,v_1,v_2]\in\Pi$ and note that $H'[v,v_1,v_2]$ is isomorphic to $\tuo$. 
\end{proof}

\begin{lemma}\label{lem:tuothendiverges}
 Let $\Pi$ be a strongly $\lambda$-extendible property on oriented graphs, with $\lambda=\frac{1}{2}$, and suppose $\Pi$ is hereditary. If $\tuo\in\Pi$, then $ex(K_4)>\frac{1}{4}$. In particular, $\Pi$ diverges on cliques.
\end{lemma}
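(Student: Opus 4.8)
The goal is to show that if $\tuo\in\Pi$ then $\ex(K_4)>\frac{1}{4}$, which (since $\lp=\frac{1}{4}$ when $\lambda=\frac{1}{2}$) immediately gives divergence on cliques. My plan is to show that \emph{every} orientation $G$ of $K_4$ satisfies $\ms(G)\geq 5$, since $\pt(K_4)=\frac{1}{2}\cdot 6+\frac{1}{4}\cdot 3=\frac{15}{4}$, so $\ms(G)\geq 5$ yields $\ex(G)\geq 5-\frac{15}{4}=\frac{5}{4}>\frac{1}{4}$. It would in fact suffice to find a single orientation-independent argument giving a $\Pi$-subgraph with $5$ of the $6$ edges, regardless of how the edges of $G$ are oriented.

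First I would fix an arbitrary $G\in\mathcal{G}$ with $U(G)=K_4$ and $V(G)=\{v_1,v_2,v_3,v_4\}$. The key structural fact I want to exploit is that $\tuo\in\Pi$ means every \emph{transitively oriented} triangle (all three edges agreeing in a linear order) is in $\Pi$; moreover by Lemma~\ref{lem:torthenall}'s style of argument, or directly, I should first record which oriented triangles lie in $\Pi$. Since $\Pi$ is hereditary, I can also use that any induced sub-triangle of a graph in $\Pi$ is itself in $\Pi$. The natural approach is to partition $V(G)$ into two parts, say $U=\{v_1,v_2,v_3\}$ and $W=\{v_4\}$, choose the three vertices of $U$ so that $G[U]$ is an oriented triangle isomorphic to $\tuo$ (this is where I must check that among the four triangles of any orientation of $K_4$ at least one is a $\tuo$, i.e.\ transitive), apply the strong $\lambda$-subgraph extension property to get a $\Pi$-subgraph retaining all three edges of $G[U]$ and at least $\lceil\lambda\cdot 3\rceil=2$ of the three edges from $v_4$, giving $3+2=5$ edges total.

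The main obstacle is the combinatorial claim that \emph{every} orientation of $K_4$ contains at least one transitive (i.e.\ $\tuo$-type) triangle as an induced sub-triangle. This is essentially the statement that a tournament on four vertices cannot have all four of its triangles be cyclic ($\tor$-type); equivalently, a $4$-tournament always has a transitive triple. I would prove this by a short counting or case argument: in any tournament some vertex has out-degree at least $2$ (by the pigeonhole bound $\sum d^+=6$), say $v$ dominates $a$ and $b$; then whichever way the edge between $a$ and $b$ points, the triple $\{v,a,b\}$ is transitive, yielding a $\tuo$. Once a transitive triple $G[U]$ is located, the extension argument above completes the bound, and $\ex(K_4)\geq\frac{5}{4}>\frac{1}{4}$ shows $\Pi$ diverges on cliques.

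Finally, I would note that the strong $\lambda$-subgraph extension step requires $G[U]\in\Pi$ and $G[W]=G[\{v_4\}]=K_1\in\Pi$ (inclusiveness), and $U(G)=K_4$ is connected, so the hypotheses of \textsc{Strong $\lambda$-subgraph extension} are met with unit weights; it delivers $F\subseteq E(U,W)$ with $|F|\geq\lambda|E(U,W)|=\frac{3}{2}$, hence $|F|\geq 2$ since $|F|$ is an integer, and $G-(E(U,W)\setminus F)\in\Pi$ has $3+|F|\geq 5$ edges, as required.
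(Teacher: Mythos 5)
Your proposal is correct and follows essentially the same route as the paper's proof: locate a transitive triangle (a copy of $\tuo$) inside every orientation of $K_4$, then apply the strong $\lambda$-subgraph extension property to the partition into that triangle and the remaining vertex to obtain a $\Pi$-subgraph with at least $3+2=5$ edges, whence $\ex(K_4)\geq 5-\frac{15}{4}=\frac{5}{4}>\frac{1}{4}$. The only cosmetic difference is how the transitive triangle is found: you use the out-degree pigeonhole argument for tournaments, while the paper uses a short case analysis showing that two adjacent cyclic triangles force a transitive one.
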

\begin{proof}
 Let $H\in\mathcal{G}$ be such that $U(H)=K_4$ and let $V(H)=\{w_1,w_2,w_3,w_4\}$. If $H[w_1,w_2,w_3]=\tor$ and $H[w_2,w_3,w_4]=\tor$, then $H[w_1,w_2,w_4]=\tuo$. Hence, for any orientation on the edges of $H$, the graph contains $\tuo$ as a vertex-induced subgraph. Now, since $\tuo\in\Pi$, by the strong $\lambda$-subgraph extension property there exists a $\Pi$-subgraph of $H$ which contains at least $5$ edges, which means that $\beta(H)\geq 5$. This ensures that $\ex(K_4)\geq 5-(3+\frac{3}{4})=\frac{5}{4}$, which concludes the proof. 
\end{proof}

\begin{lemma}\label{lem:tuothenorientedtight}
 Let $\Pi$ be a strongly $\lambda$-extendible property on oriented graphs, with $\lambda=\frac{1}{2}$, and suppose $\Pi$ is hereditary, $\tor\notin\Pi$ and $\tuo\in\Pi$. Then $\ex(K_j)>0$ for every $j\neq 3$.
\end{lemma}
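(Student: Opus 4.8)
The plan is to show that $\ex(K_j) > 0$ for every $j \neq 3$ by handling the small cases by hand and then invoking the divergence machinery already established. Since we are told $\tuo \in \Pi$ and $\Pi$ is hereditary with $\lambda = \frac{1}{2}$, Lemma~\ref{lem:tuothendiverges} immediately gives that $\ex(K_4) > \frac{1}{4} = \lp$. By Lemma~\ref{lem:nottoomuchless} (applied with $j=4$ and $a = \ex(K_4) \ge \lp$), this propagates upward: for every almost-clique, and in particular every clique, with at least $5$ vertices we get $\ex(K_j) \ge \ex(K_4) - \lp > 0$. So the entire tail $j \ge 4$ is settled with essentially no extra work.

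It then remains only to verify the two base cases $j = 1$ and $j = 2$. For $j=1$, $K_1 \in \Pi$ by {\sc Inclusiveness}, so $\ms(K_1) = 0$ and $\pt(K_1) = 0$, giving $\ex(K_1) = 0$; but wait---the statement claims \emph{strictly} positive excess, so I must check whether $j=1$ should really be included or whether the intended range excludes it. Reading the claim as $\ex(K_j) > 0$ for $j \neq 3$, the $j=1$ case needs care: $\ex(K_1) = \ms(K_1) - \pt(K_1) = 0 - 0 = 0$, which is \emph{not} strictly positive. I would therefore interpret the statement as implicitly ranging over $j \ge 2$ (the cases relevant to blocks with edges), consistent with how $\ex(K_i) > 0$ is used in Theorem~\ref{theo:ksquaredkernel}, and treat $j=1$ as the trivial degenerate block. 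For $j = 2$, the computation is identical to the one in Lemma~\ref{lem:positivecliques}: every orientation of $K_2$ is in $\Pi$ by {\sc Inclusiveness}, so $\ms(K_2) = 1$ while $\pt(K_2) = \lambda \cdot 1 + \lp \cdot 1 = \frac{1}{2} + \frac{1}{4} = \frac{3}{4}$, whence $\ex(K_2) = \frac{1}{4} > 0$.

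Putting these together completes the argument: $\ex(K_2) > 0$ by direct computation, $\ex(K_4) > \lp > 0$ by Lemma~\ref{lem:tuothendiverges}, and $\ex(K_j) > 0$ for all $j \ge 5$ by Lemma~\ref{lem:nottoomuchless}. The only value omitted is $j = 3$, which is genuinely excluded because the hypothesis $\tor \notin \Pi$ forces the excess of the oriented triangle down---indeed, since $\ms(\tor) \le 2$ when $\tor \notin \Pi$, one computes $\ex(K_3)$ could be exactly $\frac{1}{2} = \lp$ or could fail to be strictly positive, which is precisely why $K_3$ is the exceptional case the theorem carves out.

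The step I expect to be the main (though modest) obstacle is the correct handling of the upward propagation: Lemma~\ref{lem:nottoomuchless} requires the hypothesis $\ex(K_j) = a \ge \lp$ and then yields $\ex(\widetilde{K}) \ge a - \lp$ for almost-cliques on at least $j+1$ vertices. I must verify that $\ex(K_4) \ge \lp$ (which follows from $\ex(K_4) > \frac{1}{4} = \lp$) so that the lemma's precondition is met, and then confirm that the conclusion $\ex(K_j) \ge \ex(K_4) - \lp$ is \emph{strictly} positive rather than merely nonnegative---this needs the strict inequality $\ex(K_4) > \lp$, not just $\ge$, which Lemma~\ref{lem:tuothendiverges} does supply. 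Everything else is routine bookkeeping with the definitions of $\ms$, $\pt$, and $\ex$.
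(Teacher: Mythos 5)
Your proof is correct and follows essentially the same route as the paper's: invoke Lemma~\ref{lem:tuothendiverges} to get $\ex(K_4)>\frac{1}{4}=\lp$, propagate to all $j\ge 5$ via Lemma~\ref{lem:nottoomuchless}, and compute $\ex(K_2)=\frac{1}{4}$ directly. Your observation that $\ex(K_1)=0$ and the claim must implicitly range over $j\ge 2$ is a fair reading that the paper's own (one-line) proof silently shares.
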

\begin{proof}
 Note that $\ex(K_4)>\frac{1}{4}$, then by Lemma \ref{lem:nottoomuchless} $\ex(K_j)>0$ for every $j\geq 4$. In addition, $\ex(K_2)=\frac{1}{4}$. 
\end{proof}

Let $\mathcal{B}_2^0$ be the subset of $\mathcal{B}_2$ which contains all the blocks with excess zero, and have no internal vertices in $N(S)$.
Let $Q_0$ denote the set of cut vertices of $G-S$ which only appear in blocks in $\mathcal{B}_2^0$.
Note that every vertex in $Q_0$ appears in exactly two blocks in $\mathcal{B}_2^0$.

\begin{lemma}\label{theo:neighborsbound}
 Let $\Pi$ be a strongly $\lambda$-extendible property on oriented graphs, with $\lambda=\frac{1}{2}$, and suppose $\Pi$ is hereditary, $\tor\notin\Pi$ and $\tuo\in\Pi$. Let $(G,k)$ be an instance of \APT{} reduced by Reduction Rule \ref{krule:zeroexcess}.
For any $s\in S$, if $|Q_0 \cap N(s)| \ge ((32+\frac{2}{\mak})k-2)48k-\frac{4k}{\mak} + 4k$, then the instance is a {\sc Yes}-instance.

%
%
\end{lemma}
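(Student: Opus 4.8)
My plan is to first pin down the precise shape of the blocks counted by $Q_0$, and then exploit that shape to build a subgraph around $s$ whose excess grows linearly in $|Q_0\cap N(s)|$.

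\emph{Structure of $\mathcal{B}_2^0$.} Since $G-S$ is a forest of cliques, every block of $\mathcal{B}_2$ is a clique, and such a block has exactly two cut vertices together with one or more interior vertices. I would first argue that every block $B\in\mathcal{B}_2^0$ is a copy of the oriented triangle $\tor$. Indeed, $\ex(K_2)=\lp=\tfrac14>0$, and by Lemma~\ref{lem:tuothendiverges} together with Lemma~\ref{lem:nottoomuchless} we have $\ex(K_j)>0$ for all $j\ge 4$; hence a zero-excess block must be a $3$-vertex clique. Of the two orientations of $K_3$, only $\tor$ has excess $0$: since $\tor\notin\Pi$ we get $\ex(\tor)=2-\pt(K_3)=0$, whereas $\tuo\in\Pi$ gives $\ex(\tuo)=1$. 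Thus each $B\in\mathcal{B}_2^0$ is a $\tor$ with two cut vertices and a single interior vertex $z_B$ which, by definition of $\mathcal{B}_2^0$, has no neighbour in $S$. In particular every $q\in Q_0$ is a cut vertex shared by exactly two such triangles.

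\emph{A high-excess spider.} Write $M=Q_0\cap N(s)$. For each $q\in M$ I would select a private interior vertex by fixing an incident $\tor$-triangle $B\ni q$ and taking $z_B$. Since each $q$ lies in exactly two triangles and each triangle holds at most two cut vertices of $M$, the ``shares-a-triangle'' relation on $M$ has maximum degree two, so these choices can be organised into edges $sq$ with pendant legs $qz_B$, together with gadgets $G[\{s\}\cup V(B)]$ for triangles $B$ both of whose cut vertices lie in $M$. Let $U$ be the union of $\{s\}$ with all chosen vertices; then $G[U]$ is connected through $s$, and $s$ is a cut vertex of $G[U]$, so Lemma~\ref{lem:cutvertex} decomposes $\ex(G[U])$ over the legs and gadgets. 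A leg $s\!-\!q\!-\!z_B$ is a path of two edges with excess $\tfrac12$, while a gadget is a $K_4$ minus an edge whose excess is $\tfrac34$ (provable by applying the strong $\lambda$-subgraph extension property to the partition $\{s,q\},\{q',z_B\}$, retaining at least two of the three crossing edges). The key gain is that each $q\in M$ buys \emph{two} tree edges rather than one, so summing yields $\ex(G[U])\ge\tfrac38|M|$ (and $\tfrac12|M|$ when no triangle has both cut vertices in $M$).

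\emph{Bounding components and combining.} Finally I would invoke Lemma~\ref{lem:half} with $V_1=U$ (connected, so $c_1=1$) and $V_2=V(G)\setminus U$, giving $\ex(G)\ge\ex(G[U])-\lp\,c_2$, where $c_2$ is the number of components of $G-U$. Each such component either meets $S\setminus\{s\}$ (at most $|S|-1<\tfrac{6k}{1-\lambda}-1$ of these), or is part of a dangling component (at most $\tfrac{k}{\mak}$, by Lemma~\ref{theo:danglingbound} since $G$ is reduced under Reduction Rule~\ref{krule:zeroexcess}), or is a ``middle'' piece of a $\tor$-block path cut off by deleting the cut vertices in $M$ together with the interiors $z_B$. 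The first two types are controlled exactly as in Corollary~\ref{cor:gminuscomponents}, which already bounds the components of $G-S$ by an $\cO(k^2)$ quantity; the third type is bounded by $|M|$, since each degree-two cut vertex of $M$ contributes at most one extra component and the $z_B$ are interior and split nothing. Plugging these in gives $\ex(G)\ge\tfrac14|M|-\cO(k^2)$, so once $|M|=|Q_0\cap N(s)|$ exceeds the stated threshold we obtain $\ex(G)\ge k$, i.e.\ a {\sc Yes}-instance.

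\emph{Main obstacle.} The delicate point is precisely the reason $Q_0$ must be handled separately from Theorem~\ref{theo:leafbound}: the vertices of $M$ are cut vertices of $G-S$, so deleting them fragments the $\tor$-block paths, and a naive star on $M$ would lose as many components as it gains edges. The construction survives only because including the triangle interiors $z_B$ doubles the excess to $\approx\tfrac12|M|$ while the new components are bounded by $|M|$ with coefficient one, not two; keeping this bookkeeping tight enough that the linear gain outlasts the $\lp\,c_2$ loss, and folding the baseline component count into the quadratic bound of Corollary~\ref{cor:gminuscomponents}, is where the real work lies and is what produces the quadratic-in-$k$ threshold.
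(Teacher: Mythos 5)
Your overall strategy---build a subgraph around $s$ using the cut vertices of $Q_0\cap N(s)$ together with interior vertices of their $\tor$-blocks, then apply Lemma~\ref{lem:half} and charge the new components against Corollary~\ref{cor:nonpathblocksbound}---is the right one, and your identification of the blocks of $\mathcal{B}_2^0$ as copies of $\tor$ matches the paper. But there is a genuine gap in the middle step. You claim that the choices ``can be organised into edges $sq$ with pendant legs $qz_B$, together with gadgets $G[\{s\}\cup V(B)]$ for triangles $B$ both of whose cut vertices lie in $M$,'' and that Lemma~\ref{lem:cutvertex} then decomposes $\ex(G[U])$ over these pieces. This fails as soon as three or more \emph{consecutive} cut vertices of a block path all lie in $M=Q_0\cap N(s)$: if $q_1,q_2,q_3$ are such that $q_1,q_2$ share a triangle $B_{12}$ and $q_2,q_3$ share $B_{23}$, then the two ``gadgets'' overlap in $q_2$, so they are not separate components of $G[U]-\{s\}$ and the cut-vertex decomposition at $s$ does not apply. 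What actually sits inside $G[U]$ is a fan $s,q_1,q_2,q_3,\dots$ containing several triangles $sq_iq_{i+1}$ of unknown orientation, and your per-vertex excess accounting ($\tfrac38|M|$) is not justified for it. Since $G[Q_0\cap N(s)]$ is a disjoint union of paths, arbitrarily long chains of this kind can occur, so this is not a corner case.

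The paper sidesteps exactly this problem by first passing to an independent set $Q_0'\subseteq Q_0\cap N(s)$ with $|Q_0'|\ge\tfrac12|Q_0\cap N(s)|$ (possible because $G[Q_0\cap N(s)]$ is a union of paths), and then taking, for each $v\in Q_0'$, \emph{both} interior vertices of the two triangles containing $v$. The resulting $G[U]$ is then an honest tree with $3|Q_0'|$ edges, so $\ex(G[U])=\tfrac34|Q_0'|$ with no case analysis, and the rest of your argument (component count of the three types, Lemma~\ref{lem:half}) goes through essentially as you describe. A secondary, smaller issue: even in your intended accounting, $\tfrac38|M|$ of excess against $\tfrac14|M|$ of component loss leaves $\tfrac18|M|$, not the $\tfrac14|M|$ you state, so your threshold constant would need adjusting; but this is cosmetic compared with the decomposition gap above.
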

\begin{proof}
 First, note that all the blocks in $\mathcal{B}_2^0$ are isomorphic to $\tor$ by Lemma \ref{lem:tuothenorientedtight}.
Observe that every vertex in $Q_0$ has at most two neighbours in $Q_0$. Since all vertices in $Q_0$ are cut vertices of $G-S$, it follows that $G[Q_0\cap N(s)]$ is a disjoint union of paths. It follows that we can find a set $Q_0' \subseteq Q_0\cap N(s)$ such that $|Q_0'|\ge \frac{|Q_0 \cap N(s)|}{2}$ and $Q_0'$ is an independent set.

For each $v \in Q_0'$, let $B_1, B_2$ be the two blocks in $\mathcal{B}_2^0$ that contain $v$, and let $v_i$ be the unique vertex in $\inn{(B_i)}$, for $i \in \{1,2\}$. 
Then let $U = \{s\} \cup Q_0' \cup \{v_i : v \in Q_0', i \in \{1,2\}\}$, and observe that $G[U]$ is a tree with $3|Q_0'|$ edges. It follows that $G[U] \in \Pi$ and $\ex(G[U]) = \frac{3|Q_0'|}{4}$. By Lemma \ref{lem:half}, $\ex(G)\geq\frac{3|Q_0'| -c}{4}$, where $c$ is the number of components of $G-U$.

Consider the components of $G-U$. Each component either contains a block in $\mathcal{B}_1\cup\mathcal{B}_{\geq 3}$ or it is part of a block path of $G-S$ containing two vertices from $Q_0'$: by Corollary \ref{cor:nonpathblocksbound} there are at most $4((\frac{16}{1-\lambda}+\frac{2}{\mak})k-2)\frac{6k}{1-\lambda}+\frac{k}{\mak} = ((32+\frac{2}{\mak})k-2)48k+\frac{4k}{\mak}$ components of the first kind, while there are at most $|Q_0'|$  of the second kind.

Thus, if $2|Q_0'| - ((32+\frac{2}{\mak})k-2)48k-\frac{4k}{\mak}\ge 4k$ then we have a {\sc Yes}-instance; otherwise $|Q_0\cap N(s)|\le 2|Q_0'|\le ((32+\frac{2}{\mak})k-2)48k-\frac{4k}{\mak} + 4k$. 
\end{proof}

\begin{krule}\label{krule:rulenotrianglepath}
 Let $B_1,B_2\in\mathcal{B}_2$ be such that $V(B_1)\cap V(B_2)=\{v\}$, $B_1=\tor=B_2$, $\{v\}\cap N(S)=\emptyset$ and $\inn{(B_i)}\cap N(S)=\emptyset$ for $i=1,2$. Let $\{w_i\}=\inn{(B_i)}$ and $\{u_i\}=V(B_i)\setminus\{v,w_i\}$ for $i=1,2$. If $G-\{v\}$ is disconnected, delete $v,w_1,w_2$, identify $u_1$ and $u_2$ and set $k'=k$. Otherwise, delete $v,w_1,w_2$ and set $k'=k-\frac{1}{4}$.
\end{krule}

\begin{lemma}\label{lem:rulenotrianglepathvalid}
 Let $\Pi$ be a strongly $\lambda$-extendible property on oriented graphs, with $\lambda=\frac{1}{2}$, and suppose $\Pi$ is hereditary. If $\tor\notin\Pi$, then Rule \ref{krule:rulenotrianglepath} is valid.
\end{lemma}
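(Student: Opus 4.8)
The plan is to prove the validity of Reduction Rule~\ref{krule:rulenotrianglepath} by showing that the excess is preserved (up to the bookkeeping in $k'$) in both of the two cases the rule distinguishes. Recall that validity means the instance $(G,k)$ is a \YES-instance if and only if the reduced instance $(G'',k')$ is. Since the rule operates on two consecutive blocks $B_1,B_2\in\mathcal{B}_2$ that are both copies of $\tor$, sharing only the cut vertex $v$, and since under the hypothesis $\tor\notin\Pi$ but $\tuo\in\Pi$ (so that Lemma~\ref{lem:tuothenorientedtight} applies), the key first step is to compute $\ex$ of the small configuration being excised and relate it to $\ex$ of the reduced graph. The quantity $\frac{1}{4}=\lp$ (with $\lambda=\frac12$) appearing in the second case is exactly the per-component cost in Lemma~\ref{lem:half}, which is what I expect to drive the calculation.

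First I would treat the disconnecting case, where $G-\{v\}$ is disconnected. Here $v$ is a cut vertex of $G$, so Lemma~\ref{lem:cutvertex} applies and lets me split $\ex(G)$ as a sum over the components of $G-\{v\}$. The two blocks $B_1,B_2$ lie in (at most two) of these components; the operation deletes $v,w_1,w_2$ and identifies $u_1$ with $u_2$, which glues the two former $\tor$-containing pieces at a single vertex. The goal is to verify that this local surgery changes neither $\ms$ nor $\pt$ in a way that affects the overall excess, so that $\ex(G)=\ex(G'')$ and keeping $k'=k$ is correct. Because $\tor\notin\Pi$, the best $\Pi$-subgraph of a single $\tor$ must drop at least one edge, and I expect the deletion/identification to be arranged so that the lost edges are exactly compensated by the change in the Poljak--Turz\'{\i}k bound.

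Next I would handle the non-disconnecting case, where $G-\{v\}$ stays connected. Here I cannot invoke Lemma~\ref{lem:cutvertex} at $v$ directly, since $v$ is not a cut vertex of $G$ (only of $G-S$). Instead I would argue that deleting $v,w_1,w_2$ removes a fixed amount of excess and a fixed number of edges/vertices, and that the accounting forces the parameter to drop by precisely $\frac14$; that is, $\ex(G)=\ex(G'')+\frac14$, so $(G,k)$ is a \YES-instance iff $(G'',k-\frac14)$ is. The delicate point is to confirm that no \emph{additional} excess can be extracted from the edges joining $\{v,w_1,w_2\}$ to the rest of $G$; here the hypotheses $\{v\}\cap N(S)=\emptyset$ and $\inn{(B_i)}\cap N(S)=\emptyset$ are crucial, since they guarantee that $w_1,w_2$ and $v$ have no neighbours outside their blocks (in $G-S$) and none in $S$, so the only edges incident to the deleted vertices are those internal to $B_1,B_2$.

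The main obstacle I anticipate is precisely this second case: establishing the exact constant $\frac14$ requires pinning down $\ms$ of the deleted gadget both on its own and in the context of the larger graph, and ruling out the possibility that a globally optimal $\Pi$-subgraph interacts with $B_1\cup B_2$ in some unexpected way. Because $\Pi$ is only strongly $\lambda$-extendible (an inequality, not an equality, in the extension property), I must use the hereditary assumption together with $\tor\notin\Pi$ to pin down $\beta$ on the relevant subgraphs from \emph{both} sides. I expect the cleanest route is to show that any optimal $\Pi$-subgraph of $G$ restricted to $V(B_1)\cup V(B_2)$ must omit exactly one edge from each triangle (since $\tor\notin\Pi$ forbids keeping a full $\tor$, while $\tuo\in\Pi$ and inclusiveness guarantee a path is always retainable), and then match this against the corresponding optimal subgraph of $G''$.
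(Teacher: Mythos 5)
Your plan matches the paper's proof in all essentials: the paper likewise shows that any $\Pi$-subgraph retains at most one of $\{vu_i,w_iu_i\}$ for each $i$ (heredity plus $\tor\notin\Pi$), while the strong $\lambda$-subgraph extension property applied to the partition with the induced path $G[\{v,w_1,w_2\}]\in\Pi$ as one side retains at least two of the four cross edges, so exactly $4=\pt(G[V(B_1)\cup V(B_2)])$ edges survive in the gadget, and the $\frac14=\lp$ correction in the connected case is precisely the component-count term in $\pt$. The only device you leave implicit --- taking the induced path on $\{v,w_1,w_2\}$ as the second part of the partition, so that the four cross edges split two-per-triangle and both inequalities on $\ms$ close simultaneously --- is exactly how the paper executes the step you flag as delicate.
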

\begin{proof}
 Let $G'$ be the graph which is obtained after an application of the rule. If $G-\{v\}$ is disconnected, let $G''$ be the graph obtained from $G$ deleting $v,w_1$ and $w_2$ and without identifying any vertices. Then, note that $G''$ has two connected components, one containing $u_1$ and the other containing $u_2$: hence, $G'$ is connected. Additionally, $\ex(G')=\ex(G'')$.

 Let $\widetilde{G}=G''$ if $G-\{v\}$ is disconnected and $\widetilde{G}=G'$ otherwise. We have to show that $\ex(\widetilde{G})=\ex(G)$ if $G-\{v\}$ is disconnected and $\ex(\widetilde{G})=\ex(G)-\frac{1}{4}$ otherwise. In order to do this, we will show that for any maximal $\Pi$-subgraph $\widetilde{H}$ of $\widetilde{G}$ there exists a $\Pi$-subgraph $H$ of $G$ such that $H[V(\widetilde{G})]=\widetilde{H}$ and $|E(H[V(B_1)\cup V(B_2)])|=\pt({G[V(B_1)\cup V(B_2)]})$. Then the result follows because if $G-\{v\}$ is disconnected, then $\pt(G)=\pt(G-\{v,w_1,w_2\})+\pt(G[V(B_1)\cup V(B_2)])$, and if $G-\{v\}$ is connected, then $\pt(G)=\pt(G-\{v,w_1,w_2\})+\pt(G[V(B_1)\cup V(B_2)])-\frac{1}{4}$.
 
 Let $\widetilde{H}$ be any maximal $\Pi$-subgraph of $\widetilde{G}$. Note that by block additivity and inclusiveness, $G[v,w_1,w_2]\in\Pi$. Then, by the strong $\lambda$-subgraph extension property there exists a $\Pi$-subgraph $H$ of $G$ which contains $\widetilde{H}$, $G[v,w_1,w_2]$ and at least half of the edges between them. Note that these edges are exactly four: $vu_1$, $w_1u_1$, $vu_2$ and $w_2u_2$. If $vu_1$ and $w_1u_1$ are in $E(H)$, then since $\Pi$ is hereditary it holds that $\tor\in\Pi$, which is a contradiction. Similarly if $vu_2$ and $w_2u_2$ are in $E(H)$. This means that exactly two edges among them are in $E(H)$, that is that $|E(H[V(B_1)\cup V(B_2)])|=4=\pt({G[V(B_1)\cup V(B_2)]})$. 
\end{proof}

\begin{theorem}\label{thm:oriented}
Let $\Pi$ be a strongly $\lambda$-extendible property on oriented graphs, with $\lambda=\frac{1}{2}$, and suppose $\Pi$ is hereditary. Then \APT{} has a kernel with \(\cO(k^{2})\) if $K_3\in\Pi$ and has a kernel with \(\cO(k^{3})\) vertices otherwise.
\end{theorem}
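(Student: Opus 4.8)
The plan is to branch on whether $K_3 \in \Pi$, i.e.\ whether \emph{every} oriented graph with underlying graph $K_3$ lies in $\Pi$. Since, up to isomorphism, $\tor$ and $\tuo$ are the only such graphs, the condition $K_3 \in \Pi$ is equivalent to $\{\tor, \tuo\} \subseteq \Pi$. If $K_3 \in \Pi$, then Theorem~\ref{thm:easycase} immediately yields an $\cO(k^2)$ kernel, and we are done. Otherwise $K_3 \notin \Pi$, and I would first observe that $\tor \in \Pi$ is impossible: by Lemma~\ref{lem:torthenall} it would force $\tuo \in \Pi$, hence $K_3 \in \Pi$. So $\tor \notin \Pi$, and the only remaining question is whether $\tuo \in \Pi$.

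If $\tuo \notin \Pi$ as well, then no oriented graph with underlying graph $K_3$ belongs to $\Pi$, so Theorem~\ref{theo:MaxCutEverywhere} identifies $\Pi$ with the bipartite property; that is, \APT{} is exactly \mcatlb{}, and Theorem~\ref{thm:maxCutKernel} provides the desired $\cO(k^3)$ kernel. This leaves the genuinely hard case $\tor \notin \Pi$, $\tuo \in \Pi$, where the real work lies. Here Lemma~\ref{lem:tuothendiverges} gives $\ex(K_4) > \frac14$, so $\Pi$ diverges on cliques, and Lemma~\ref{lem:tuothenorientedtight} tells us that $\ex(K_j) > 0$ for every $j \neq 3$. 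Since $\tor \notin \Pi$ forces $\beta(\tor) = 2$ and hence $\ex(\tor) = 0$, the \emph{only} blocks of $G - S$ with zero excess are the copies of $\tor$; every other block has strictly positive excess. I would invoke Proposition~\ref{thm:GminusS} to obtain $S$ with $|S| < \frac{6k}{1-\lambda}$ and $G - S$ a forest of cliques, then reduce exhaustively under Reduction Rule~\ref{krule:zeroexcess} and Reduction Rule~\ref{krule:rulenotrianglepath}.

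To bound $|V(G)|$ it suffices to bound the number of blocks and the interior size of each block. Interior sizes are controlled by Lemma~\ref{lem:int} (each is $\cO(k)$), and the positive-excess blocks are handled exactly as in Theorem~\ref{theo:ksquaredkernel}: their number is $\cO(k^2)$ by Lemma~\ref{lem:positiveexcessblock} and their total size is $\cO(k^2)$ by the excess-accumulation argument, while $|\mathcal{B}_0| + |\mathcal{B}_1| + |\mathcal{B}_{\ge 3}| = \cO(k^2)$ by Corollary~\ref{cor:nonpathblocksbound}. The zero-excess $\tor$-blocks in $\mathcal{B}_2$ that have an interior neighbour of $S$ are counted by Corollary~\ref{cor:sneighborbound}, again giving $\cO(k^2)$. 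Thus everything reduces to bounding the zero-excess $\tor$-blocks lying in block paths with no interior vertex in $N(S)$, namely the set $\mathcal{B}_2^0$.

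The crux, and the step I expect to be the main obstacle, is precisely bounding $|\mathcal{B}_2^0|$. The key observation is that after reduction under Reduction Rule~\ref{krule:rulenotrianglepath}, \emph{every} vertex of $Q_0$ must lie in $N(S)$: if some $v \in Q_0$ had $v \notin N(S)$, then the two blocks sharing $v$ would both be copies of $\tor$ with interiors disjoint from $N(S)$, so the rule would still apply, a contradiction. Hence $|Q_0| = |Q_0 \cap N(S)| \le \sum_{s \in S} |Q_0 \cap N(s)|$, and Lemma~\ref{theo:neighborsbound} bounds each term by $\cO(k^2)$; since $|S| = \cO(k)$ this gives $|Q_0| = \cO(k^3)$. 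Viewing $\mathcal{B}_2^0$ as a forest whose edges are the vertices of $Q_0$ (each such vertex lies in exactly two blocks of $\mathcal{B}_2^0$), the number of blocks equals $|Q_0|$ plus the number of maximal $\mathcal{B}_2^0$-chains; the latter is $\cO(k^2)$, since each chain is delimited by blocks already counted above. Therefore $|\mathcal{B}_2^0| = \cO(k^3)$, and as every such block has only three vertices, combining all the bounds with $|S| = \cO(k)$ yields $|V(G)| = \cO(k^3)$, completing the hard case and hence the theorem.
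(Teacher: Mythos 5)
Your proposal is correct and follows essentially the same route as the paper: the identical three-way case split (via Lemma~\ref{lem:torthenall}, Theorem~\ref{theo:MaxCutEverywhere} with Theorem~\ref{thm:maxCutKernel}, and the hard case $\tuo\in\Pi$, $\tor\notin\Pi$), the same reduction rules, and the same lemmas for bounding block counts and interiors. Your handling of the hard case is a slight reorganization of the paper's bookkeeping---you observe directly that exhaustive application of Reduction Rule~\ref{krule:rulenotrianglepath} forces $Q_0\subseteq N(S)$ and then count $\mathcal{B}_2^0$ via chains, where the paper splits the zero-excess blocks into $\mathcal{B}_2''$ and $\mathcal{B}_2'''$ and uses adjacency arguments---but the underlying argument and the resulting $\cO(k^{3})$ bound are the same.
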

\begin{proof}
 If $\tor\in\Pi$, by Lemma \ref{lem:torthenall} $\tuo\in\Pi$. This means that $K_3\in\Pi$ and, by Theorem \ref{thm:easycase}, \APT{} has a kernel with \(\cO(k^{2})\) vertices. On the other hand, if $\tor\notin\Pi$ and $\tuo\notin\Pi$, then by Theorem \ref{theo:MaxCutEverywhere} $\Pi$ is equal to Max Cut and by Theorem \ref{thm:maxCutKernel} it admits a kernel with \(\cO(k^{3})\) vertices.

 Lastly, suppose $\tuo\in\Pi$ and $\tor\notin\Pi$. By Lemma \ref{lem:tuothendiverges}, $\LE$ diverges on cliques. Let $(G,k)$ be an instance of \APT{} reduced by Reduction Rule \ref{krule:zeroexcess} and \ref{krule:rulenotrianglepath} (note that in this case Rule \ref{krule:rulenotrianglepath} is valid by Lemma \ref{lem:rulenotrianglepathvalid}).

By Corollary \ref{cor:nonpathblocksbound}, we may assume $|\mathcal{B}_0|+|\mathcal{B}_1|+|\mathcal{B}_{\geq 3}|\leq
4(((\frac{16}{1-\lambda}+\frac{2}{\mak})k-2)\frac{6k}{1-\lambda}+\frac{k}{\mak})$.

We now need to consider different types of blocks in $\mathcal{B}_2$ separately. Let $\mathcal{B}_2^+$ be the blocks in $\mathcal{B}_2$ with positive excess.
By Lemma \ref{lem:positiveexcessblock}, we may assume the number of such blocks is at most $(32+\frac{2}{\mak})k-1)\frac{12k}{\mak} + \frac{k}{(\mak)^2} + \frac{k-1}{\mak}$.

Let $\mathcal{B}_2'$ be the blocks in  $\mathcal{B}_2 \setminus \mathcal{B}_2^+$ which have an interior vertex in $N(S)$. By Corollary \ref{cor:sneighborbound}, we may assume the number of such blocks is at most $((32+\frac{2}{\mak})k-2)12k$.

Let $\mathcal{B}_2''$ be the blocks in $\mathcal{B}_2 \setminus (\mathcal{B}_2^+ \cup \mathcal{B}_2')$ which contain a vertex in $Q \cap N(S)$. Observe that these blocks must either contain a vertex of $Q_0\cap N(S)$ or be adjacent to a block in $\mathcal{B}_1,\mathcal{B}_{\geq 3}, \mathcal{B}_2^+$ or $\mathcal{B}_2'$. Furthermore they must be in block paths between such blocks, from which it follows that  $|\mathcal{B}_2''| \le 2(|\mathcal{B}_1|+|\mathcal{B}_{\geq 3}|+|\mathcal{B}_2^+|+|\mathcal{B}_2'| + |Q_0 \cap N(S)|)$.

%
Finally let $\mathcal{B}_2''' = \mathcal{B}_2 \setminus (\mathcal{B}_2^+ \cup \mathcal{B}_2' \cup \mathcal{B}_2'')$. These are just the blocks in $\mathcal{B}_2$ with excess $0$ which contain no neighbors of $S$.
By Reduction Rule \ref{krule:rulenotrianglepath}, no two such blocks can be adjacent. Therefore every block in $\mathcal{B}_2'''$ is adjacent to two blocks from $\mathcal{B}_1,\mathcal{B}_{\geq 3}, \mathcal{B}_2^+, \mathcal{B}_2'$ or $\mathcal{B}_2''$. It follows that $|\mathcal{B}_2'''| \le |\mathcal{B}_1|+|\mathcal{B}_{\geq 3}|+|\mathcal{B}_2^+|+|\mathcal{B}_2'| + |\mathcal{B}_2''|$.

Hence, we may conclude that $|\mathcal{B}_0|+|\mathcal{B}_1|+|\mathcal{B}_{\geq 3}|+|\mathcal{B}_2^+|\le c_1k^2$ for some constant $c_1$ depending only on $\Pi$. Furthermore, note that by Lemma \ref{theo:neighborsbound} and the fact that $|S| \le 12k$, we may assume that $|Q_0 \cap N(S)| \le (((32+\frac{2}{\mak})k-2)48k-\frac{4k}{\mak} + 4k)12k$. Then we may conclude from the above that $|\mathcal{B}_2'| + |\mathcal{B}_2''|+|\mathcal{B}_2'''| \le c_2k^3$ for some constant $c_2$ depending only on $\Pi$.

Therefore the total number of blocks in $G-S$ is at most $c_1k^2 + c_2k^3$. It follows that $|Q|$, the number of cut vertices of $G-S$, is at most $c_1k^2 + c_2k^3$.

By Lemma \ref{lem:int}, we may assume that the number of internal vertices for any block is at most $c_3k$, for some constant $c_3$ depending only on $\Pi$. It follows that the number of vertices in blocks from $\mathcal{B}_0,\mathcal{B}_1,\mathcal{B}_{\geq 3}$ or $\mathcal{B}_2^+$ is at most $c_1c_3k^3 + c_1k^2 + c_2k^3$.
To bound the number of vertices in blocks from $\mathcal{B}_2'\cup\mathcal{B}_2''\cup\mathcal{B}_2'''$, note that each of these blocks contains at most $3$ vertices, by Lemma \ref{lem:tuothenorientedtight} and the fact that these blocks have excess $0$ by definition.
Therefore the number of vertices in blocks from $\mathcal{B}_2'\cup\mathcal{B}_2''\cup\mathcal{B}_2'''$ is at most $3c_2k^3$.
Finally, recalling that $|S|\leq 12k$, we have that the number of vertices in $G$ is \(\cO(k^{3})\). 
\end{proof}

Putting together Theorem \ref{thm:easycase}, Theorem \ref{thm:simple}, and Theorem \ref{thm:oriented}, we get our main result, Theorem \ref{thm:main}.

%
%
%
%

\section{Conclusion}
We have succeeded in showing that \APT{} has a polynomial kernel for nearly all strongly $\lambda$-extendible $\Pi$. The only cases in which the polynomial kernel question remains open are those in which $\lambda = \frac{1}{2}$ and either $\Pi$ is not hereditary, or membership in $\Pi$ depends on the labellings on edges.
For the cases when $\lambda \neq \frac{1}{2}$ or $\Pi$ contains all triangles, we could show the existence of a kernel with \(\cO(k^{2})\) vertices. It would be desirable to show a \(\cO(k^{2})\) kernel in all cases.

The bound of \PandT{} extends to edge-weighted graphs - for any strongly $\lambda$-extendible property $\Pi$ and any connected graph $G$ with weight function $w: E(G) \rightarrow \mathbb{R}^+$, there exists a subgraph $H$ of $G$ such that $H \in \Pi$ and $w(H) \ge \lambda w(G) + \frac{(1-\lambda)w(T)}{2}$, where $T$ is a minimum weight spanning tree of $G$. 
The natural question following from our results is whether the weighted version of \APT{} affords a polynomial kernel.

\bibliographystyle{plain}
{\bibliography{references}}

%

\end{document}